\documentclass[11pt]{article}

\usepackage{amsfonts,amsmath,amssymb,amsthm} 
\usepackage[lmargin=3cm, rmargin=3cm,bottom=2.5cm,top=2.5cm]{geometry} 

\usepackage{color}
\usepackage{soul}
\usepackage{tikz}

\newtheorem{thrm}{Theorem}[section]
\newtheorem{prprt}[thrm]{Property}
\newtheorem{fct}[thrm]{Fact}
\newtheorem{cnjctr}[thrm]{Conjecture}
\newtheorem{lmm}[thrm]{Lemma}
\newtheorem{dfntn}[thrm]{Definition}
\newtheorem{prpstn}[thrm]{Proposition}
\newtheorem{qstn}[thrm]{Question}

\theoremstyle{definition}
\newtheorem{xmpl}[thrm]{Example}

\newtheorem{clm}[thrm]{Claim}

\newcommand{\xN}{\mathbb{N}}
\newcommand{\bw}{\mathbf{w}}

\newcommand{\BPLP}{\mbox{BPLP}}
\newcommand{\BPLF}{\mbox{BPLF}}
\newcommand{\nextt}{\mbox{next}}
\newcommand{\pref}{\text{Pref}}
\newcommand{\fact}{\text{Fact}}
\newcommand{\pal}[1]{|#1|_{pal}}
\newcommand{\lgpal}[1]{|#1|_{LGPal}}
\newcommand{\rgpal}[1]{|#1|_{RGPal}}
\newcommand{\SWHIMPP}{{\cal P}(A^\omega)}
\newcommand{\maxlgpalpref}[1]{{\rm MaxLGPalPref}(#1)} 
\newcommand{\maxrgpalpref}[1]{{\rm MaxRGPalPref}(#1)} 



\begin{document}
\title {Greedy palindromic lengths\thanks{Parts of this paper were presented at the conference Journ\'ees Montoises 2014 }
~\thanks{This work was partially done while the first author was visiting the LIRMM. 
Many thanks to ANR-09-BLAN-0164 EMC for funding this visit.}}

\author{Michelangelo Bucci\thanks{michelangelo.bucci@gmail.com}, 
Gwena\"el Richomme\thanks{gwenael.richomme@lirmm.fr, LIRMM (CNRS, Univ. Montpellier 2) - UMR 5506 - CC 477, 161 rue Ada, 34095, Montpellier Cedex 5 - France}~\thanks{Univ. Paul-Val\'ery Montpellier 3, Dpt MIAp, Route de Mende, 34199 Montpellier Cedex 5, France}}
\maketitle

\textbf{Keywords :} Infinite words, palindromes, periodicity, factorization, palindromic length.

\begin{abstract}
In [A. Frid, S. Puzynina, L.Q. Zamboni, 
\textit{On palindromic factorization of words}, 
Adv. in Appl. Math. 50 (2013), 737-748], 
it was conjectured that any infinite word whose palindromic lengths of factors are bounded is ultimately periodic. 
We introduce variants of this conjecture and 
prove this conjecture in particular cases. 
Especially we introduce left and right greedy palindromic lengths. 
These lengths are always greater than or equals to the initial palindromic length.
When the greedy left (or right) palindromic lengths of prefixes of a word 
are bounded then this word is ultimately periodic.
\end{abstract}


\maketitle

\section{Introduction}

A fundamental question in Combinatorics on Words is how words can be decomposed on smallest words.
For instance, readers can think to some topics presented in the first Lothaire's book \cite{Lothaire1983book} like Lyndon words, critical factorization theorem, equations on words, or to the theory of codes \cite{BerstelPerrin1985book,BerstelPerrinReutenauer2010book}, or to many related works published since these surveys. As another example, let us mention that 
in the area of Text Algorithms some factorizations like Crochemore or Lempel-Ziv factorizations play an important role \cite{Crochemore1983CRAS,LempelZiv1976IEEETIT}. These factorizations have been extended to infinite words and, for the Fibonacci word, some links have been discovered with the Wen and Wen's decomposition in singular words \cite{BerstelSavelli2006MFCS}  (see also  \cite{GhareghaniMohammad-NooriSharifani2011TCS} for a generalization to Sturmian words and see \cite{WenWen1994EJC,LeveSeebold2003BBMS} for more on singular words).

In their article~\cite{FridPuzyninaZamboni2013AAM}, 
A.E.~Frid, S.~Puzynina and L.Q.~Zamboni defined the \textit{palindromic length} of a finite word $w$ 
as the least number of (nonempty) palindromes needed to decompose $w$. 
More precisely the palindromic length of $w$ is the least number $k$ such that $w = \pi_1 \cdots \pi_k$ 
with $\pi_1$, \ldots, $\pi_k$ palindromes. 
As in the article \cite{FridPuzyninaZamboni2013AAM} 
we let $\pal{w}$ denote the palindromic length of $w$. For instance $\pal{abaab} = 2$.
The palindromic length of a finite word can also be defined inductively by $\pal{\varepsilon} = 0$ (where $\varepsilon$ denotes the empty word), and, for any nonempty word $u$, $\pal{u} = \min\{ \pal{p} + 1 \mid u = p \pi$ and $\pi$ is a palindromic nonempty suffix of $u\}$. This recursive definition is the starting point of G.~Fici, T.~Gagie, J.~K\"arkk\"ainen and D.~Kempa \cite{Fici_Gagie_Karkkainen_Kempa2014JDA} for providing an algorithm determining, in time $O(nlog(n))$ and space $O(n)$, the palindromic length of a word of length $n$. In the article~\cite{Shur_Rubinchik2015Iwoca}, A.~Shur and M.~Rubinchik provide another algorithm with the same complexities using a structure storing palindromes. 
They also conjecture that the palindromic length can be computed in time $O(n)$.

Let us recall that, for any nonempty word $v$, 
the infinite periodic word with \textit{period} $v$ ($v \neq \varepsilon$) is denoted by $v^\omega$. Hence any ultimately periodic word is 
in the form $uv^\omega$ for some words $u$ and $v$ ($v \neq \varepsilon)$.
A.E.~Frid, S.~Puzynina and L.Q.~Zamboni conjectured:

\begin{cnjctr}{\rm \cite{FridPuzyninaZamboni2013AAM}}
\label{conj1}
If the palindromic lengths of factors of an infinite word $\bw$ are bounded 
(we will say that $\bw$ has bounded palindromic lengths of factors) then $\bw$ is ultimately periodic.
\end{cnjctr}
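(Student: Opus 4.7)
The plan is to combine a compactness argument over palindromic factorizations of prefixes with classical periodicity results for overlapping palindromes.

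First, I fix $K$ with $\pal{u} \le K$ for every factor $u$ of $\bw$, and for each $n$ pick a palindromic factorization $w_n = \pi_1^{(n)} \pi_2^{(n)} \cdots \pi_{k_n}^{(n)}$ of the prefix $w_n$ of length $n$, with $k_n \le K$. Passing to an infinite subsequence of indices $n$, I stabilize $k_n$ to some value $k$, and then, moving left to right, I stabilize each prefix length $|\pi_1^{(n)}| + \cdots + |\pi_j^{(n)}|$ either to a finite limit or to $+\infty$. Since the lengths sum to $n \to \infty$ and $k \le K$, at least one index $j^\star$ has $|\pi_{j^\star}^{(n)}| \to \infty$ along the subsequence. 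Let $L$ denote the (bounded) total contribution of the palindromes to the left of the first such $j^\star$.

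Second, I exploit the unbounded growth of $\pi_{j^\star}^{(n)}$. Each such palindrome sits inside $\bw$ starting at a position that differs from $L$ by the sum of lengths of the preceding palindromes that have already stabilized. Along the subsequence I therefore obtain arbitrarily long palindromic factors of $\bw$ occurring at positions that are controlled modulo small shifts. The idea is to pick two such palindromes $p$ and $p'$ of comparable length whose occurrences in $\bw$ overlap by more than $(|p|+|p'|)/2$; the classical fact that two palindromes with such an overlap force their common region to be periodic with period equal to the shift (a Fine and Wilf style argument specialised to palindromes) then exhibits a short period $q$ in a long factor of $\bw$.

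Third, I need to promote this local period $q$ to ultimate periodicity of $\bw$. Here I would iterate the construction: the hypothesis still applies to every long factor of $\bw$, so the same argument produces overlapping long palindromes arbitrarily far to the right, each of which inherits (via another Fine and Wilf step) the period $q$, and together they cover an infinite suffix of $\bw$. Combining the overlaps yields a single period on a suffix, which is ultimate periodicity.

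The main obstacle is clearly the second step: controlling the positions of the unboundedly long palindromes $\pi_{j^\star}^{(n)}$. Unlike in the greedy setting — where extending a prefix monotonically extends its factorization, so long palindromes reoccur by construction — a minimal palindromic factorization of $w_{n+1}$ can be radically different from that of $w_n$, with a long palindrome appearing in the middle that is invisible in the factorization of $w_n$. Turning an infinite family of long palindromes of $\bw$, whose start positions are only known up to additive error from the stabilized prefix lengths, into a genuine overlap suitable for a Fine and Wilf argument is the delicate combinatorial core, and is presumably why the authors ultimately establish only the greedy variant and particular cases rather than the full conjecture.
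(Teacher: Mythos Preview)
The statement you are attempting to prove is Conjecture~\ref{conj1}, which the paper explicitly leaves open; there is no ``paper's own proof'' to compare against. The paper proves only the greedy variants (Theorems~\ref{T:RGPal} and~\ref{T:LGPal}) and the case of bound~$2$ (Section~\ref{sec:around}). Your final paragraph in fact concedes this, so what you have written is a heuristic outline rather than a proof, and the gaps you flag are genuine.

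Your Step~1 is sound and, once the starting position of $\pi_{j^\star}^{(n)}$ is stabilized (not merely bounded), it recovers exactly the content of Lemma~\ref{L:ex2.1(2)}: a suffix of $\bw$ has infinitely many palindromic prefixes. The paper uses this to reduce Conjecture~\ref{conj1} to Conjecture~\ref{conj2}, but cannot go further.

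The real failure is in Step~3. Suppose Step~2 succeeds and gives you a period $q$ on a long factor near position $L$. To ``iterate arbitrarily far to the right'' you must rerun the entire compactness argument on a suffix of $\bw$; but that run produces its own stabilized prefix length $L'$, its own first unbounded index, and its own overlap period $q'$. Nothing in your argument forces $q' = q$, nor forces the two periodic zones to overlap by at least $q + q'$ so that Fine and Wilf could merge them. Without such a mechanism you obtain only an infinite family of long factors, each periodic with its own period, scattered through $\bw$; this is far from ultimate periodicity, and indeed the paper recalls (via \cite{FridPuzyninaZamboni2013AAM}) that words in $\BPLP$ already contain $k$-powers for all $k$ without this implying the conjecture.

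Step~2 is also not as clean as you suggest. Two palindromic factors of $\bw$ starting at nearby positions $a < b$ and having large lengths do overlap, but the ``classical fact'' you invoke is not standard in that form: the usual statement (used in the paper as \cite[Lem.~2]{FridPuzyninaZamboni2013AAM}) concerns a palindrome that is a \emph{prefix} of another palindrome, yielding period $|v|-|u|$. For two palindromes at shifted positions one gets that the overlap is an antiperiodic word with antiperiod $b-a$, hence periodic with period $2(b-a)$, but turning this into a single global period still requires the missing glue of Step~3.
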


In Section~\ref{sec:variants}, we consider several variants of this conjecture and show the equivalence between some of them. In particular, we show that the previous conjecture can be restricted to words having infinitely many palindromic prefixes and that, with this restriction, one should expect words to be periodic instead of being ultimately periodic. 
This is a consequence of an intermediate result (Lemma~\ref{L:ex2.1(2)})
that states that, if a word has bounded palindromic lengths of prefixes,
then it has a suffix that has infinitely many palindromic prefixes.
Section~\ref{sec:about_pal_A_omega} shows that this is not just a consequence
of the fact that words considered in this result have infinitely many palindromic factors. Indeed we provide an example showing that a word containing infinitely many palindromes may not have a suffix that begins with infinitely many palindromes, even if the considered word is uniformly recurrent.
In Section~\ref{sec:around}, we give a characterization of words having infinitely many palindromic prefixes and whose palindromic lengths of prefixes or factors are bounded by 2. 
As they are all periodic, this proves the conjectures in a special case.

Let us mention two reasons for the difficulty to prove Conjecture~\ref{conj1} in the general case. First when the palindromic lengths of prefixes of an infinite word are not bounded, the function that associates with each integer $k$ the length of the smallest prefix having palindromic length $k$ can grow very slowly.
For instance, let us consider the Fibonacci infinite word. As it is the fixed point of the morphism $\varphi$ defined by $\varphi(a) = ab$ and $\varphi(b) = a$, by \cite{FridPuzyninaZamboni2013AAM}, the palindromic lengths of prefixes are not bounded. Actually, if $m(k)$ denotes the length of the least nonempty prefix of the Fibonacci word with palindromic length $k$, one can verify that $m(1) = 1$, $m(2) = 2$, $m(3) = 9$, $m(4) = 62$, $m(5) = 297$, $m(6) = 1154$, $m(7) = 5473$.
The second source of difficulties lies in the facts that a word may have several minimal palindromic factorizations and that the minimal palindromic factorizations of a word and of its longest proper prefixes are not related. For instance both words $aabaab$ and $aabaaba$ have palindromic length 2. The first word has two corresponding minimal palindromic factorizations: $aabaa.b$ and $aa.baab$. It is the longest proper prefix of $aabaaba$ who admits only one minimal palindromic decomposition: $a.abaaba$. To cope with the previous difficulty, in Section~\ref{sec:greedypal}, we introduce greedy palindromic lengths. For instance, the left greedy palindromic length of a word is the number of palindromes in the palindromic decomposition obtained considering iteratively the longest palindromic prefix as first element. We show that if the left (or the right) greedy palindromic lengths of prefixes of an infinite word $\bw$ having infinitely many palindromic prefixes are bounded then $\bw$ is periodic. As it also implies that palindromic lengths of factors of $\bw$ are bounded, this proves the Frid, Puzynina and Zamboni's conjecture in a special case.

We assume readers are familiar with combinatorics on words. We report them to classical surveys, as  \cite{Lothaire1983book,Lothaire2002book} for instance, for basic definitions as those already used (word, length, palindrome, prefix, suffix, \ldots). From now on, $A$ denotes an alphabet. If $u = u_1 \cdots u_k$ is a word (with each $u_i$ a letter), we let $\tilde{u}$ denote the mirror image of $u$ that is the word $u_k \cdots u_1$. We also recall that a finite word $w$ is \textit{primitive} if $w$ is not a power of a smaller word. It is well-known that such a primitive word $w$ is not an internal factor of $w$, that is $ww = xwy$ implies $x$ or $y$ is the empty word.

\section{\label{sec:variants}Variants}

In this section, we introduce several conjectures related to Conjecture~\ref{conj1} and discuss links between them.

Let us first recall some results on palindromic lengths
obtained by A.E.~Frid, S.~Puzynina and L.Q.~Zamboni \cite{FridPuzyninaZamboni2013AAM}.
By a counting argument, they proved that any infinite word with bounded palindromic lengths of prefixes (and not only factors) contains $k$-powers for arbitrary integers $k$. Moreover, if the word is not ultimately periodic, each position of the word must be covered by infinitely many runs. Thus the next conjecture could have been formulated in the article~\cite{FridPuzyninaZamboni2013AAM}.

\begin{cnjctr}
\label{conj:prefixes}
If the palindromic lengths of prefixes of an infinite word $\bw$ are bounded then $\bw$ is ultimately periodic.
\end{cnjctr}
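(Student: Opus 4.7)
The plan is to combine the structural results of Frid, Puzynina and Zamboni with the forthcoming Lemma~\ref{L:ex2.1(2)} to reduce to a word beginning with infinitely many palindromic prefixes, and then to attempt a Fine--Wilf style propagation of a period along that chain.

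First I would invoke Lemma~\ref{L:ex2.1(2)}: since the palindromic lengths of prefixes of $\bw$ are bounded, some suffix $\bw'$ of $\bw$ has infinitely many palindromic prefixes. Since $\bw$ is ultimately periodic if and only if $\bw'$ is, I may replace $\bw$ by $\bw'$ and assume from the outset that $\bw$ itself begins with an infinite strictly increasing sequence of palindromic prefixes $P_1, P_2, P_3, \ldots$, each $P_i$ being a proper prefix of $P_{i+1}$.

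Next I would use the uniform bound $\pal{v}\le N$ valid on every prefix $v$ of $\bw$, together with the counting argument of Frid, Puzynina and Zamboni recalled above (arbitrarily large powers must occur and, in the non-ultimately-periodic case, every position of $\bw$ is covered by infinitely many runs), to try to control the gap $|P_{i+1}|-|P_i|$. The working hypothesis is that a very large gap between two consecutive palindromic prefixes forces the intermediate prefixes to admit palindromic factorizations of unbounded length, contradicting the bound $N$. Once the gaps are controlled, each pair $(P_i,P_{i+1})$ yields a period $p_i = |P_{i+1}|-|P_i|$ of $P_{i+1}$, because $P_i$ is both a prefix of $P_{i+1}$ and, by palindromicity of $P_{i+1}$, also a suffix of $P_{i+1}$. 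Applying Fine--Wilf to pairs $(P_i,P_j)$ with sufficient overlap should collapse the various $p_i$ to a single common period, and propagating this period along the whole chain $(P_i)_{i\ge 1}$, which exhausts $\bw$, would give the sought periodicity.

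The main obstacle, and arguably the reason the statement is phrased as a conjecture, is the gap-control step. Turning a global bound on palindromic lengths into a local bound on $|P_{i+1}|-|P_i|$ is genuinely delicate: the explicit data $m(1)=1$, $m(2)=2$, $m(3)=9$, $m(4)=62$, $m(5)=297$, $m(6)=1154$, $m(7)=5473$ displayed for the Fibonacci word in the introduction shows that palindromic length may grow to infinity arbitrarily slowly, so the bounded regime is not cleanly separated from the unbounded one by any obvious quantitative barrier. A genuinely new combinatorial idea — perhaps replacing the minimum in the definition of $\pal{\cdot}$ by a canonical (for instance greedy) choice, as the next sections of the paper suggest — seems necessary to rule out pathological gap growth within the bounded case.
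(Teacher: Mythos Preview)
The statement you are attempting to prove is Conjecture~\ref{conj:prefixes}, which the paper explicitly leaves \emph{open}: there is no proof of it in the paper to compare against. Your proposal is therefore not a proof but a strategy sketch, and you yourself correctly identify where it breaks down.

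Your reduction step is sound in spirit but needs one more ingredient: after passing from $\bw$ to the suffix $\bw'\in\SWHIMPP$ furnished by Lemma~\ref{L:ex2.1(2)}, you must also check that $\bw'$ still has bounded palindromic lengths of prefixes. This is the content of Lemma~\ref{L:ex2.1(1)} and is not automatic; it is precisely why the paper treats Conjectures~\ref{conj:prefixes} and~\ref{conj:prefixes_infini_prefixes_palindromes} separately before proving their equivalence in Proposition~\ref{P:equivalence2}.

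The gap-control idea is the right target, but as stated it is circular. Lemma~\ref{L:tool2} shows that for $\bw\in\SWHIMPP$ with length-increasing palindromic prefixes $(\pi_i)_{i\ge 0}$, the gap sequence $(|\pi_{i+1}|-|\pi_i|)$ is non-decreasing, and is bounded \emph{if and only if} $\bw$ is periodic. So ``control the gaps, then apply Fine--Wilf'' is not a two-step program: bounding the gaps \emph{is} proving periodicity, and no separate Fine--Wilf argument is needed (or available). The entire content of the conjecture therefore lies in the implication ``bounded $\pal{\cdot}$ on prefixes $\Rightarrow$ bounded gaps'', and your proposal offers no mechanism for this beyond invoking the Frid--Puzynina--Zamboni run-covering result, which is known not to suffice. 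Your closing paragraph is the honest assessment: this step is genuinely open, and the paper's contribution is precisely to bypass it by replacing $\pal{\cdot}$ with the greedy variants (Theorems~\ref{T:RGPal} and~\ref{T:LGPal}), for which the analogous statement can be proved.
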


Obviously if Conjecture~\ref{conj:prefixes} was verified then Conjecture~\ref{conj1} would also be verified. Nevertheless we do not know how to prove the converse (except by proving the conjectures). Hence Conjecture~\ref{conj:prefixes} seems more difficult than Conjecture~\ref{conj1}.

We let $\BPLF$ (resp. $\BPLP$) denote the set of all infinite words whose palindromic lengths of factors (resp. prefixes) are bounded. Thus Conjecture~\ref{conj1} (resp. \ref{conj:prefixes}) could be rewritten: Any word in $\BPLF$ (resp. $\BPLP$) is ultimately periodic.

Observe that any element of the sets $\BPLF$  and $\BPLP$ contains infinitely many palindromes.
We let $\SWHIMPP$ denote the set of all infinite words over the alphabet $A$ having infinitely many palindromic prefixes. 

Next result justify our two new variants of Conjecture~\ref{conj1}.
\begin{lmm}[\mbox{\cite[Lem. 5.6]{Fischler2006JCT}}]
\label{Fischler}
Any ultimately periodic word belonging to $\SWHIMPP$ is periodic.
\end{lmm}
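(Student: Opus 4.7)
The plan is to write $\bw = uv^\omega$ for some (possibly empty) finite word $u$ and some nonempty word $v$, and then use the existence of arbitrarily long palindromic prefixes to neutralize the transient part $u$, showing that $\bw$ has numerical period $|v|$ starting from position $0$. Since the decomposition $\bw = uv^\omega$ gives $\bw[i] = \bw[i+|v|]$ automatically for every $i \geq |u|$, it suffices to extend this identity to $0 \leq i < |u|$; once this is done, $\bw$ coincides with $(\bw[0..|v|-1])^\omega$ and is purely periodic.

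To do so, I would fix a palindromic prefix $p$ of $\bw$ of length $N \geq 2|u|+|v|$, which exists because $\bw \in \SWHIMPP$. For $|u| \leq i \leq N-|v|-1$ the ultimate periodicity of $\bw$ gives $p[i] = p[i+|v|]$. Applying the palindromic identity $p[k] = p[N-1-k]$ on both sides and setting $j = N-1-i-|v|$ turns this into $p[j] = p[j+|v|]$ on the mirrored range $0 \leq j \leq N-|u|-|v|-1$. The choice of $N$ guarantees that the ``original'' range $\{|u|,\ldots,N-|v|-1\}$ and the ``mirrored'' range $\{0,\ldots,N-|u|-|v|-1\}$ meet, so together they cover the full range $\{0,\ldots,N-|v|-1\}$. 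Hence $p$, and therefore the prefix $\bw[0..N-1]$, has period $|v|$ throughout.

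Combining this with the identity $\bw[i]=\bw[i+|v|]$ already known for $i \geq |u|$ yields $\bw[i]=\bw[i+|v|]$ for every $i \geq 0$, so $\bw$ is purely periodic. I do not expect any serious obstacle: the argument is essentially the folklore fact that a palindrome inherits a one-sided period on both ends, together with careful bookkeeping of index ranges to ensure that the two induced period ranges actually overlap. No appeal to Fine--Wilf is needed because the two partial periods have the same length $|v|$.
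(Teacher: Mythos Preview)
Your argument is correct. The key observation---that a palindrome with a one-sided period of length $|v|$ reaching past its centre must in fact have $|v|$ as a full period---is exactly what is needed, and your index bookkeeping is accurate: with $N \geq 2|u|+|v|$ the mirrored range $\{0,\ldots,N-|u|-|v|-1\}$ reaches at least up to $|u|-1$, so together with the original range $\{|u|,\ldots,N-|v|-1\}$ it covers all of $\{0,\ldots,N-|v|-1\}$.

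As for comparison with the paper: there is nothing to compare. The paper does not prove this lemma at all; it simply quotes it from Fischler's article \cite[Lem.~5.6]{Fischler2006JCT} and uses it as a black box. Your self-contained argument is therefore a genuine addition rather than a paraphrase, and it is the standard elementary proof one would expect for this folklore fact.
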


\begin{cnjctr}
\label{conj2}
Any infinite word in $\BPLF \cap \SWHIMPP$ is periodic.
\end{cnjctr}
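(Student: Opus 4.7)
The strategy is to reduce Conjecture~\ref{conj2} to the problem of showing ultimate periodicity, because Lemma~\ref{Fischler} (Fischler) then immediately upgrades ultimate periodicity to periodicity on the class $\SWHIMPP$. So the real content is: any $\bw \in \BPLF \cap \SWHIMPP$ is ultimately periodic. This is Conjecture~\ref{conj1} restricted to words having infinitely many palindromic prefixes, so a priori one does not expect an easy reduction, but the extra hypothesis $\bw \in \SWHIMPP$ provides significant structure to exploit.

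The main tool would be the following classical observation applied to palindromic prefixes. Enumerate the palindromic prefixes of $\bw$ as $p_1 \prec p_2 \prec \cdots$ by length. Whenever $p_m$ is a prefix of $p_n$, the equality $p_n = \tilde{p_n}$ forces $p_m$ to be also a suffix of $p_n$, so $p_n$ has period $|p_n| - |p_m|$. This gives an infinite list of periods of longer and longer prefixes of $\bw$. Combining with $\bw \in \BPLF$ (say with bound $K$), each $p_n$ decomposes into at most $K$ palindromes, so by pigeonhole $p_n$ contains a palindromic factor of length at least $|p_n|/K$. The plan is to exploit the interaction between these long internal palindromes and the external periods $|p_n| - |p_m|$ to force a single globally bounded period via Fine and Wilf's theorem.

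I would split into two regimes. If the ratio $|p_{n+1}|/|p_n|$ remains bounded, then the periods $|p_{n+1}| - |p_n|$ are comparable to the lengths $|p_n|$, and iteratively applying Fine and Wilf to pairs of palindromic prefixes should collapse the sequence of periods to a constant ultimate period. If instead this ratio is unbounded, one tries to use the pigeonhole palindrome of length $\geq |p_{n+1}|/K$ inside $p_{n+1}$ to locate an occurrence of a long palindrome that, together with the palindromic symmetry of $p_{n+1}$, forces a short period of a large prefix.

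The main obstacle is precisely that Conjecture~\ref{conj2} subsumes the still open Conjecture~\ref{conj1} on a rich class: $\SWHIMPP$ contains episturmian and Sturmian words, and aperiodic examples are plentiful. The subtle difficulty flagged by the authors in the introduction, that minimal palindromic factorizations are highly non-unique and do not propagate from a word to its prefixes, undermines the pigeonhole step above, since the long palindromic factor one extracts from $p_n$ need not bear any stable relationship with the one extracted from $p_{n+1}$. For this reason I would probably first attempt the corresponding statement for the greedy palindromic length (where the decomposition is canonical and therefore stable along prefixes, as is done in Section~\ref{sec:greedypal}), and only afterwards try to transfer the argument to the minimal-length decomposition by showing that the greedy length is comparable to the minimal one under the $\SWHIMPP$ hypothesis.
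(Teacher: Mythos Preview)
This statement is presented in the paper as an open \emph{conjecture}, not as a theorem; there is no proof of it in the paper to compare against. What the paper does prove is Proposition~\ref{P:equivalence1}, namely that Conjecture~\ref{conj2} is equivalent to Conjecture~\ref{conj1}. Your first paragraph correctly identifies the Fischler reduction (Lemma~\ref{Fischler}) that gives one half of that equivalence, and you are right that the remaining content is exactly Conjecture~\ref{conj1} restricted to $\SWHIMPP$, which is still open.

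Your two-regime sketch has a concrete gap. In the ``bounded ratio'' regime you say the periods $|p_{n+1}|-|p_n|$ are comparable to $|p_n|$ and then invoke Fine and Wilf. But Fine and Wilf is useful when the \emph{sum} of two periods is at most the length of the word plus their gcd; periods that are merely comparable to the length give nothing. In fact the paper's Lemma~\ref{L:tool2} shows that for $\bw\in\SWHIMPP$ the sequence $(|\pi_{i+1}|-|\pi_i|)_{i\ge 0}$ is nondecreasing and is bounded \emph{if and only if} $\bw$ is periodic. So for any aperiodic $\bw\in\SWHIMPP$ these differences tend to infinity, and there is no ``easy'' bounded-difference regime to peel off; the bounded-ratio case is already the hard generic case. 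Your unbounded-ratio argument is only a pigeonhole heuristic and, as you yourself note, the instability of minimal palindromic factorizations under passing to prefixes blocks it.

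Your final paragraph, suggesting one should first handle the greedy palindromic length and then try to transfer, is precisely the program the paper carries out: Theorems~\ref{T:RGPal} and~\ref{T:LGPal} settle the greedy analogues, but the transfer step (bounding the minimal palindromic length by the greedy one, or otherwise) is left open. So your proposal does not go beyond what the paper already achieves, and does not constitute a proof of Conjecture~\ref{conj2}.
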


\begin{cnjctr}
\label{conj:prefixes_infini_prefixes_palindromes}
Any infinite word in $\BPLP \cap \SWHIMPP$ is periodic.
\end{cnjctr}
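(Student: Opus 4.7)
The plan is to derive periodicity by combining two ingredients: the strong structural constraints that palindromic prefixes place on the corresponding initial segments, and the uniform bound on the palindromic lengths of prefixes. Before anything else, I would note that the statement follows formally from Conjecture~\ref{conj:prefixes} together with Lemma~\ref{Fischler} (the former would give ultimate periodicity of $\bw$, and the latter would upgrade this to full periodicity since $\bw \in \SWHIMPP$), so my concrete goal is to prove Conjecture~\ref{conj:prefixes} under the extra hypothesis that $\bw$ has infinitely many palindromic prefixes.

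The first step is an elementary but crucial observation: if $p$ and $q$ are palindromic prefixes of $\bw$ with $|q|<|p|$, then $q=\tilde q$ is a suffix of $p=\tilde p$ (since $\tilde q$ is a suffix of $\tilde p$), so $q$ is a border of $p$ and $p$ admits period $|p|-|q|$. Listing the palindromic prefixes of $\bw$ as $p_1,p_2,\ldots$ in increasing length, each $p_{j+1}$ therefore has period $|p_{j+1}|-|p_i|$ for every $i\le j$. If one can show that for infinitely many $j$ these induced periods agree in the sense that the length-$(|p_{j+1}|-|p_j|)$ prefixes of $\bw$ all coincide with a single word $u$, it follows that $\bw$ starts with arbitrarily long prefixes of $u^\omega$, hence $\bw = u^\omega$.

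The second step is to leverage the bound $K$ on palindromic lengths of prefixes to control the sequence $|p_{j+1}|-|p_j|$ and the associated periods. For each $n$ with $|p_j|\le n<|p_{j+1}|$, the prefix $u_n$ has palindromic length at most $K$ but is not itself a palindrome; I would analyse how minimal palindromic decompositions $u_n = \pi_1\cdots\pi_{k(n)}$ evolve as $n$ sweeps from $|p_j|$ to $|p_{j+1}|-1$, aiming to show that the periodic structure carried by $p_j$ propagates through this interval so that $p_{j+1}$ remains compatible with the period of $p_j$. One plausible line is to track, for each $n$, the rightmost palindrome $\pi_{k(n)}$ in a minimal decomposition: since there are only finitely many palindromic factors of bounded \emph{position-independent} type that can appear as $\pi_{k(n)}$ while preserving palindromic length at most $K$, a pigeonhole argument should force repetition of a compatible decomposition pattern from which a common period can be extracted.

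The main obstacle will be precisely this control of the gaps. As already stressed in the introduction, the length of the smallest prefix of palindromic length $k$ can grow extremely slowly (Fibonacci-like), so the bound $K$ offers in itself no upper bound on $|p_{j+1}|/|p_j|$, and a straightforward Fine--Wilf synthesis of the periods inherited from the various $p_j$ breaks down. This is essentially the reason Conjecture~\ref{conj:prefixes} is still open; any successful proof of Conjecture~\ref{conj:prefixes_infini_prefixes_palindromes} will very likely require either a classification of how palindromic decompositions interact with borders arising from palindromic prefixes, or the introduction of an auxiliary invariant mediating between the palindromic length and the minimal period. The paper's results on the \emph{greedy} palindromic lengths (Section~\ref{sec:greedypal}) suggest that replacing minimal decompositions by canonical greedy ones is precisely such an invariant, and I would expect my proof to eventually reduce to, or be informed by, that greedy analysis.
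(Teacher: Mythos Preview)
The statement you are trying to prove is \emph{Conjecture}~\ref{conj:prefixes_infini_prefixes_palindromes}: the paper does not prove it. It is stated precisely as an open problem, shown to be equivalent to Conjecture~\ref{conj:prefixes} (Proposition~\ref{P:equivalence2}), and then confirmed only in restricted situations (the bound~$2$ case in Section~\ref{sec:around}, and the greedy variants in Sections~\ref{sec:greedypal}--\ref{sec:LGPAL}). There is therefore no ``paper's own proof'' to compare your proposal against.

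As for the proposal itself, it is not a proof but an outline that you yourself correctly diagnose as incomplete. Step~1 is fine and standard (and appears in the paper as the input to Lemma~\ref{L:tool2}). Step~2 is the entire difficulty, and the suggested pigeonhole on the ``type'' of the last palindrome $\pi_{k(n)}$ in a minimal decomposition does not go through: there is no a priori finiteness of such types, since the palindromes $\pi_{k(n)}$ can be arbitrarily long and their positions can shift arbitrarily, and nothing forces two prefixes sharing a last-palindrome pattern to induce compatible periods. You then explicitly note that the gaps $|p_{j+1}|-|p_j|$ are uncontrolled by the bound $K$, which is exactly the obstruction; the paper makes the same point via Lemma~\ref{L:tool2} (the gap sequence is nondecreasing, and bounded iff $\bw$ is periodic). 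Your closing remark that the greedy analysis is ``precisely such an invariant'' overstates what the paper achieves: Theorems~\ref{T:RGPal} and~\ref{T:LGPal} settle the greedy-length versions, but the paper does not show how to transfer those results back to the ordinary palindromic length, and indeed leaves Conjecture~\ref{conj:prefixes_infini_prefixes_palindromes} open.
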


As for Conjectures~\ref{conj1} and \ref{conj:prefixes}, if Conjecture~\ref{conj:prefixes_infini_prefixes_palindromes} is verified then Conjecture~\ref{conj2} is also verified, and, we do not know any way to prove directly the converse. Conjecture~\ref{conj:prefixes_infini_prefixes_palindromes} seems more difficult than Conjecture~\ref{conj2}.

Conjectures~\ref{conj2} and \ref{conj:prefixes_infini_prefixes_palindromes} may be easier to tackle as we have more information on the considered word than in Conjectures~\ref{conj1} and \ref{conj:prefixes}. 

\begin{prpstn}
\label{P:equivalence1}
Conjecture~\ref{conj1} and \ref{conj2} are equivalent.
\end{prpstn}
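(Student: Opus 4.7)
\textbf{Plan for the proof of Proposition~\ref{P:equivalence1}.} I would establish the two implications separately. The implication from Conjecture~\ref{conj:prefixes} to Conjecture~\ref{conj2} --- pardon, from Conjecture~\ref{conj1} to Conjecture~\ref{conj2} --- is nearly immediate: assuming Conjecture~\ref{conj1}, take any $\bw \in \BPLF \cap \SWHIMPP$; then Conjecture~\ref{conj1} gives that $\bw$ is ultimately periodic, and since $\bw$ belongs to $\SWHIMPP$, Lemma~\ref{Fischler} upgrades this to periodicity. This is a one-line argument that I would state crisply.

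For the converse (assuming Conjecture~\ref{conj2} and deducing Conjecture~\ref{conj1}), the plan is to reduce an arbitrary $\bw \in \BPLF$ to a suffix that happens to lie in $\SWHIMPP$, so that Conjecture~\ref{conj2} becomes applicable. The key observation is that $\BPLF \subseteq \BPLP$: every prefix of $\bw$ is a factor of $\bw$, so the bound on the palindromic lengths of factors is also a bound on the palindromic lengths of prefixes. Hence $\bw$ lies in $\BPLP$, and Lemma~\ref{L:ex2.1(2)} (stated informally in the introduction and assumed here) delivers a suffix $\bw'$ of $\bw$ with infinitely many palindromic prefixes. Because factors of $\bw'$ are factors of $\bw$, the suffix $\bw'$ inherits the bound on palindromic lengths of factors, so $\bw' \in \BPLF \cap \SWHIMPP$. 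Conjecture~\ref{conj2} then forces $\bw'$ to be periodic, which in turn makes $\bw$ ultimately periodic, as required.

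The argument above is essentially bookkeeping once Lemma~\ref{L:ex2.1(2)} and Lemma~\ref{Fischler} are available, so the proposition itself does not present a real obstacle. The genuine content is hidden in Lemma~\ref{L:ex2.1(2)}, whose proof (not required here) is the step where one must actually produce a concrete suffix with infinitely many palindromic prefixes out of a mere bound on palindromic lengths of prefixes; this is what makes the reduction from factors to prefixes-plus-$\SWHIMPP$ meaningful. I would therefore keep the write-up of Proposition~\ref{P:equivalence1} short, emphasising the clean two-step reduction and pointing to Lemma~\ref{L:ex2.1(2)} and Lemma~\ref{Fischler} as the substantive inputs.
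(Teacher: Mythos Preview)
Your proposal is correct and follows essentially the same route as the paper: one direction via Lemma~\ref{Fischler}, the other by passing to a suffix in $\SWHIMPP$ via Lemma~\ref{L:ex2.1(2)} and noting that suffixes of words in $\BPLF$ remain in $\BPLF$. The only cosmetic difference is that you insert the observation $\BPLF \subseteq \BPLP$ before invoking Lemma~\ref{L:ex2.1(2)}, whereas the paper's statement of that lemma already covers the factor case directly; either way the argument is the same.
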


\begin{prpstn}
\label{P:equivalence2}
Conjecture~\ref{conj:prefixes} and \ref{conj:prefixes_infini_prefixes_palindromes} are equivalent.
\end{prpstn}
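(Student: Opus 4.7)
The plan is to prove the two implications separately. The direction Conjecture~\ref{conj:prefixes} $\Rightarrow$ Conjecture~\ref{conj:prefixes_infini_prefixes_palindromes} is immediate: given $\bw \in \BPLP \cap \SWHIMPP$, Conjecture~\ref{conj:prefixes} yields that $\bw$ is ultimately periodic, and Lemma~\ref{Fischler} upgrades this to periodic using $\bw \in \SWHIMPP$.

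For the converse, I start from $\bw \in \BPLP$ and aim to show $\bw$ is ultimately periodic. Applying Lemma~\ref{L:ex2.1(2)} produces a suffix $\bw'$ of $\bw$ lying in $\SWHIMPP$; write $\bw = u \bw'$ with $u$ a finite word. The goal is to apply Conjecture~\ref{conj:prefixes_infini_prefixes_palindromes} to $\bw'$, so first one must establish that $\bw' \in \BPLP$. This is the only nontrivial step and the main obstacle.

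To this end, I would prove the one-letter inequality $\pal{p} \leq \pal{ap} + 1$ for every letter $a$ and every finite word $p$. Given a minimal palindromic factorization $ap = \pi_1 \cdots \pi_k$, either $|\pi_1| = 1$, so $\pi_1 = a$ and $p = \pi_2 \cdots \pi_k$ has $\pal{p} \leq k - 1$; or $|\pi_1| \geq 2$ and $\pi_1 = a \pi_1' a$ for some (possibly empty) palindrome $\pi_1'$, so the factorization $p = \pi_1' \cdot a \cdot \pi_2 \cdots \pi_k$ gives $\pal{p} \leq k + 1$. Iterating this bound over the $|u|$ letters of $u$ yields $\pal{p'} \leq \pal{u p'} + |u| \leq K + |u|$ for every prefix $p'$ of $\bw'$, where $K$ bounds the palindromic lengths of prefixes of $\bw$. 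Hence $\bw' \in \BPLP$. Since also $\bw' \in \SWHIMPP$, Conjecture~\ref{conj:prefixes_infini_prefixes_palindromes} forces $\bw'$ to be periodic, and $\bw = u \bw'$ is therefore ultimately periodic, completing the proof.
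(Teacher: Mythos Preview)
Your proof is correct and follows essentially the same approach as the paper: both directions use Lemma~\ref{Fischler} and Lemma~\ref{L:ex2.1(2)} in the same way, and your inline argument that $\pal{p} \leq \pal{ap}+1$ is exactly the content of Lemma~\ref{L:ex2.1(1)}, which the paper states separately and then invokes.
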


Proposition~\ref{P:equivalence1} is a corollary of the following result.

\begin{lmm}
\label{L:ex2.1(2)}
If the palindromic lengths of prefixes (or more generally of factors) of an infinite word $\bw$ are bounded, then
there exists a suffix $\bw'$ of $\bw$ having infinitely many palindromic prefixes.
\end{lmm}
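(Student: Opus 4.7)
My plan is to fix, for every prefix of $\bw$, a minimal palindromic factorization and then use iterated pigeonhole to extract an infinite sequence of prefixes along which the leftmost palindromes stabilize while a later palindrome grows unboundedly. Since every prefix is a factor, it suffices to establish the statement under the weaker hypothesis that the palindromic lengths of prefixes are bounded, say by a constant $K$.

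For each $n \geq 1$, denote by $p_n$ the length-$n$ prefix of $\bw$ and fix a minimal palindromic factorization $p_n = \pi_1^{(n)} \pi_2^{(n)} \cdots \pi_{k_n}^{(n)}$ with $k_n = \pal{p_n} \leq K$; set $b_j^{(n)} = |\pi_1^{(n)} \cdots \pi_j^{(n)}|$, so that $b_0^{(n)} = 0$ and $b_{k_n}^{(n)} = n$. A first pigeonhole on $k_n \in \{1,\ldots,K\}$ provides some $k^* \leq K$ and an infinite set $I_0 \subseteq \xN$ on which $k_n = k^*$. Iteratively, for $j = 1, 2, \ldots$: given an infinite set $I_{j-1}$, inspect the integer sequence $(b_j^{(n)})_{n \in I_{j-1}}$. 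If this sequence admits a bounded infinite subsequence, pigeonhole on its finitely many attained values yields an infinite $I_j \subseteq I_{j-1}$ along which $b_j^{(n)}$ equals a constant $c_j$; otherwise $b_j^{(n)} \to \infty$ on $I_{j-1}$ and the extraction halts. Let $j^*$ be the last index for which stabilization succeeds (and set $c_0 = 0$).

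Since $b_{k^*}^{(n)} = n$ is unbounded on any infinite subset of $\xN$, the extraction must halt strictly before $j = k^*$, so that $j^* + 1 \leq k^*$ and index $j^* + 1$ is defined. On the infinite set $I_{j^*}$, the positions $b_0^{(n)}, b_1^{(n)}, \ldots, b_{j^*}^{(n)}$ are all constant equal to $0, c_1, \ldots, c_{j^*}$, whereas $b_{j^*+1}^{(n)}$ attains infinitely many distinct values. For every such $n$, the block $\pi_{j^*+1}^{(n)}$ is a palindrome occupying positions $c_{j^*}$ through $b_{j^*+1}^{(n)} - 1$ of $\bw$, and the unbounded lengths $b_{j^*+1}^{(n)} - c_{j^*}$ produce infinitely many distinct palindromic prefixes of the suffix $\bw'$ of $\bw$ starting at position $c_{j^*}$, which is the desired conclusion. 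The only real point to watch is the iterated extraction of nested infinite subsets and the fact that it cannot stabilize all $k^*$ boundaries (since the last one is $n$); no deeper obstacle is expected, as this is just a compactness argument on a bounded-size profile of boundary positions.
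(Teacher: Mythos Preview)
Your argument is correct and follows essentially the same idea as the paper's proof: both locate, via a pigeonhole/compactness argument on the bounded number of palindromes in prefix factorizations, a fixed position from which arbitrarily long palindromes begin. The paper's version is slightly slicker in that it works with the sets $I_k$ of all prefix lengths admitting a decomposition into $k$ palindromes (rather than fixing one minimal factorization per prefix and stabilizing boundaries one at a time), which lets it find the first $k$ with $I_k$ infinite directly without the preliminary pigeonhole on $k_n$.
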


\begin{proof}
Let $I_0 = \{0\}$ and for $k \geq 1$, let\\\centerline{
$I_k = \{ i \mid \exists j \in I_{k-1}, \bw[j+1..i] \mbox{ is a palindrome}\}$.}
Note that for $k \geq 1$, $I_k$ is the set of all lengths of prefixes of $\bw$ that can be decomposed into $k$ palindromes.
If all suffixes of $\bw$ have only a finite number of palindromic prefixes, then it can be checked quite directly by induction that, for all $k \geq 0$, $I_k$ is finite. This contradicts the fact that $\bw$ belongs to BPLP. Thus there exists a smallest integer $k \geq 1$ such that $I_k$ is infinite and $I_{k-1}$ is finite. So there exists $j \in I_{k-1}$, such that $\bw[j+1..\infty]$ has infinitely many palindromic prefixes.
\end{proof}

\begin{proof}[Proof of Proposition~\ref{P:equivalence1}]
Let us first assume that Conjecture~\ref{conj1} holds and let $\bw$ be an element in $\BPLF \cap \SWHIMPP$: $\bw$ is ultimately periodic. 
By Lemma~\ref{Fischler}, $\bw$ is periodic:
Conjecture~\ref{conj2} holds.
Observe now that any suffix of an infinite word in $\BPLF$ is also in $\BPLF$.
Thus by Lemma~\ref{L:ex2.1(2)}, if $\bw \in \BPLF$, it has a suffix $\bw'$ in $\BPLF \cap \SWHIMPP$.
Hence Conjecture~\ref{conj2} implies Conjecture~\ref{conj1}.
\end{proof}

Proposition~\ref{P:equivalence2} is also a consequence of Lemma~\ref{L:ex2.1(2)}. 
Nevertheless its proof also needs next lemma that states that any suffix of an element of $\BPLP$ also belongs to $\BPLP$.
For any integer $k \geq 1$, we let $\BPLF(k)$ (resp. $\BPLP(k)$) denote
the set of all infinite words such that $\pal{u} \leq k$ for all their factors $u$ (resp. all their prefixes $u$).

\begin{lmm}
\label{L:ex2.1(1)}
Let $k \geq 1$ be an integer, let $\bw$ be an infinite word and let $a$ be a letter.
If $a \bw \in \BPLP(k)$, then $\bw \in \BPLP(k+1)$
\end{lmm}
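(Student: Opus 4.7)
The plan is to translate a palindromic factorization of $av$ (which by hypothesis has palindromic length at most $k$) into one of $v$, paying at most one extra palindrome.

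First, I would take an arbitrary prefix $v$ of $\bw$. Then $av$ is a prefix of $a\bw$, so by hypothesis $\pal{av} \leq k$. Fix a palindromic factorization $av = \pi_1 \pi_2 \cdots \pi_m$ with $m \leq k$ and all $\pi_i$ nonempty palindromes. Since $\pi_1$ begins with $a$ and is a palindrome, it also ends with $a$.

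Next I would split into two cases according to the length of $\pi_1$. If $\pi_1 = a$, then $v = \pi_2 \cdots \pi_m$, so $\pal{v} \leq m-1 \leq k-1 \leq k+1$. Otherwise $|\pi_1| \geq 2$, and because $\pi_1$ is a palindrome starting and ending with $a$, we can write $\pi_1 = a \sigma a$ where $\sigma$ is a (possibly empty) palindrome. Then
\[
v = \sigma \cdot a \cdot \pi_2 \cdots \pi_m,
\]
which, after dropping $\sigma$ if it is empty, is a concatenation of at most $m+1 \leq k+1$ nonempty palindromes (the single letter $a$ is a palindrome, as are the $\pi_i$ for $i \geq 2$). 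Hence $\pal{v} \leq k+1$ in either case, which gives $\bw \in \BPLP(k+1)$.

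There is no real obstacle here; the only point worth checking carefully is the elementary fact that a palindrome of length $\geq 2$ with first letter $a$ has the form $a\sigma a$ for some palindrome $\sigma$, which follows immediately from the symmetry of palindromes. The proof is essentially a one-step unpeeling of the leading palindrome.
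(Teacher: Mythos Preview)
Your proof is correct and follows essentially the same approach as the paper: both arguments take a palindromic factorization of the prefix $av$ of $a\bw$, then split on whether the first palindrome $\pi_1$ equals $a$ or has the form $a\sigma a$, obtaining in the latter case the factorization $v = \sigma\, a\, \pi_2 \cdots \pi_m$ into at most $k+1$ palindromes. The only cosmetic difference is notation ($\sigma$ versus $\pi_1'$, $v$ versus $p$).
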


\begin{proof}
Let $p$ be a finite word. 
If $ap = \pi_1 \pi_2 \cdots \pi_k$ with $\pi_1$,  \ldots, $\pi_k$ palindromes, then $p = \pi_2 \ldots \pi_k$ if $\pi_1 = a$, and $p = \pi'_1 a \pi_2 \ldots \pi_k$ if $\pi_1 = a \pi_1' a$. Thus if $a\bw$ belongs to $\BPLP(k)$, $\bw$ belongs to $\BPLP(k+1)$. In particular if $a\bw \in BPLP$ then $\bw \in BPLP$. 
\end{proof}

\begin{proof}[Proof of Proposition~\ref{P:equivalence2}]
First if Conjecture~\ref{conj:prefixes} is verified, then Lemma~\ref{Fischler} implies that Conjecture~\ref{conj:prefixes_infini_prefixes_palindromes} is also verified.

Now assume that Conjecture~\ref{conj:prefixes_infini_prefixes_palindromes} is verified.
Let $\bw$ be in $\BPLP$.
By Lemma~\ref{L:ex2.1(2)}, $\bw$ has a suffix $\bw'$ in $\SWHIMPP$. 
Lemma~\ref{L:ex2.1(1)} implies that $\bw'$ belongs to $\BPLP$. Thus as Conjecture~\ref{conj:prefixes_infini_prefixes_palindromes} is verified, $\bw'$ is periodic: $\bw$ is ultimately periodic. Conjecture~\ref{conj:prefixes} is verified.
\end{proof}

It follows immediately from the definition of $\BPLF$, that for $k \geq 1$ and 
$\bw$ an infinite word,  if $a\bw \in \BPLF(k)$, then $\bw \in \BPLF(k)$.
Thus one can ask whether Lemma~\ref{L:ex2.1(1)} can be improved. 
But, in general, 
$a\bw \in \BPLP(k)$ does not imply $\bw \in \BPLP(k)$.
For instance $a(abba)^\omega \in \BPLP(2)$ while $(abba)^\omega \in \BPLP(3)\setminus \BPLP(2)$.

\section{\label{sec:about_pal_A_omega}About the hypothesis of Conjectures~\ref{conj2} and \ref{conj:prefixes_infini_prefixes_palindromes}}

As any word in $\BPLF$ contains infinitely many palindromes, one could ask whether Lemma~\ref{L:ex2.1(2)} is a special case of a more general result, that is, 
whether any word containing infinitely many palindromes 
has a suffix in $\SWHIMPP$.
The word $\prod_{i=1}^\infty ab^i = ababbabbbabbbba\cdots$ 
shows this is not the case. 
This word is not \textit{recurrent}, that is, it has some factors (at least all its prefixes) that does not occur infinitely often.
The following result provides also a negative answer for \textit{uniformly recurrent} words, that is, words whose all factors occur infinitely often with bounded gaps.

\begin{dfntn}
Let $(u_n)_{\geq 0}$ be the sequence of words defined by $u_0 = aa$ and $u_{n+1} = u_n bbab u_n \widetilde{u_n}$, and 
let $U = \lim_{n \to \infty} u_n$.
\end{dfntn}

\begin{lmm}
\label{LnoSuffixWithInfinitelyManyPalindromicPrefixes}
The word $U$ defined above is a binary uniformly recurrent word containing infinitely many palindromes but such that none of its suffixes begins with infinitely many palindromes.
\end{lmm}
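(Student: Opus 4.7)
My plan is to verify the four properties (well-defined binary, uniform recurrence, infinitely many palindromes, and the main negative statement) separately, tackling the first three as warm-up and concentrating on the last.

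First, $U$ is a well-defined infinite binary word: since $u_{n+1} = u_n \cdot (bbab\,u_n\,\widetilde{u_n})$ by definition, each $u_n$ is a prefix of $u_{n+1}$, so $\lim_n u_n$ exists. By induction on $n$, starting with $u_0 = aa$, I would prove that each $u_n$ begins and ends with $aa$, using the identity
\[
\widetilde{u_{n+1}} \;=\; \widetilde{u_n\,bbab\,u_n\,\widetilde{u_n}} \;=\; u_n\,\widetilde{u_n}\,babb\,\widetilde{u_n},
\]
from which $\widetilde{u_n}$ also begins and ends with $aa$. The palindrome $u_n\widetilde{u_n}$ is a factor of $u_{n+1}$ (hence of $U$) of length $2|u_n|\to\infty$, so $U$ contains infinitely many palindromic factors.

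Second, for uniform recurrence, the key observation is that $\widetilde{u_{n+1}}$ begins with $u_n$ (from the identity above). Combined with $u_{n+2} = u_{n+1}\,bbab\,u_{n+1}\,\widetilde{u_{n+1}}$, the word $u_{n+1}$ occurs in $u_{n+2}$ at positions $0$, $|u_{n+1}|+4$, and $2|u_{n+1}|+4$, and each such occurrence contains several copies of $u_n$ with gaps of size $O(|u_n|)$ by induction. Since every factor of $U$ is a factor of some $u_n$, this yields the bounded-gap property and hence uniform recurrence.

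The main work is showing that no suffix of $U$ begins with infinitely many palindromic prefixes. I would argue by contradiction: suppose $U_i$ admits palindromic prefixes $P_1 \subsetneq P_2 \subsetneq \cdots$ of unbounded length. A preliminary structural fact is that every occurrence of $bb$ in $U$ sits inside a $bbab$ or $babb$ block of the recursion: indeed, since $u_n$ both starts and ends with $aa$, none of the concatenations $u_n\cdot bbab$, $bbab\cdot u_n$, or $u_n\cdot\widetilde{u_n}$ creates a new $bb$ at the boundary, so by induction every $bb$ in $U$ comes from a freshly inserted block. Because each $P_k$ equals its own reverse, the $bbab$ blocks inside $P_k$ must be in mirror bijection with the $babb$ blocks about the center of $P_k$. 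Counting these blocks recursively, if $B_n$ (resp.\ $T_n$) is the number of $bbab$ (resp.\ $babb$) in $u_n$, then $B_{n+1}=2B_n+T_n+1$ and $T_{n+1}=2T_n+B_n$, so $B_n-T_n=n$, a persistent asymmetry. The plan is to lift this global asymmetry to a local obstruction by locating $bbab$- and $babb$-positions in $U$ inductively (each one is either in a first $u_n$, a second $u_n$, a fresh middle block, or the reverse image inside $\widetilde{u_n}$) and showing that the mirror symmetry required by $P_k$ of arbitrary length forces a $babb$ at some position that actually carries a $bbab$ (or no $bb$), producing a contradiction.

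The main obstacle is precisely this last step: the global inequality $B_n-T_n=n$ is an averaged statement, while palindromicity imposes a local constraint, so one must carefully track how $bbab$- and $babb$-positions shift across recursion levels and verify that no center $c$ close to $i$ can witness the symmetric matching of blocks out to arbitrary radius. This is where the detailed combinatorial bookkeeping is required.
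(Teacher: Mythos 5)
Your treatment of the first three properties (well-definedness, infinitely many palindromic factors via $u_n\widetilde{u_n}$, and uniform recurrence via the occurrences of $u_n$ inside $u_{n+2}$) is sound and close in spirit to the paper's. The $bbab$/$babb$ counting invariant you introduce is also exactly the right object: the paper works with $\Delta(w)=|w|_{bbab}-|w|_{babb}$, and your recursion $B_{n+1}=2B_n+T_n+1$, $T_{n+1}=2T_n+B_n$, hence $B_n-T_n=n$, is correct (the junctions contribute nothing because each $u_n$ begins and ends with $aa$).

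However, there is a genuine gap, and you have located it yourself: the step that converts the global count asymmetry into a statement about \emph{every long prefix of every suffix} is announced but not carried out, and the route you propose for it (a positional ``mirror bijection'' of $bbab$- and $babb$-blocks about the center of $P_k$, tracked through the recursion levels) is both unnecessary and substantially harder than what is needed. The missing idea is a \emph{prefix-monotonicity} lemma: for every prefix $p$ of $u_n\widetilde{u_n}$ one has $\Delta(p)\geq 0$ (proved by a short induction on $n$ with a four-case analysis of where $p$ ends), while each full block $bbab\,u_n\widetilde{u_n}$ contributes exactly $+1$ to $\Delta$ since $\Delta(u_n\widetilde{u_n})=0$. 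Writing a suffix as $S=s\prod_{n\geq N}bbab\,u_n\widetilde{u_n}$, it follows that once enough blocks have been traversed to absorb the initial deficit $\max(0,-\Delta(s))$, every further prefix $\pi$ of $S$ satisfies $\Delta(\pi)\geq 1$; since a palindrome must satisfy $\Delta=0$ (the reversal map is a bijection between its occurrences of $bbab$ and of $babb$ --- this is the only use of palindromicity required, with no positional matching at all), $S$ has only finitely many palindromic prefixes. Without some such uniform lower bound on $\Delta$ over \emph{all} prefixes of the suffix, the identity $B_n-T_n=n$ alone does not rule out that a palindromic prefix of $S$ ends at a cleverly chosen position where the counts happen to balance, so as it stands your argument does not close.
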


\begin{proof}
One can observe that $|u_n| = 4.3^n -2$ and 
$U = aa \prod_{n \geq 0} bbab u_n \widetilde{u_n}$. 
Thus the sequence $(u_n \widetilde{u_n})_{n \geq 0}$ is an infinite sequence of pairwise different palindromic factors of $U$.
As $(u_n)_{n \geq 0}$ is a sequence of prefixes of $U$, it follows that the set of factors of $U$ is closed under reversal (any factor $v$ of $U$ occurs in a prefix $u_n$ and so $\tilde{v}$ occurs in $\widetilde{u_n}$ itself a factor of $U$).

Let us prove that $U$ is uniformly recurrent.
Let $n$ be an integer. There exists an integer $k$ depending on $n$ such that all factors of length $n$ occur in the prefix $u_k$ of $U$. As the set of factors of $U$ is closed under reversal, all factors of length $n$ also occur in $\widetilde{u_k}$. From the definition of the sequence $(u_i)_{i \geq 0}$, one can inductively prove that each word $u_N$ with $N \geq k$ can be decomposed over the set $\{ u_k, u_k bbab, u_k babb, \widetilde{u_k}, \widetilde{u_k} bbab, \widetilde{u_k} babb\}$. Consequently $U$ can also be decomposed over this set and the distance between two occurrences of a same factor of length $n$ is at most $2|u_k|+4$. The word $U$ is uniformly recurrent.

To prove that any suffix of $U$ has a finite number of palindromic prefixes, let us first observe that

\begin{fct}
\label{Fdiff}
For each prefix $p$ of $u_n\widetilde{u_n}$ ($n \geq 0$), $|p|_{bbab}-|p|_{babb} \geq 0$. 
\end{fct}

\begin{proof}
We prove the fact by induction on $n$. As $u_0 = aa$, the fact is immediate for $n = 0$.
Let $p$ be a prefix of $u_{n+1}$. One of the following cases holds:
\begin{description}
\item[Case 1:] $p$ is a prefix of $u_n$, 
\item[Case 2:] $p = u_nq$ with $q \in \{b, bb, bba\}$, 
\item[Case 3:] $p = u_n bbab p'$ with $p'$ a prefix of $u_n$ or
\item[Case 4:] $p = u_n bbab p' q \tilde{q}$ with $p'$ a prefix of $u_n$ and $q$ such that $u_n = p'q$.
\end{description}
For any word $w$, let $\Delta(w)$ denote $|w|_{bbab}-|w|_{babb}$. 
Observe that $aa$ is both a prefix and a suffix of $u_n$.
Thus $\Delta(u_nq) = \Delta(u_n)$ for any $q$ in $\{b, bb, bba\}$. 
Consequently Fact~\ref{Fdiff} is a consequence of the inductive hypothesis in Cases 1 and 2. 
In Case~3, one has $\Delta(p) = \Delta(u_n) + \Delta(p') + 1$ and once again the result holds as a direct consequence of the inductive hypothesis. 
In Case~4, one has $\Delta(q \tilde{q}) = 0$. Thus $\Delta(p) = \Delta(u_n) + \Delta(p') + \delta$ with $\delta \in \{0, 1, 2\}$ ($0$ if $babb$ overlaps the end of $p'$ and the beginning of $q$, $2$ if $bbab$ overlaps the end of $p'$ and the beginning of $q$, $1$ otherwise). Once again, the result holds by induction.
\end{proof}

Let $S$ be a suffix of $U = aa \prod_{n \geq 0} bbab u_n \widetilde{u_n}$. 
There exists an integer $N$ and a word $s$ such that $S = s \prod_{n \geq N} bbab u_n \widetilde{u_n}$. 
Let $k = \max(0, |s|_{babb}-|s|_{bbab}) = max(0, \Delta(s))$. 
A consequence of Fact~\ref{Fdiff} is that, if $p = \left[\prod_{n = n_1}^{n_2} bbab u_n \widetilde{u_n}\right] p'$ with $p'$ a prefix of $bbabu_{n+1}\widetilde{u_{n+1}}$, 
then $\Delta(p) \geq n_2-n_1+1$.
Thus for any prefix $\pi$ of $S$ longer than 
$s\prod_{n = N}^{N+k} bbab u_n \widetilde{u_n}$, $|\pi|_{bbab}-|\pi|_{babb} \geq 1$ and so $p$ cannot be a palindrome: $S$ has a finite number of palindromic prefixes. This ends the proof of Lemma~\ref{LnoSuffixWithInfinitelyManyPalindromicPrefixes}.
\end{proof}

\section{\label{sec:around}Another question and a study of bound 2}

We have already mentioned that, as $\BPLF \subseteq \BPLP$, 
Conjecture~\ref{conj1} implies Conjecture~\ref{conj2}, 
and Conjecture~\ref{conj:prefixes} implies Conjecture~\ref{conj:prefixes_infini_prefixes_palindromes}. 
Conversely we do not know whether $\BPLP \subseteq \BPLF$. Such a result would be useful to show the equivalence between all previous conjectures.
In this context, as for  all $k \geq 1$, 
$\BPLF(k) \subseteq \BPLP(k)$, the following question is of interest.
\begin{qstn}
Given an integer $k \geq 1$, what is the smallest integer $K$ (if it exists) such that 
$\BPLP(k) \subseteq \BPLF(K)$?
\end{qstn}

As $\BPLP(1) = \BPLF(1) = \{ a^\omega \mid a$ letter $\}$, this question holds for $k \geq 2$.
Proposition~\ref{P:bound2} answers it for $k = 2$ in the restricted context of words of $\SWHIMPP$ (a restriction authorized after Section~\ref{sec:variants} and that reduces the number of cases to study).
Note that to prove Proposition~\ref{P:bound2}, we characterize elements of $\BPLF(2) \cap \SWHIMPP$ and 
$\BPLP(2) \cap \SWHIMPP$.
Observing that all these words are periodic confirms the conjectures.

\begin{prpstn}
\label{P:bound2}
$\BPLF(2) \cap \SWHIMPP \subsetneq \BPLP(2) \cap \SWHIMPP \subsetneq \BPLF(3) \cap \SWHIMPP$.
\end{prpstn}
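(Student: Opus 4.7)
The plan is to verify both inclusions and produce witnesses of strictness. The inclusion $\BPLF(2) \cap \SWHIMPP \subseteq \BPLP(2) \cap \SWHIMPP$ is immediate since every prefix is a factor. For strictness, I exhibit $\bw = a(abba)^\omega$: its palindromic prefixes are $a$, $aa$, and the words $a(abba)^k a$ for $k \geq 1$, placing $\bw$ in $\SWHIMPP$. Every remaining prefix admits a decomposition into at most two palindromes (either as $a \cdot (abba)^\ell$ for a full-period extension, or as a palindromic prefix followed by $b$ or $bb$), so $\bw \in \BPLP(2)$. However the factor $abbaab$, occurring in $\bw$ at position $1$, has palindromic length exactly $3$: a direct check of the five possible 2-palindrome splits yields none valid, whereas $abba \cdot a \cdot b$ realizes the value $3$. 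Hence $\bw \notin \BPLF(2)$.

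For the second inclusion $\BPLP(2) \cap \SWHIMPP \subseteq \BPLF(3)$, I plan to characterize elements of $\BPLP(2) \cap \SWHIMPP$. Let $\bw$ belong to this set, with palindromic prefixes $\pi_0, \pi_1, \pi_2, \ldots$ listed by increasing length. For any consecutive pair, either $|\pi_{i+1}| \geq 2|\pi_i|$ (then $\pi_{i+1} = \pi_i m_i \pi_i$ for some palindrome $m_i$, by applying palindromicity of $\pi_{i+1}$ to its border $\pi_i$), or $|\pi_{i+1}| < 2|\pi_i|$ (then $\pi_i$ is both a prefix and a suffix of $\pi_{i+1}$, forcing $\pi_{i+1}$ to have period $|\pi_{i+1}| - |\pi_i|$). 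The pal-length-$\leq 2$ hypothesis additionally requires every prefix $u$ with $|\pi_i| < |u| < |\pi_{i+1}|$ to decompose as $u = \pi_j \cdot q$ with $\pi_j$ a palindromic prefix and $q$ a nonempty palindrome. Analyzing the compatibility of these decompositions across all intermediate lengths, via Fine--Wilf periodicity, forces $\bw$ to be eventually periodic of the form $\bw = x \cdot (pq)^\omega$ with $p, q$ palindromes; on a binary alphabet this reduces essentially to $\bw = (a^i b^j)^\omega$ up to letter renaming and rotation. For words of this form, any factor writes as $\alpha \cdot \Pi \cdot \beta$ with $\alpha, \beta$ boundary pieces (powers of a single letter, hence palindromes) and $\Pi$ a palindromic core built from complete periods, such as $a^k (b^j a^i)^m b^j a^k$. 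A finite case analysis on boundary alignment gives a decomposition into at most three palindromes.

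Strictness of the second inclusion is witnessed by $(abba)^\omega$: its palindromic prefixes $(abba)^k$ place it in $\SWHIMPP$, and the prefix $abbaab$ (the same word as above) has palindromic length $3$, so $(abba)^\omega \notin \BPLP(2)$. Every factor of $(abba)^\omega$ writes as $s \cdot (abba)^k \cdot t$ with $s$ a suffix of $abba$ and $t$ a prefix of $abba$; on the twenty-five resulting pairs $(s,t)$, a short case analysis using the palindromes $(abba)^k$, $a(abba)^k a$, and $ba(abba)^k ab$ yields a 3-palindrome decomposition in each case, so $(abba)^\omega \in \BPLF(3)$. The main obstacle lies in the characterization step: one must rule out persistent chains of $|\pi_{i+1}| \geq 2|\pi_i|$ (which the pal-length-$2$ bound on the intermediate prefixes should preclude) and extract a stable period from the bordered-structure case, requiring delicate Fine--Wilf bookkeeping that forms the technical heart of the argument.
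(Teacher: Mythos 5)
Your first inclusion, its strictness witness, and the second strictness witness are all sound. In particular, $a(abba)^\omega=(aabb)^\omega$ is a correct (and self-contained) witness for $\BPLP(2)\cap\SWHIMPP\setminus\BPLF(2)$: your verification that all its prefixes have palindromic length at most $2$ and that its factor $abbaab$ has palindromic length $3$ checks out. This differs from the paper, which instead derives the witness $(ababa)^\omega$ from two characterization lemmas; your direct computation is arguably more economical for this half of the statement.

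The gap is in the second inclusion $\BPLP(2)\cap\SWHIMPP\subseteq\BPLF(3)$. You adopt the same strategy as the paper (first characterize $\BPLP(2)\cap\SWHIMPP$, then verify $\BPLF(3)$ for each family), but the characterization you announce is wrong: you claim the words reduce ``essentially to $(a^ib^j)^\omega$ up to letter renaming and rotation.'' The actual characterization also contains the family $((ab)^ia)^\omega$ for $i\geq 2$ (e.g.\ $(ababa)^\omega$, whose palindromic prefixes include $(ababa)^k$ and whose other prefixes split as $aba\cdot baab$, $a\cdot babaabab$, etc.). These words are \emph{not} rotations or renamings of any $(a^ib^j)^\omega$: their maximal internal $a$-runs alternate between lengths $1$ and $2$, whereas every rotation of $(a^ib^j)^\omega$ has all internal $a$-runs of equal length. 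Since your subsequent $\BPLF(3)$ verification is carried out only for words ``of this form,'' an entire family of $\BPLP(2)\cap\SWHIMPP$ escapes the check, and the inclusion is not established. (Ironically, the omitted family is exactly the paper's witness for the first strict inclusion.) Beyond this, the Fine--Wilf ``compatibility analysis'' that is supposed to produce the characterization is explicitly deferred as ``the technical heart of the argument''; in the paper this is a nontrivial nine-case induction on successive palindromic prefixes (the $\nextt$ analysis), so the key lemma of your proof is both unproved and misstated.
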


The proof of this proposition is a direct consequence of Lemma~\ref{L:BPLF2} and Lemma~\ref{L:approche2bis} below.
Indeed:
\begin{itemize}
\item from the definitions, we have $\BPLF(2) \subseteq \BPLP(2)$; 

\item from Lemmas~\ref{L:BPLF2} and \ref{L:approche2bis}, we observe that $(ababa)^\omega \in \BPLP(2) \cap \SWHIMPP\setminus \BPLF(2)$;

\item all words occurring in Lemma~\ref{L:approche2bis} belong to $\BPLF(3)$ (the exhaustive verification of palindromic lengths of factors is left to readers); 

\item the word $(abba)^\omega$ belong to 
$\BPLF(3) \cap \SWHIMPP\setminus \BPLP(2)$ (the exhaustive verification of palindromic lengths of factors is left to readers).
\end{itemize}

\begin{lmm}
\label{L:BPLF2}
The set $(\BPLF(2)\setminus \BPLF(1))\cap \SWHIMPP$ is the set of all words 
in the form $a^i(ba^j)^\omega$ with $a$, $b$ two different letters, 
and $i$, $j$ two integers such that $0 \leq i \leq j$ and $j \neq 0$.
\end{lmm}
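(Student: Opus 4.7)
My plan is to prove both inclusions of the set equality separately. The forward inclusion is that any word $w = a^i(ba^j)^\omega$ of the prescribed form lies in $(\BPLF(2)\setminus\BPLF(1))\cap\SWHIMPP$; the reverse inclusion is its converse.

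For the forward inclusion, I would first exhibit the palindromic prefixes of $w$ explicitly: $a^i$, $a^i b a^i$, $a^i b a^j b a^i$, $a^i b a^j b a^j b a^i$, and so on, each being a prefix of $w$ thanks to $i \le j$, so $w \in \SWHIMPP$. For membership in $\BPLF(2)$, I would observe that every factor of $w$ has the shape $f = a^{p_0} b a^j b a^j \cdots b a^{p_r}$ with $0 \le p_0, p_r \le j$ (interior $a$-blocks are forced to length $j$). If $r = 0$ then $f$ is already a palindrome; if $r \ge 1$ and $p_0 \le p_r$, the split $\pi_1 = a^{p_0}(ba^j)^{r-1} b a^{p_0}$ and $\pi_2 = a^{p_r - p_0}$ exhibits two palindromes, and the symmetric split handles $p_0 \ge p_r$. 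The factor $ab$ witnesses $w \notin \BPLF(1)$.

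For the reverse inclusion, take $\bw \in (\BPLF(2)\setminus\BPLF(1))\cap\SWHIMPP$. Since a factor on three distinct letters has palindromic length $\ge 3$, $\bw$ uses exactly two letters; after a possible relabeling we may assume $\bw$ starts with $a$, and write $\bw = a^{i_0}b^{m_1}a^{i_1}b^{m_2}a^{i_2}\cdots$ with $i_0 \ge 1$ and $m_n, i_n \ge 1$ for $n \ge 1$ (the palindromic prefix hypothesis forces infinitely many occurrences of both letters, so the decomposition extends indefinitely). The heart of the argument is to show $(i_n)_{n\ge 1}$ and $(m_n)_{n\ge 1}$ are each constant. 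I will extract three local dichotomies by computing $\pal{f}$ for three short factors of $\bw$, via case analysis of all possible 2-palindrome decompositions (within-block splits for odd-block factors, between-block splits for even-block factors): the 4-block factor $b^{m_n}a^{i_n}b^{m_{n+1}}a^{i_{n+1}}$ satisfies $\pal{f} \le 2$ iff $i_n = i_{n+1}$ or $m_n = m_{n+1}$; the 4-block factor $a^{i_n}b^{m_{n+1}}a^{i_{n+1}}b^{m_{n+2}}$ satisfies $\pal{f} \le 2$ iff $i_n = i_{n+1}$ or $m_{n+1} = m_{n+2}$; and the 5-block factor $b^{m_n}a^{i_n}b^{m_{n+1}}a^{i_{n+1}}b^{m_{n+2}}$ satisfies $\pal{f} \le 2$ iff $i_n = i_{n+1}$ or $m_{n+1} = m_n + m_{n+2}$. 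Assuming for contradiction $i_N \ne i_{N+1}$ for some $N \ge 1$, the three conditions simultaneously give $m_N = m_{N+1} = m_{N+2}$ and $m_{N+1} = m_N + m_{N+2}$, forcing $m_{N+1} = 0$, which is impossible. Hence $(i_n)_{n \ge 1}$ is constant, equal to some $j$. A symmetric computation with the 5-block factor $a^{i_{n-1}}b^{m_n}a^{i_n}b^{m_{n+1}}a^{i_{n+1}}$ (which gives the equivalence $\pal{f} \le 2$ iff $m_n = m_{n+1}$ or $i_n = i_{n-1}+i_{n+1}$) then forces $(m_n)_{n\ge 1}$ to be constant equal to some $k$; the boundary case $n=1$ uses $i_0 \ge 1$.

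With $\bw = a^{i_0}(b^k a^j)^\omega$ in hand, I would examine the 6-block factor $f = b \cdot a^j \cdot b^k \cdot a^j \cdot b^k \cdot a$ (partial first $b$-block and partial last $a$-block, each of length one). The same style of case analysis shows $\pal{f} \le 2$ only when $k = 1$ or $j = 1$. If $k = 1$, then $\bw = a^{i_0}(ba^j)^\omega$ and $\SWHIMPP$ forces $i_0 \le j$, as seen by inspecting the palindromic prefix $a^{i_0} b a^{i_0}$. If $j = 1$ and $k \ge 2$, then $\bw = a^{i_0}(b^k a)^\omega$; the same inspection gives $i_0 \le 1$, hence $i_0 = 1$, and the swap $a \leftrightarrow b$ presents $\bw$ as $a^0(ba^k)^\omega$ in the new alphabet. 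In either case $\bw$ has the claimed form. The main obstacle will be the careful case analysis for the 4/5/6-block factors; once the dichotomies are in hand, the contradictions ($m_{N+1} = 2 m_{N+1}$ and the 6-block condition $k = 1$ or $j = 1$) are immediate.
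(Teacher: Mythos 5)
Your proposal is correct in substance but follows a genuinely different route from the paper's. The paper proves only the hard inclusion: after dispatching the case where $aa$ is not a factor, it fixes a suffix $\bw'$ of $\bw$ beginning with $ba^jb$ and shows by induction on the number of occurrences of $b$ that $\bw'=(ba^j)^\omega$, using just two computations ($\pal{ba^jb^ka}=3$ for $k\geq 2$ and $\pal{ba^jba^kb}=3$ for $k\neq j$); the form $a^i(ba^j)^\omega$ with $i\leq j$ then follows because prefixes of $\bw$ are mirror images of factors of $\bw'$. You instead parametrize $\bw$ by its run-length sequences $(i_n)_{n\geq 1}$ and $(m_n)_{n\geq 1}$ and prove both are constant via three local ``iff'' characterizations; I checked the three dichotomies and the resulting contradiction $m_{N+1}=2m_{N+1}$, and they are correct, as is the final computation on $ba^jb^ka^jb^ka$ forcing $j=1$ or $k=1$. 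Your route is more systematic and, unlike the paper, also verifies the easy inclusion. Two small points need the standard one-line repairs: (i) the justification ``a factor on three distinct letters has palindromic length $\geq 3$'' is false as literally stated ($abacaba$ is a palindrome on three letters); what you need, and what the paper uses, is that a third letter would create a factor $ab^ic$ with $a,b,c$ pairwise distinct, whose only palindromic suffix is $c$, so its palindromic length is $3$; (ii) the bound $i_0\leq j$ should be extracted from an arbitrary sufficiently long palindromic prefix of $\bw$ (it ends with $ba^{i_0}$, so its terminal maximal $a$-run, of length exactly $i_0$, must fit inside an $a$-block of $\bw$ of length $j$), rather than by asserting that $a^{i_0}ba^{i_0}$ is itself a palindromic prefix, which is what you are trying to prove.
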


\begin{proof}
Let $\bw$ be a word in $(\BPLF(2)\setminus \BPLF(1))\cap \SWHIMPP$.
As $\bw$ begins with infinitely many palindromes, each factor occurs infinitely often.
Condition $\bw \in \BPLF(2)$ implies that $\bw$ is a binary word. Indeed, otherwise it would
have a factor in the form $ab^ic$ for an integer $i \geq 1$ and some different letters $a$, $b$ and $c$.

If $\bw$ has no factor in the form $aa$ with $a$ a letter, 
as $\bw$ is written exactly on two letters (otherwise it would belong to $\BPLF(1)$),
$\bw = a^i(ba)^\omega$ with $i \leq 1$ and $a$, $b$ two different letters. The result holds.

From now on assume that $\bw$ has a factor in the form $aa$ with $a$ a letter. 
Let $b$ be the second letter occurring in $\bw$. 
The word $\bw$ has a factor in the form $ba^jb$ with $j \geq 2$ (we have initially pointed out that all factors, hence factors $aa$ and $b$, occur infinitely often in $\bw$).
Let $\bw'$ be a suffix of $\bw$ beginning with $ba^jb$.
Let $p_n$ be the smallest prefix of $\bw'$ that contains exactly $n$ occurrences of the letter $b$: $p_0 = \varepsilon$, $p_1 = b$, $p_2 = ba^jb$. 

Observe that:
\begin{enumerate}
\item for all $k \geq 2$, $\pal{ba^jb^ka} = 3$;
\item for all $k \geq 1$, $\pal{ba^jba^kb} = 3$ if $j \neq k$.
\end{enumerate}  
Using these observations, it follows by induction that $p_n = (ba^j)^{n-1}b$ for all $n \geq 1$.
Hence $\bw' = (ba^j)^\omega$.

As $\bw$ begins with infinitely many palindromes, 
prefixes of $\bw$ are mirror images of some factors of $\bw'$.
Thus $\bw = a^i(ba^j)^\omega$ for some integer $i$ such that $0 \leq i \leq j$.
\end{proof}

Next result provides a characterization of words in $\BPLP(2) \cap \SWHIMPP$.

\begin{lmm}
\label{L:approche2bis}
An infinite word $\bw$ beginning with the letter $a$ and having infinitely many palindromic prefixes 
is in $\BPLP(2)$ if and only if it is in one of the following forms ($b$ is a letter different from $a$):
\begin{enumerate}
\itemsep0cm
\item $\bw = a^\omega$;
\item $\bw = (a^iba^j)^\omega$ for some integers $i \geq 1$, $j \geq 1$;
\item $\bw = (a^ib^j)^\omega$ for some integers $i \geq 1$, $j \geq 1$;
\item $\bw = ((ab)^ia)^\omega$ for some integer $i \geq 2$.
\end{enumerate}
\end{lmm}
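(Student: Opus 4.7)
The proof has two directions, which I would handle separately.

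For the \emph{if} direction, I would verify each of the four listed forms. For $a^\omega$ every prefix is a monochromatic palindrome. For $(a^i b a^j)^\omega$, $(a^i b^j)^\omega$ and $((ab)^i a)^\omega$, I would exhibit the infinite sequence of palindromic prefixes explicitly (these arise from iterating the palindromic right closure operation on the period) and give, for each remaining prefix, an explicit decomposition into at most two palindromes. These verifications are routine case analyses depending on where within the period the prefix ends.

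For the \emph{only if} direction (the main content), assume $\bw$ starts with $a$, every prefix has palindromic length at most $2$, and $\bw$ has infinitely many palindromic prefixes. If only the letter $a$ appears, then $\bw = a^\omega$ (Case~1). Otherwise, I would fix a second letter $b$ and let $k \geq 1$ be maximal with $a^k$ a prefix. I would first rule out any third letter $c$: since $\bw \in \SWHIMPP$, such a $c$ would appear in arbitrarily long palindromic prefixes, from which one extracts an initial prefix of palindromic length at least $3$, following the alphabet-reduction idea used in the proof of Lemma~\ref{L:BPLF2}. Hence $\bw$ is binary over $\{a,b\}$. Let $q$ be the shortest palindromic prefix of $\bw$ of length strictly greater than $k$: it starts with $a^k b$ and ends with $b a^k$, so is of the form $q = a^k b a^k$ (a single central $b$) or $q = a^k b \cdot y \cdot b a^k$ for a possibly empty palindrome $y \in \{a,b\}^*$. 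Each shape of $y$ points towards exactly one of the forms (2), (3), (4), according to whether $y$ is empty, a power of $a$, a power of $b$, or genuinely mixed.

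Next I would iterate: given $q$, examine the letters of $\bw$ immediately following it and, using the constraint that every intermediate prefix has palindromic length at most $2$, show that $\bw$ continues in the expected periodic manner. Between any two consecutive palindromic prefixes $q_n$ and $q_{n+1}$, the prefixes of intermediate length must admit a decomposition into at most two palindromes, which tightly constrains the word between them and forces agreement with the candidate period extracted from $q$. The hypothesis of infinitely many palindromic prefixes then ensures this propagation continues forever, yielding exact periodicity of one of the four listed forms. The main obstacle will be the detailed case-by-case analysis needed to go from the shape of $q$ (i.e., of $y$) to the form of $\bw$: several sub-cases must be checked, and in each one the iteration must be carefully justified to rule out any deviation from the candidate period.
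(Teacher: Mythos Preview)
Your approach is essentially the same as the paper's: reduce to a binary alphabet, then trace the sequence of palindromic prefixes, using the $\BPLP(2)$ constraint on intermediate prefixes to pin down each successor. The paper packages this iteration into an auxiliary function $\nextt(u)$ (Lemma~\ref{L:approche2}), which for each candidate palindromic prefix $u$ computes exactly the set of possible next palindromic prefixes; nine cases are worked out and then chained together. Your informal ``examine the letters following $q$ and propagate'' is the same mechanism.

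Two small points. First, your reduction to a binary alphabet cannot simply follow Lemma~\ref{L:BPLF2}: that lemma assumes $\BPLF(2)$ and uses that a \emph{factor} $ab^ic$ has palindromic length~$3$, whereas here you only control prefixes. The paper proves the binary reduction separately (Lemma~\ref{L:size_alphabet}) by arguing about the prefix up to and just past the first occurrence of a third letter. Second, your correspondence ``shape of $y$ $\Rightarrow$ form (2)/(3)/(4)'' is slightly too optimistic: the paper's computation of $\nextt(a^ib)$ shows that the first palindromic prefix past $a^i$ can also be $a^i(ba^j)^kba^i$ with $1\le j<i$, $k\ge 1$, and these are dead ends ($\nextt=\emptyset$), hence ruled out only after a further step; similarly, distinguishing forms (2) and (4) requires looking beyond $q$ rather than reading it off $y$ directly. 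None of this changes the strategy, but the case analysis is a bit more branched than your sketch suggests.
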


A first step for the proof of the lemma is next result.
\begin{lmm}
\label{L:size_alphabet}
Any infinite word in $\BPLP(2) \cap \SWHIMPP$ contains at most two different letters.
\end{lmm}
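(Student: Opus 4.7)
The strategy is a proof by contradiction: suppose $\bw \in \BPLP(2) \cap \SWHIMPP$ uses at least three distinct letters, and write $a = \bw[1]$. Among the letters actually occurring in $\bw$, choose $c$ whose first occurrence is as late as possible, and let $i$ denote that first $c$-position. Since $c$ is the latest-introduced letter and at least three letters occur in $\bw$, the prefix $\bw[1..i-1]$ already contains at least two distinct letters (including $a$), so $\bw[1..i-1]\neq a^{i-1}$ and $i\geq 3$. The aim will be to force $\bw=\bw[1..i-1]\cdot c^\omega$: then every prefix of length at least $i$ would end in $c\neq a$ and hence fail to be a palindrome, leaving only finitely many palindromic prefixes and contradicting $\bw\in\SWHIMPP$.

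The first step is to check that $\bw[1..i-1]$ is already a palindrome. The prefix $\bw[1..i]$ is not a palindrome (first letter $a$, last letter $c$), yet $\pal{\bw[1..i]}\leq 2$ by hypothesis, so $\bw[1..i]=q\cdot s$ for two nonempty palindromes $q,s$. The palindromic suffix $s$ ends in $c$ and therefore starts in $c$; since position $i$ is the only $c$-position in $\bw[1..i]$, necessarily $s=c$ and $q=\bw[1..i-1]$ is a palindrome. I then plan to prove, by induction on $k\geq 0$, that $\bw[i+k]=c$; the base case $k=0$ is the definition of $i$.

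For the inductive step, assume $\bw[i..i+k-1]=c^k$ for some $k\geq 1$ and, towards a contradiction, that $\bw[i+k]=y\neq c$. First verify that $\bw[1..i+k]$ is not a palindrome: if $y\neq a$ the endpoints differ, while if $y=a$ then palindromicity would force $\bw[2]=\bw[i+k-1]=c$, impossible since $i\geq 3$. So $\bw[1..i+k]=q\cdot s$ with $q,s$ nonempty palindromes. A short argument (ruling out $|s|=1$, which would require the palindromic $q$ to start in $a$ and end in $c$, and ruling out $|q|\geq i$, which would force $s$ to start in $c$) gives $|q|\leq i-2$, so the palindrome $s$ contains all $k$ consecutive $c$'s at $\bw$-positions $i,\ldots,i+k-1$. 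By palindromic symmetry, the reflections of these positions inside $s$ must also be $c$'s; since the only $c$-positions in $\bw[1..i+k]$ are those $k$, equating the two resulting lists of forced $c$-positions of $s$ pins down $|q|=i-2$, $|s|=k+2$, and $s=y c^k y$. In particular $q=\bw[1..i-2]$ is also a palindrome.

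Finally, $\bw[1..i-2]$ and $\bw[1..i-1]$ are both palindromes of consecutive lengths sharing the same initial letters; combining their palindrome equations $\bw[j]=\bw[i-1-j]$ and $\bw[j]=\bw[i-j]$ yields $\bw[\ell]=\bw[\ell-1]$ for $2\leq \ell\leq i-1$, hence $\bw[1..i-1]=a^{i-1}$, contradicting our setup. This closes the induction, gives $\bw=\bw[1..i-1]\cdot c^\omega$, and yields the announced contradiction with $\SWHIMPP$. The main obstacle is the combinatorial bookkeeping in the inductive step, namely tracking the $c$-positions inside the palindromic suffix $s$ precisely enough to force both $|q|=i-2$ and $s=yc^k y$; the concluding observation that two nested palindromic prefixes of consecutive lengths yield a mono-letter prefix is then a short direct manipulation of the palindrome equations.
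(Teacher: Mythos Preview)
Your argument is correct and follows essentially the same path as the paper's proof: both show that the prefix $p=\bw[1..i-1]$ preceding the first occurrence of a third letter $c$ is a palindrome, then analyse a $2$-palindrome decomposition of $pc^{k}y$ (the paper writes $pc^{i}d$) to force both $p$ and its length-$(|p|-1)$ prefix to be palindromes, whence $p$ is a power of a single letter---a contradiction. Two minor remarks: your inductive wrapper is unnecessary, since $\bw\in\SWHIMPP$ already excludes $\bw=p\,c^{\omega}$, so one may simply take $y$ to be the first non-$c$ letter after position $i$ (as the paper does); and in your case split the clause ``ruling out $|q|\geq i$'' should read ``$|q|\geq i-1$'' (when $|q|=i-1$ one has $s=c^{k}y$, which also starts with $c$), which is precisely what yields $|q|\leq i-2$.
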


\begin{proof}
Assume that $\bw$ contains at least three letters. Then it has a prefix in the form $pc$ with $p$ containing exactly two different letters $a$ and $b$, and with $c$ a letter different from $a$ and $b$. Inequality $\pal{pc}\leq 2$ implies $\pal{p} = 1$. 

The word $\bw$ has a prefix in the form $pc^id$ with $d$ a letter different from $c$, 
and $i \geq 1$ an integer (As $\bw \in \SWHIMPP$, $\bw \neq pc^\omega$). Conditions ``$\pal{pc^id} \leq 2$" and ``$c$ does not occur in $pd$ implies $p = \varepsilon$, $p = d$ or $p = \pi d$ for a palindrome $\pi$. In the latter case, the fact that both $\pi$ and $p$ are palindromes implies that $\pi$ is a power of $d$. In all cases, we have a contradiction with the fact that $p$ contains two different letters.
\end{proof}

In order to prove Lemma~\ref{L:approche2bis}, we consider the possible palindromic prefixes of $\bw$. 
We let $\nextt(u)$ denote the set of all palindromes $\pi$ over $\{a, b\}$ such that:
\begin{itemize}
\item $u$ is a proper prefix of $\pi$ (A word $u$ is a \textit{proper prefix} of a word $v$ if there exists a non empty word $p$ such that $v = pu$),
\item all proper palindromic prefixes of $\pi$ are prefixes of $u$ (possibly $u$ itself),
and 
\item $\pal{p} \leq 2$ for all the prefixes $p$ of $\pi$.
\end{itemize}
For instance $\nextt(a^i) = \{a^{i+1}\} \cup \nextt(a^i b)$.

\begin{lmm}
\label{L:approche2}For any integer $i \geq 1$,
\begin{enumerate}
\itemsep0cm
\item \label{item1} $\nextt(a^i b) = \{a^i b^j a^i \mid j \geq 1\} \cup \{a^i(ba^j)^kba^i \mid k \geq 1, 1 \leq j < i \}$;
\item \label{item2} $\nextt(a^i(ba^j)^kba^i) = \emptyset$ when $1 \leq j < i$, $k \geq 1$;
\item \label{item3} $\nextt(a^i (b^j a^i)^ka) = \emptyset$ when $j \geq 2$ and $k \geq 1$;

\item \label{item4} $\nextt(a^i (b^j a^i)^kb) = \{ a^i (b^j a^i)^{k+1} \}$ when $j \geq 1$ and $k \geq 1$;

\item \label{item5} $\nextt(a^i (b a^i)^ka) = \emptyset$ when $i \geq 2$ and $k \geq 2$;

\item \label{item6} $\nextt(a^i b a^{i+1}) = \{ a^iba^{i+j}ba^i \mid j \geq 1\}$.

\item \label{item7} $\nextt(a^i (b a^{i+j})^kba^i) = \{ a^i (b a^{i+j})^{k+1}ba^i\}$ when $j \geq 1$, $k \geq 1$;

\item \label{item8} $\nextt(a(ba)^ka) = \{ (a(ba)^k)^2 \}$ when $k \geq 2$.

\item \label{item9} $\nextt((a(ba)^k)^\alpha) = \{ (a(ba)^k)^{\alpha+1} \}$ when $k \geq 2$ and $\alpha \geq 2$
\end{enumerate}
\end{lmm}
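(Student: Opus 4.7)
My plan is to prove Lemma~\ref{L:approche2} by treating each of the nine items through a two-sided analysis: verifying that the listed palindromes indeed lie in $\nextt(u)$, and then showing that no other palindrome qualifies. For items~\ref{item2},~\ref{item3},~\ref{item5}, which claim $\nextt(u)=\emptyset$, only the uniqueness half is needed.

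The forward direction is a direct verification. For each candidate $\pi$ one checks that (i) $\pi$ is a palindrome properly extending $u$, (ii) all proper palindromic prefixes of $\pi$ are prefixes of $u$, and (iii) every prefix of $\pi$ decomposes into at most two palindromes. Point (iii) is established by exhibiting explicit splittings; for instance a prefix of $a^ib^ja^i$ of the form $a^ib^ja^l$ with $1\leq l\leq i$ splits as $a^{i-l}\cdot a^lb^ja^l$ (both palindromes), and analogous routine splittings handle all other items.

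The reverse direction carries the substance of the proof. Fix $\pi\in\nextt(u)$ and write $\pi=uv$ with $v$ nonempty. Two tools drive the analysis. First, since $\pi$ is a palindrome beginning with $u$, it ends with $\tilde u$, which pins down the last $|u|$ letters of $\pi$. Second, for every intermediate prefix $uv'$ of $\pi$ with $v'$ a proper nonempty prefix of $v$, the word $uv'$ cannot itself be a palindrome (otherwise $uv'=\pi$ by the minimality built into the definition of $\nextt$), so $\pal{uv'}\leq 2$ forces a factorization $uv'=p_1p_2$ with $p_1,p_2$ nonempty palindromes and $p_1$ a proper palindromic prefix of $uv'$. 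The definition of $\nextt(u)$ then forces $p_1$ to be a palindromic prefix of $u$, so $p_1$ lies on a short explicit list depending on $u$, and $p_2$ is correspondingly constrained to be a palindrome beginning with a specific suffix of $u$. Iterating this constraint letter by letter enumerates the admissible continuations; keeping only those that can be closed into a palindrome matching the tail $\tilde u$ without producing an intermediate palindromic prefix yields exactly the listed $\pi$'s (and none in the emptiness cases).

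The main obstacle is the bookkeeping rather than any single deep step. It is most acute in items~\ref{item1},~\ref{item7},~\ref{item8},~\ref{item9}, where $u$ admits many palindromic prefixes and hence many candidate $p_1$'s. The analysis remains tractable because the palindromic prefixes of the relevant $u$ form short lists (typically powers of $a$ together with at most one longer palindrome such as $a^iba^i$ or $a^i(ba^j)^kba^i$), and the requirement that $p_2$ be a palindrome eliminates most candidates at each step. A further subtlety in items~\ref{item6}--\ref{item9} is the comparison between the length of the inner $a$-run of $u$ and that of the boundary $a$-runs, which decides whether a periodic extension can be sustained and which period is ultimately forced.
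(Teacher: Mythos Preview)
Your plan is correct and is essentially the paper's approach, only organized differently. The paper proceeds item by item, extending $u$ letter by letter and exhibiting, for each forbidden continuation, a concrete prefix $p$ with $\pal{p}=3$; you instead exploit the observation that for every intermediate prefix $uv'$ the factorization $uv'=p_1p_2$ forces $p_1$ to lie among the (few) palindromic prefixes of $u$. These are two ways of saying the same thing, since computing $\pal{uv'}$ amounts exactly to scanning those possible $p_1$'s, so the resulting case trees coincide.

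One point to watch: in items~\ref{item7} and~\ref{item9} the paper encounters branches along which every prefix keeps $\pal{\cdot}\le 2$ indefinitely (for instance, after reaching $(a(ba)^k)^{\alpha-1}a(ba)^{k+1}$ in item~\ref{item9}, the only admissible continuation is $bababab\cdots$ forever), and rules them out by the finiteness of $\pi$. Your ``close into a palindrome matching the tail $\tilde u$'' criterion handles this too, but you should make explicit in those two items that the surviving infinite branch never produces a prefix that is simultaneously a palindrome and ends in $\tilde u$; this is the only place where the letter-by-letter constraint $\pal{\cdot}\le 2$ alone does not terminate the analysis.
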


\begin{proof}
\begin{enumerate}

\item Let $E_1 = \{a^i b^j a^i \mid j \geq 1\} \cup \{a^i(ba^j)^kba^i \mid k \geq 1, 1 \leq j < i \}$.
We let readers verify that $E_1 \subseteq \nextt(a^i b)$.
Assume by contradiction that there exists a word $u$ in $\nextt(a^i b)\setminus E_1$.
Let $\pref(E_1)$ denote the set of all prefixes of words in $E_1$.
The word $u$ must have a prefix $p$ in $(\pref(E_1)\setminus E_1)\{a, b\} \setminus \pref(E_1)$.
In other words $p = \pi a$ or $p = \pi b$ with $p \not\in \pref(E_1)$,
$\pi \in \pref(E_1)$ and $\pi \not\in E_1$.
Observe that 
$\pref(E_1) = 
   \{a^\ell \mid \ell \leq i\} 
   \cup \{a^i b^j \mid j \geq 0\}
   \cup \{a^i b^ja^k \mid j \geq 1, 0 \leq k \leq i\}
   \cup \{a^i(ba^j)^{k} b a^\ell \mid k \geq 0, 0 \leq \ell \leq i, 1 \leq j < i\}$.
As $a^ib$ is a prefix of $u$ and $\pi \not\in E_1$, we get $p = a^ib^ja^k b$
with $j \geq 2$ and $1 \leq k < i$, or,
$p = a^i(ba^j)^kba^\ell b$ 
with $1 \leq j < i$, $k \geq 1$, $0 \leq \ell < i $ and $\ell \neq j$. In these two cases, $\pal{p} = 3$. 
Hence $\nextt(a^i b) = E_1$.

\item Observe that any palindrome having $a^i(ba^j)^kba^i$ as a proper prefix must have a prefix in one of the two following forms: $p = a^i(ba^j)^kba^{i+\ell}b$ with $\ell \geq 1$; $p = a^i(ba^j)^kba^{i}b^ma$ with $m \geq 1$.
As $1 \leq j < i$ and $k \geq 1$, 
in both cases $\pal{p} = 3$. Thus $\nextt(a^i(ba^j)^kba^i) = \emptyset$.

\item 
Any element of $\nextt(a^i (b^j a^i)^ka)$ must have a prefix $p = a^i(b^ja^i)^ka^\ell b$ with $\ell \geq 1$.
Once again, as $j \geq 2$ and $k \geq 1$, $\pal{p} = 3$.
Hence $\nextt(a^i(b^ja^i)^ka) = \emptyset$.

\item 
We let readers verify that $a^i(b^ja^i)^{k+1} \in \nextt(a^i(b^ja^i)^kb)$.
Assume there exists a word $u$ in $\nextt(a^i(b^ja^i)^kb)\setminus \{a^i(b^j a^i)^{k+1}\}$.
It has a prefix $p$ in the form $a^i(b^ja^i)^kb^\ell a$ with $\ell \neq j$ and $\ell \geq 1$ or
 in the form $a^i(b^ja^i)^kb^j a^\ell b$ with $1 \leq \ell < i$.
 Thus, as $j \geq 1$ and $k \geq 1$, we have $\pal{p} = 3$ except if $j = 1$ for the second form. But then $u$ has $pa$ or $pb$ as a prefix with $p = a^i(ba^i)^kba^\ell b$. As $\pal{pa} = \pal{pb} = 3$, we get a contradiction.
 Hence  $\nextt(a^i(b^ja^i)^kb) = \{a^i(b^j a^i)^{k+1}\}$.

\item Let $i \geq 2$, $k \geq 2$ and let $u$ be an element of $\nextt(a^i (b a^i)^ka)$.
For some $\ell \geq 1$ and $m \geq 1$, the word $p_1 := a^i (b a^i)^{k-1}ba^{i+\ell}b^ma$ is a prefix of $u$. 
Observe that $\pal{p_1} \leq 2$ only if $\ell = 1$.
Then $u$ has one of the following two words as a prefix: $p_2 := a^i (b a^i)^{k-1}ba^iab^maa$ or $p_3 := a^i (b a^i)^{k-1}ba^iab^mab$. 
But, as $i \geq 2$ and $k \geq 2$, $\pal{p_2} = 3 = \pal{p_3}$: a contradiction with $u \in \BPLP(2)$.
Thus $\nextt(a^i (b a^i)^ka) = \emptyset$.

\item We let readers verify that $\{ a^iba^{i+j}ba^i \mid j \geq 1\} \subseteq \nextt(a^i b a^{i+1})$.
Now assume that $u$ is an element of $\nextt(a^i b a^{i+1})$. 
It has a prefix $p_1 := a^iba^{i+j} b^\ell a$ for some $j \geq 1$, $\ell \geq 1$. 
As $\pal{a^iba^{i+j} bb} = 3$, we have $\ell = 1$.
As $\pal{a^iba^{i+j} ba^mb} = 3$ when $m < i$, we deduce that 
$u = a^iba^{i+j}ba^i$. Hence $\nextt(a^i b a^{i+1}) = \{ a^iba^{i+j}ba^i \mid j \geq 1\}$.

\item Let $i \geq 1$, $j \geq 1$, $k \geq 1$.
Observe first that
$a^i (b a^{i+j})^{k+1}ba^i \in \nextt(a^i (b a^{i+j})^kba^i)$.
Assume by contradiction that there exists an element $u$ in the set
$\nextt(a^i (b a^{i+j})^kba^i)\setminus\{ a^i (b a^{i+j})^{k+1}ba^i \}$.
It has a prefix in the form
$p_1 = a^i (b a^{i+j})^{k}ba^ia^\ell b$ with $\ell \neq j$ 
or in the form
$p_2 = a^i (b a^{i+j})^{k}ba^{i+j}b^n$ with $n \geq 2$.
or in the form
$p_3 = a^i (b a^{i+j})^{k}ba^{i+j}ba^mb$ with $0 \leq m < i$.
Observe that $\pal{p_2} = 3$, $\pal{p_3} = 3$ and, $\pal{p_1} \leq 2$ only if $\ell = 0$.
In this latter case, $u$ has $p_1 = a^i (b a^{i+j})^{k}ba^i b$ as a prefix.
Consequently for an integer $n \geq 1$, $u$ has also a prefix $p_4 = a^i (b a^{i+j})^{k}ba^i b^na$.
We have $\pal{p_4} \leq 2$ only if $n = 1$ and $j = 1$: $p_4 = a^i (b a^{i+1})^{k}ba^i ba$.
Observe that $\pal{p_4a} = \pal{a^i (b a^{i+1})^{k}ba^i baa} = 3$.
Thus for an integer $\alpha \geq 1$, 
$p_5 = a^i(ba^{i+1})^kba^ibab^\alpha a$ is a prefix of $u$.
Now $\pal{p_5} \leq 2$ only if $\alpha = 1$ and $i = 1$: $p_5 = (aba)^ka(ba)^3$.
Observe that for any $\gamma \geq 3$, 
$\pal{(aba)^ka(ba)^\gamma} = \pal{a(baa)^k(ba)^\gamma b} = 2$ and
$\pal{a(baa)^k(ba)^\gamma bb}  = \pal{a(baa)^k(ba)^\gamma a}  = 3$. 
Thus $u$ must have all words $(aba)^ka(ba)^\gamma$ as prefixes (for any $\gamma \geq 3$).
We get a contradiction with the finiteness of $u$.
Hence $\nextt(a^i (b a^{i+j})^kba^i) = \{a^i (b a^{i+j})^{k+1}ba^i\}$.

\item Let $k \geq 2$: $(a(ba)^k)^2 \in \nextt(a(ba)^ka)$. 
Let $u$ be an element of the set $\nextt(a(ba)^ka)\setminus \{ (a(ba)^k)^2\}$. 
It has a prefix $p_1 := a(ba)^{k}(ab)^i b$ with $0 < i \leq k$
or $p_2 = a(ba)^k(ab)^iaa$ with $0 \leq i < k$.
Observe that $\pal{p_1} =3$. Also $\pal{p_2} =3$ except when $i = 0$.
It follows that $u$ has a prefix $p_3 = a(ba)^ka^\alpha b^\beta a$ with $\alpha \geq 2$, $\beta \geq 1$.
This is not possible as $\pal{p_3} = 3$.
Hence $\nextt(a(ba)^ka) = \{ (a(ba)^k)^2\}$. 

\item Let $k \geq 2$ and $\alpha \geq 2$.
First  $(a(ba)^k)^{\alpha+1} \in \nextt((a(ba)^k)^\alpha)$.
Let $u$ be an element of $\nextt((a(ba)^k)^\alpha) \setminus \{(a(ba)^k)^{\alpha+1}\}$.

This word $u$ has a prefix in the form $p_1 = (a(ba)^k)^\alpha (ab)^ib$ with $0 \leq i \leq k$ or
in the form $p_2 = (a(ba)^k)^\alpha (ab)^i aa$ with $0 \leq i < k$.
We have $\pal{p_1} \leq 2$ or $\pal{p_2} \leq 2$ only if $i = 0$.
In this case $u$ has a prefix $p_3 = (a(ba)^k)^\alpha b^\beta a$ with $\beta \geq 1$ or
$p_4 = (a(ba)^k)^\alpha a^\beta b$ with $\beta \geq 2$.
Observe that $\pal{p_4} = 3$ and, $\pal{p_3} \leq 2$ only if $\beta = 1$.
Thus $u$ has $p_3 = (a(ba)^k)^{\alpha-1}a(ba)^{k+1}$ as a prefix.
Let $\gamma \geq 1$.
Observe that $\pal{(a(ba)^k)^{\alpha-1}a(ba)^{k+\gamma}} = 2$ and
$\pal{(a(ba)^k)^{\alpha-1}a(ba)^{k}(ba)^\gamma b} = 2$ while
$\pal{(a(ba)^k)^{\alpha-1}a(ba)^{k+\gamma}a} = 3$ and
$\pal{(a(ba)^k)^{\alpha-1}a(ba)^{k}(ba)^\gamma bb} = 3$.
We get a contradiction with the finiteness of $u$.
Hence $\nextt((a(ba)^k)^\alpha) = \{(a(ba)^k)^{\alpha+1}\}$.

\end{enumerate}
\end{proof}

\begin{proof}[Proof of Lemma~\ref{L:approche2bis}]

The proof of the \emph{if part} just needs to verify that each prefix of a word in one of the four forms 
can be decomposed in one or two palindromes.
This is done straightforwardly as these prefixes are words in one of the following forms:
\begin{itemize}
\item $a^i$, $i \geq 0$;
\item $a^i(ba^{i+j})^k ba^ia^\ell$, $i \geq 1$, $k \geq 0$, $j \geq 0$, $\ell \geq 0$;
\item $a^{i-\ell}a^\ell(ba^{i+j})^k ba^\ell$, $i \geq 1$, $0 \leq \ell \leq i$, $j \geq 0$, $k \geq 0$;
\item $a^{i-\ell}a^\ell(b^ja^{i})^k b^ja^\ell$, $i \geq 1$, $\ell \leq i$, $j \geq 1$, $k \geq 0$;
\item $(a^ib^j)^k a^i b^\ell$, $i \geq 1$, $j \geq 1$, $k \geq 0$, $\ell \leq j$,
\item $((ab)^ia)^k(ab)^\ell a$,   $i \geq 0$, $k \geq 0$, $\ell \geq 0$;
\item $(ab)^{i-\ell}a.(ba)^\ell((ab)^ia)^k(ab)^\ell$, $i \geq 1$, $0 \leq \ell \leq i$, $k \geq 0$.
\end{itemize}

We now prove the only if part.
Let $\bw$ be an infinite word in $\BPLP(2) \cap \SWHIMPP$ beginning with the letter $a$. 
By Lemma~\ref{L:size_alphabet}, $\bw$ contains at most two different letters. 
If $\bw$ is written only on one letter, $\bw = a^\omega$. 
Assume $\bw$ is written on two letters.
Let $b$ be the second letter.
For some integer $i \geq 1$, $\bw$ begins with $a^i b$.
By Items~\ref{item1} and \ref{item2} of Lemma~\ref{L:approche2}, for some integer $j \geq 1$, 
$\bw$ begins with $a^ib^ja^i$. 
Assume $\bw \neq (a^ib^j)^\omega$.
By Item~\ref{item4} of Lemma~\ref{L:approche2}, $\bw$ begins with $a^i(b^ja^i)^ka$ for some $k \geq 1$.
By Item~\ref{item3} of Lemma~\ref{L:approche2},  $j =1$.
Assume $k =1$. 
By Item~\ref{item6} of Lemma~\ref{L:approche2} there exists an integer $j' \geq 1$ such that $\bw$ begins with $a^ib^{i+j'}ba^i$. 
Thus by Item~\ref{item7} of Lemma~\ref{L:approche2}, $\bw = (a^iba^{j'})^\omega$.
Assume from now on that $w$ begins with $a^i(ba^i)^ka$ with $k \geq 2$.
By Item~\ref{item5} of Lemma~\ref{L:approche2}, $i = 1$.
By Items~\ref{item8} and \ref{item9}  of Lemma~\ref{L:approche2},
$\bw = (a(ba)^k)^\omega$.
\end{proof}

Maybe the main interest of the previous proof is the idea of studying the links between successive palindromic prefixes of the considered infinite words. 
Nevertheless determining $\BPLF(3)$ or $\BPLP(3)$ seems much more difficult due to a combinatorial explosion even when restricted to binary words.
Contrarily to words in $\BPLF(2)$, words in $\BPLP(3)$ (and so in $\BPLF(3)$) contains ternary words. For instance the word $(abac)^\omega$, with $a$, $b$ and $c$ three different letters, belong to $\BPLP(3)$.

\section{\label{sec:greedypal}Greedy palindromic lengths}

We now introduce greedy palindromic lengths. In this section we provide generalities on these notions.
In the next sections, we will show that any infinite word with left or right bounded greedy palindromic lengths is ultimately periodic. This proves Conjecture~\ref{conj1} in a special case and reinforces all conjectures studied in this paper.

\subsection{Definition and examples}

Let us define inductively the \textit{left greedy palindromic length} of a word $w$ by: $\lgpal{\varepsilon} = 0$, and, 
$\lgpal{w} = 1  + \lgpal{u}$ when $w \neq \varepsilon$ and $w = \pi u$ with $\pi$ the longest palindromic prefix of $w$. For instance, $\lgpal{abaa} = 2$ and $\lgpal{abaab} = 3$.
Similarly, we define the \textit{right greedy palindromic length} $\rgpal{w}$ considering at each step the longest palindromic suffix: 
$\rgpal{abaa} = 3$ and $\rgpal{abaab} = 2$.
An important difference between greedy palindromic lengths and the palindromic length is that the definition implies a unique decomposition. For a finite word $w$, we say that $(\pi_1, \ldots, \pi_k)$ is the left greedy palindromic decomposition of $w$, if $w = \pi_1 \cdots \pi_k$ and, moreover, for all $i$, $1 \leq i \leq k$, $\pi_i$ is the longest palindromic prefix of $\pi_i \cdots \pi_k$. The right greedy palindromic decomposition of a word can be defined similarly.

In order to provide examples and following the idea of Section~\ref{sec:around}, let us consider an infinite word such that, for all prefixes $p$, $\lgpal{p} \leq 2$. 
For some letters $a$ and $b$ and integers $n \geq 0$, $k \geq 1$, 
this word is in the form $a^nb^\omega$, 
$(ab^k)^\omega$,
$(ab^k)^na^\omega$
or $(ab^k)^nab^\omega$ (only words in the form $b^\omega$ or $(ab^k)^\omega$ belong to $\SWHIMPP$).
Let us prove this claim.
First all these forms are suitable as prefixes of such words are, for some integer $k \geq 1$, $n \geq 1$, $i \geq 0$, in the form $a^n$, $a^nb^k$, $(ab^k)^naa^i$ or $(ab^k)^nab^i$. Moreover we have:
$\lgpal{a^n} = 1$; 
$\lgpal{a^nb^k} = 2$;
$\lgpal{(ab^k)^na} = 1$;
$\lgpal{(ab^k)^naa^i} = 2$ if $i \geq 1$;
$\lgpal{(ab^k)^nab^i} = 2$ if $i \geq 1$.
No other word is suitable. 
Indeed if a word is not in these forms, 
for some integers $k \geq 1$, $n \geq 1$, 
it has a prefix in the form $a^ib^ka$ with $i \geq 2$,
$(ab^k)^na^ib$ with $i \geq 2$ or $(ab^k)^nab^ia$ with $i \neq k$. Moreover
$\lgpal{a^ib^ka} = 3$ if $i \geq 2$;
$\lgpal{(ab^k)^na^ib} = 3$ if $i \geq 2$;
$\lgpal{(ab^k)^nab^ia} = 3$ if $i \neq k$.

Now let us consider an infinite word such that, for all prefixes $p$, $\rgpal{p} \leq 2$. 
For some integers $n \geq 1$ and $k \geq 1$, this word is in the form 
$a^\omega$, 
$a^nb^\omega$,
$a^n(ba^\ell)^\omega$ with $\ell < n$, 
$(a^nb)^\omega$
or
$(a^nb)^kb^\omega$
(only $a^\omega$ and $(a^nb)^\omega$ belong to $\SWHIMPP$).
Let us prove this claim.
First all these forms are suitable. 
Indeed prefixes of such words are, for some integers $n \geq 0$, $k \geq 1$, $\ell \geq 0$,
in one of the following forms:
$a^n$;
$a^nb^k$;
$a^n(ba^\ell)^iba^j$ with $\ell < n$, $i \geq 0$, $0 \leq j \leq \ell$; 
$(a^nb)^ka^\ell$  with $\ell \leq n$;
$(a^nb)^kb^\ell$. 
We have for $n \geq 1$ and $k \geq 1$: 
$\rgpal{a^n} = 1$;
$\rgpal{a^nb^k} =2$; 
$\rgpal{a^n(ba^\ell)^iba^j} = 2$ when $\ell < n$, $i \geq 0$, $0 \leq j \leq \ell$; 
$\rgpal{(a^nb)^ka^\ell} = 1$ if $\ell = n$,  $2$  if $\ell < n$;
 $\rgpal{(a^nb)^kb^\ell} = 2$.
No other word is suitable. 
Indeed if a word is not in this form, for some integers $k \geq 2$, $n \geq 1$, $\ell \geq 1$ 
it has a prefix in one of the following forms: 
$a^nb^ka^\ell$ with $\ell \geq n+1$;
$a^nb^ka^\ell b$ with $k \geq 2$ and $\ell \neq 0$;
$a^n(ba^\ell)^iba^j$ with $j \geq n+1$ and $\ell \leq n$;
$a^n(ba^\ell)^iba^jb$ with $i \geq 1$, $j \neq \ell$, $j > 0$ and $\ell \neq n$;
$a^n(ba^n)^iba^jba$ with $i \geq 1$ and $j < n$;
$a^n(ba^n)^iba^jbb$ with $0 < j < n$. 
Observe that the right greedy palindromic length is 3 for all these words.

We have already mentioned that the palindromic length of a finite word of length $n$ can be computed in time $O(n log(n))$ \cite{Fici_Gagie_Karkkainen_Kempa2014JDA,Shur_Rubinchik2015Iwoca}. We claim that greedy palindromic lengths of such a word can be computed in time $O(n)$.
As the left greedy palindromic length of a finite word $w$ is the right greedy palindromic length of the mirror word $\tilde{w}$, we just have to explain this for the right greedy palindromic length.
In the article \cite{Groult_Prieur_Richomme2010IPL}, it was explained how to compute in linear time, for a word $w$, an array LPS that stores for each $i$, $1 \leq i \leq |w|$, the length of the longest palindromic
suffix ending at position $i$:
$${\text LPS}[i] = \max\{\ell \mid w[i - \ell+1..i] \text{ is a palindrome }\}$$
Applying next algorithm after computing this array LPS allows to compute the right greedy palindromic length of $w$ (the result is store in variable RGPal). 
The whole computation takes an $O(|w|)$ time.

\begin{quote}
\begin{tabbing}
RGPal $\leftarrow 0$\\
$i$ $\leftarrow$ $|w|$\\
while $i>0$ do:\\
\hspace*{1cm} \= RGPAL $\leftarrow$ RGPAL $+ 1$\\
\> $i$ $\leftarrow$ i - LPS[i]
\end{tabbing}
\end{quote}

\subsection{Links with palindromic length}

The following relation follows directly the definitions of greedy palindromic lengths.

\begin{prprt}
\label{P:property1}
For any word $u$, $|u|_{pal} \leq \min(\lgpal{u}, \rgpal{u})$.
\end{prprt}

Next example, provided to us by P.~Ochem, shows that the value $\min(\lgpal{u},$ $\rgpal{u})-|u|_{pal}$ can be arbitrarily large. Let $m_n$ be the $n^{\rm th}$ term of the Multibonacci sequence of words defined over the set of integers viewed as letters by: $m_1 = 1$, $m_{n+1} = m_n(n)m_n$ ($m_2 = 121$, $m_3 = 1213121$, \ldots). All words $m_n$ end with the letter $1$ and readers can verify that $\pal{m_n1^{-1}} = 2$ while $\lgpal{m_n1^{-1}} = 2n-2 = \rgpal{1^{-1}m_n}$.
Now let $a$ and $b$ be two symbols that are not integers. For $n \geq 1$, 
let $M_n = 1^{-1}m_n ab m_n 1^{-1}$. We have 
$\lgpal{M_n} = 2+ \lgpal{1^{-1}m_n}+ \lgpal{m_n 1^{-1}} = 2n+2 = 
2+ \rgpal{1^{-1}m_n}+ \rgpal{m_n 1^{-1}} = 
\rgpal{M_n}$. Moreover $|M_n|_{pal} = 6$.

A similar example over a binary alphabet can be found in the online proceedings of Conference ``Journ\'ees Montoises d'Informatique Th\'eorique 2014".
It is also possible to recode previous example applying a morphism on word $m_n$ (as for instance, morphism $f_n$ defined by $f_n(i) = a^{n+1-i}b^{2i}a^{n+1-i}$).

As a consequence of Property~\ref{P:property1}, if for an infinite word $\bw$ there exists an integer $K$ such that $\lgpal{p} \leq K$ for any prefix $p$ of $\bw$, then also $|p|_{pal} \leq K$. We have the following stronger property (let recall that the existence of a similar property when considering the palindromic length instead of the left greedy palindromic length is an open problem):

\begin{prprt}
\label{P:property2}
If for an infinite word $\bw$ there exists an integer $K$ such that $\lgpal{p} \leq K$ for any prefix $p$ of $\bw$, then also $|u|_{pal} \leq 2K$ for any factor $u$ of $\bw$. 
\end{prprt}

\begin{proof}
Let $u$ be a factor of $\bw$ and let $p$ be such that $pu$ is a prefix of $\bw$. 
Let $\pi_1$, \ldots, $\pi_k$ be the palindromes such that $(\pi_1, \ldots, \pi_k)$ is the left greedy palindromic 
decomposition of $pu$: $k = \lgpal{pu} \leq K$.
Let $x$, $y$ be the words and $i$ be the integer such that $y \neq \varepsilon$, 
$\pi_i = xy$ and $u = y \pi_{i+1} \cdots \pi_k$. 
The word $\tilde{y}$ is a prefix of the palindrome $\pi_i$. 
Hence $\pi_1\cdots\pi_{i-1}\tilde{y}$ is a prefix of $\bw$. 
Let $\pi_1'$, \ldots, $\pi_\ell'$ be the palindromes such that $(\pi_1', \ldots, \pi_\ell')$ is the left greedy palindromic 
decomposition of $\pi_1\cdots\pi_{i-1}\tilde{y}$: $\ell = \lgpal{\pi_1\cdots\pi_{i-1}\tilde{y}} \leq K$.
Palindromes $\pi_j$ and $\pi_j'$ are the longest palindromic prefixes of respectively $\pi_j \cdots \pi_k$ and 
$\pi_j' \cdots \pi_{\ell}'$. It follows that $i-1 < \ell$,  
$\pi_1 = \pi_1'$, \ldots, $\pi_{i-1} = \pi_{i-1}'$ and $\tilde{y} = \pi_i' \cdots \pi_{\ell}'$. 
Hence $u = \pi_{\ell}' \cdots \pi_i' \pi_{i+1}\cdots \pi_k$ is the product of at most $2K$ palindromes.
\end{proof}

An interesting question is whether Property~\ref{P:property2} is still true when considering right greedy palindromic length instead of left greedy palindromic length. This is an open problem.

\section{Right greedy palindromic length}

In this section, we prove Conjecture~\ref{conj1} in the special case where the palindromic length is replaced with the right greedy palindromic length.
For an infinite word $\bw$, let $\maxrgpalpref{\bw}$ denote the supremum of the set $\{\rgpal{p} \mid p \text{ prefix of } \bw\}$.

\begin{thrm}
\label{T:RGPal}
Let $\bw$ be a non ultimately periodic infinite word.
Then $\maxrgpalpref{\bw}$ is infinite, that is, there exist prefixes of $\bw$ with arbitrarily large
right greedy palindromic lengths.
\end{thrm}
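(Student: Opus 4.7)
The plan is to argue the contrapositive: assuming $\maxrgpalpref{\bw} \leq K$ for some integer $K$, I will show $\bw$ is ultimately periodic. First I would encode the hypothesis combinatorially. For $n \geq 1$ let $L(n)$ be the length of the longest palindromic suffix of $\bw[1..n]$, and set $f(n)=n-L(n)$, with $f(0)=0$. Then $\rgpal{\bw[1..n]}$ is exactly the number of iterations of $f$ needed to reach $0$ starting from $n$, so the hypothesis says that $\mathbb{N}$, viewed as a tree rooted at $0$ with parent map $f$, has depth at most $K$.

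Since this tree is infinite but of bounded depth, a pigeonhole argument (if every fiber were finite, every level would be finite and the tree would have only finitely many nodes) produces a vertex $p$ at some depth $\leq K-1$ whose fiber $f^{-1}(p)=\{n : f(n)=p\}$ is infinite. Enumerate $f^{-1}(p)=\{n_1 < n_2 < \cdots\}$; the equality $f(n_i)=p$ says that $\bw[p+1..n_i]$ is the longest palindromic suffix of $\bw[1..n_i]$, and in particular a palindrome. Setting $\bw':=\bw[p+1..\infty]$, the words $\bw[p+1..n_i]$ are palindromic prefixes of $\bw'$, so $\bw' \in \SWHIMPP$. Since $\bw=\bw[1..p]\,\bw'$, it suffices to prove $\bw'$ ultimately periodic; by Lemma~\ref{Fischler}, this is equivalent to proving $\bw'$ periodic.

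Next I would reduce the periodicity of $\bw'$ to a uniform bound on the gaps $n_{i+1}-n_i$. Granting such a bound $C$, for each $i$ the palindrome $\bw[p+1..n_{i+1}]$ contains the palindrome $\bw[p+1..n_i]$ as a prefix, and hence by symmetry of the outer palindrome also as a suffix; thus $\bw[p+1..n_{i+1}]$ admits $n_{i+1}-n_i \leq C$ as a period. By pigeonhole some $d \in \{1,\ldots,C\}$ equals $n_{i+1}-n_i$ for infinitely many $i$; along that subsequence the palindromic prefixes of $\bw'$ share the common period $d$ and grow unboundedly, so $\bw'$ is periodic with period $d$, and $\bw=\bw[1..p]\,\bw'$ is ultimately periodic, as desired.

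The main obstacle is therefore establishing the gap bound. For $n$ in an alleged large gap $(n_i,n_{i+1})$, one has $f(n)\neq p$: either $f(n)<p$, so that the longest palindromic suffix of $\bw[1..n]$ starts strictly before position $p+1$ and reaches into the fixed prefix $\bw[1..p]$; or $f(n)>p$, in which case $\bw[p+1..n]$ is itself not a palindrome. In either case the $f$-orbit of $n$ must still reach $0$ within $K$ steps while avoiding $p$. Exploiting that $\bw[1..p]$ is a single fixed finite word, together with a careful analysis of how the palindromic prefix $\bw[p+1..n_i]$ is perturbed letter by letter as $n$ traverses the gap, one should show that only finitely many local configurations are compatible with the global constraint $\rgpal{\bw[1..n]} \leq K$ before a new palindromic continuation $\bw[p+1..n_{i+1}]$ is forced to appear; this would yield the uniform gap bound. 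Formalising this case analysis—presumably by tracking an invariant on the shape of the right greedy decomposition in the gap, or by an auxiliary induction on the depth $K$ that passes through the finite prefix $\bw[1..p]$—is the technical core of the proof, and I expect it to be the hardest step.
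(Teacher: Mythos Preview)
Your proposal has a genuine gap: the uniform bound on $n_{i+1}-n_i$ is precisely the heart of the matter, and you have not proved it. Your last paragraph is a hope rather than an argument. The phrase ``only finitely many local configurations are compatible'' hides the whole difficulty: as $n$ runs through a long interval $(n_i,n_{i+1})$, the palindromes appearing in the right greedy decomposition of $\bw[1..n]$ can themselves be arbitrarily long, so there is no obvious finiteness to exploit. Note also a slip in your sketch: you assert that the $f$-orbit of such an $n$ avoids $p$, but you have only shown $f(n)\neq p$; nothing prevents $f^{2}(n)=p$ or $f^{3}(n)=p$. So even the combinatorial constraint you plan to exploit is weaker than stated. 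As written, the argument reduces the theorem to a statement (bounded gaps between the $n_i$) that is essentially equivalent in difficulty to the theorem itself.

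The paper's proof avoids this trap entirely by arguing in the forward direction rather than the contrapositive. After passing to a suffix $\bw'\in\SWHIMPP$ (your tree/pigeonhole step is morally the same as the paper's Lemma~\ref{L:ex2.1(2)}), the paper does \emph{not} try to bound gaps between palindromic prefixes. Instead it builds, by induction, prefixes $p=\pi_1\cdots\pi_k$ whose right greedy decomposition has length exactly $k$, for every $k$. The inductive step has three ingredients: (i) by Morse--Hedlund, since $\bw$ is not ultimately periodic one can find a factor $v$ of length $|p|+1$ that does not occur in a slightly longer prefix; (ii) using $\bw'\in\SWHIMPP$, one finds a palindrome $\pi_{k+1}$, prefix of the appropriate suffix of $\bw$, long enough to contain the \emph{first} occurrence of $v$; (iii) because that occurrence of $v$ is the first one, any palindromic suffix of $p\pi_{k+1}$ longer than $\pi_{k+1}$ would force an earlier occurrence of $v$, so $\pi_{k+1}$ is indeed the longest palindromic suffix and the right greedy length increases by one. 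This ``fresh factor plus first occurrence'' trick is the missing idea; it is what lets the paper control the right greedy decomposition directly instead of inferring periodicity from gap bounds.
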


\begin{proof}
Assume first that the word $\bw$ has no suffix in $\SWHIMPP$.
By Lemma~\ref{L:ex2.1(2)}, the palindromic lengths of prefixes of $\bw$ are unbounded.
Property~\ref{P:property1} implies that the right greedy palindromic lengths of prefixes of $\bw$ are unbounded.

Assume from now on that
$\bw = u \bw'$ with $u$ some finite word and $\bw' \in \SWHIMPP$.

Let $(\pi_1, \ldots, \pi_n)$ be 
the right greedy palindromic decomposition of the word $u$.
Let recall that this means that $n = \rgpal{u}$, $u = \pi_1 \cdots \pi_n$ and, 
for $i$, $1 \leq i \leq n$, $\pi_i$ is the longest palindromic suffix of 
$\pi_1 \cdots \pi_i$.

More generally, assume we have found a prefix $p$ of $\bw$ with $|p| \geq |u|$ such that its right greedy palindromic decomposition is in the form $(\pi_1, \ldots, \pi_k)$ with $k = \rgpal{p} \geq n$.
Here we mean that this decomposition begins with $(\pi_1, \ldots, \pi_n)$, the right greedy palindromic decomposition of $u$. In three steps, we are going to show how to find a palindrome $\pi_{k+1}$ such that $p\pi_{k+1}$ is a prefix of $\bw$ and $(\pi_1, \ldots, \pi_{k+1})$ is the right greedy palindromic decomposition of $p\pi_{k+1}$. Thus $\rgpal{p\pi_{k+1}} = k+1$. The proof of the theorem follows this construction by induction.

\textbf{Step 1:} \textit{Marking a first occurrence of a new factor $v$}.

Let $x$ be the word of length $|\pi_1\cdots \pi_k|+1$ such that $\pi_1\cdots \pi_k x$ is a prefix of $\bw$.
This word $\pi_1\cdots \pi_k x$ contains at most $|x|$ different factors of length $|x|$.
As $\bw$ is not ultimately periodic, by the celebrated Morse-Hedlund theorem 
(see, \textit{e.g.}, \cite[Th. 10.2.6]{AlloucheShallit2003book}), the word 
$\bw$ has at least $|x|+1$ different factors of length $|x|$.
Hence there exists a word $v$ of length $|x|$ that does not occur in $\pi_1\cdots \pi_k x$.

\textbf{Step 2:} \textit{Constructing $\pi_{k+1}$}.

Let $y$ be the smallest word such that $\pi_1\cdots \pi_k y v$ is a prefix of $\bw$ (remarks: the word $v$ may overlap the word $x$; as $x \neq v$, $|y| \geq 1$).
Let also $\bw''$ denote the word such that $\bw = \pi_1\cdots \pi_k y v \bw''$.
As $|\pi_1\cdots \pi_k| \geq |u|$,
$yv\bw''$ is a suffix of $\bw'$.
As $\bw' \in \SWHIMPP$, $yv\bw''$ also belongs to $\SWHIMPP$.
Hence there exists a palindrome $\pi_{k+1}$ which is a prefix of $yv\bw''$ of length at least $|yv|$.

\textbf{Step 3:} \textit{The word $\pi_{k+1}$ is the longest palindromic suffix of $\pi_1\cdots \pi_k\pi_{k+1}$}.

Let $\pi$ be the longest palindromic suffix of $\pi_1\cdots \pi_{k+1}$ ($|\pi| \geq |\pi_{k+1}|$) and let
$z$ be the word such that $\pi_1\cdots \pi_{k+1} = z \pi$. 
As $\pi_{k+1}$ ends with $\tilde{v}\tilde{y}$, the palindrome $\pi$ also ends with $\tilde{v}\tilde{y}$.
Thus the word $zyv$ is a prefix of $z \pi$ and so of $\bw$. 
By construction, $\pi_1\cdots \pi_k y$ is the smallest word such that 
$\pi_1\cdots \pi_k y v$ is a prefix of $\bw$.
Hence,  $|z| \geq |\pi_1\cdots \pi_k|$ and so $|\pi| \leq |\pi_{k+1}|$: $\pi = \pi_{k+1}$.
\end{proof}

\section{\label{sec:LGPAL}Left greedy palindromic length}

In this section, we prove next result, that is, Conjecture~\ref{conj1} in the special case where the palindromic length is replaced with the left greedy palindromic length. 
For an infinite word $\bw$, we let $\maxlgpalpref{\bw}$ denote the supremum of the set $\{\lgpal{p} \mid p \text{ prefix of } \bw\}$.

\begin{thrm}
\label{T:LGPal}
Let $\bw$ be a non ultimately periodic infinite word.
Then $\maxlgpalpref{\bw}$ is infinite, that is, the left greedy palindromic lengths of prefixes of $\bw$ are unbounded.
\end{thrm}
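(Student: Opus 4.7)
The plan is to follow the strategy used in the proof of Theorem~\ref{T:RGPal}, adapting it to the left greedy setting. As in that proof, I would first dispose of the case where $\bw$ has no suffix in $\SWHIMPP$: by Lemma~\ref{L:ex2.1(2)} (in contrapositive form) the palindromic lengths of prefixes of $\bw$ are then unbounded, and Property~\ref{P:property1} yields that $\maxlgpalpref{\bw}$ is infinite as well.

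The main case is when $\bw = u \bw'$ for some finite word $u$ and some $\bw' \in \SWHIMPP$. A preliminary observation that I would record and use repeatedly is that every suffix of $\bw'$ is itself in $\SWHIMPP$: if $\rho$ is a palindromic prefix of $\bw'$ of length $L \geq 2d+\ell$, then the centered subword $\rho[d+1..L-d]$ is itself a palindrome of length $L-2d\geq \ell$ and is a palindromic prefix of $\bw'[d+1..\infty]$; letting $L$ grow gives arbitrarily long palindromic prefixes of $\bw'[d+1..\infty]$.

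Then, mirroring the three-step inductive construction of Theorem~\ref{T:RGPal}, I would construct prefixes $p_k$ of $\bw$ with $|p_k| \geq |u|$ whose left greedy decomposition $(\pi_1, \ldots, \pi_k)$ has length $k$. At each step, I would apply the Morse--Hedlund theorem to produce a factor $v$ of appropriate length that does not occur in a sufficiently long initial portion of $\bw$, let $y$ be the shortest word such that $p_k y v$ is a prefix of $\bw$, and use the observation above applied to the suffix $\bw[|p_k|+1..\infty]$ of $\bw'$ to obtain a palindromic prefix $\pi_{k+1}$ of $\bw[|p_k|+1..\infty]$ of length at least $|yv|$. Setting $p_{k+1} = p_k \pi_{k+1}$ would give the new prefix.

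The main obstacle, where more care is needed than in the right greedy argument, is verifying that $(\pi_1, \ldots, \pi_k, \pi_{k+1})$ is actually the left greedy decomposition of $p_{k+1}$. Whereas for the right greedy only the last palindrome had to be certified as the longest palindromic suffix of $p_{k+1}$, here I must show, for every $i \in \{1, \ldots, k\}$, that $\pi_i$ remains the longest palindromic prefix of the residual $\pi_i \pi_{i+1} \cdots \pi_{k+1}$. A hypothetical longer palindromic prefix at level $i$ must extend into $\pi_{k+1}$, and its palindromic symmetry combined with the fact that $\pi_{k+1}$ begins with $yv$ would force the factor $\tilde{v}\tilde{y}$ to occur in $\bw$ before position $|p_k|+1$. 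The construction of $v$ should therefore be strengthened, via a refined Morse--Hedlund style counting argument exploiting the non-ultimate-periodicity of $\bw$, to ensure that neither $v$ nor its reverse $\tilde{v}$ occurs in $\bw[1..|p_k|+|v|]$; this would deliver the required contradiction.
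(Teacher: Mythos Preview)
Your plan has a genuine gap at the crucial verification step, and the gap does not seem repairable along the lines you sketch.

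Suppose $(\pi_1,\ldots,\pi_k)$ is the left greedy decomposition of $p_k$ and you append a palindrome $\pi_{k+1}$ starting with $yv$. A hypothetical longer palindromic prefix $\pi'$ of $\pi_i\cdots\pi_{k+1}$ must overshoot $\pi_i\cdots\pi_k$ by some nonempty prefix $q$ of $\pi_{k+1}$, and palindromicity of $\pi'$ tells you that $\tilde q$ is the length-$|q|$ prefix of $\pi_i\cdots\pi_{k+1}$. Two things go wrong. First, $|q|$ can be arbitrarily small (even $|q|=1$ is possible), so $\tilde q$ need not contain $\tilde v$ at all, and nothing about $v$ or $\tilde v$ is forced. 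Second, even when $|q|\ge|yv|$, the copy of $\tilde v$ inside $\tilde q$ ends at position $|\pi_1\cdots\pi_{i-1}|+|q|-|y|$ in $\bw$; since $|q|$ may be as large as $|\pi_{k+1}|$ and you have no upper bound on $|\pi_{k+1}|$ in terms of $|p_k|$, this position can lie far beyond your forbidden window $\bw[1..|p_k|+|v|]$. Excluding both $v$ and $\tilde v$ from that window therefore does not yield the contradiction you want. Nor can you simply take $\pi_{k+1}$ ``not too long'': the suffix $\bw[|p_k|+1..\infty]$ may have no palindromic prefix of length between $|yv|$ and any prescribed bound.

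This asymmetry with the right greedy case is exactly why the paper does \emph{not} mirror the proof of Theorem~\ref{T:RGPal}. In the right greedy setting, appending $\pi_{k+1}$ requires a single check at the right end; in the left greedy setting, you must recertify all $k$ left-end conditions, and extending on the right can create new long palindromic prefixes at any level $i$. The paper instead argues by induction on $\maxlgpalpref{\bw}$: after reducing to $\bw\in\SWHIMPP$ via Lemma~\ref{L:reduction_lgpl}, it uses Lemma~\ref{X1} to produce palindromic factors $(uv)^ku$ (with $uv$ primitive) for all $k$, locates the first occurrence of each relative to consecutive palindromic prefixes $\pi_i,\pi_{i+1}$ of $\bw$, shows (Fact~\ref{F:R2}) that infinitely often $2|\pi_i|<|\pi_{i+1}|$, and then carries out a three-case analysis (Facts~\ref{F:fact1}--\ref{F:fact3}) which in each case exhibits prefixes whose left greedy length drops by $2$ after peeling off two explicit palindromes, contradicting minimality of $\maxlgpalpref{\bw}$.
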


The proof of this result appears to be much more difficult than the proof of Theorem~\ref{T:RGPal}, its analogue with the right greedy palindromic length. 
Section~\ref{subsec:mainidea} explains the main idea of our strategy and the difficulties to apply it. This section also provides intermediary tools.
Section~\ref{subsec:proofThLGPAL} contains the proof of Theorem~\ref{T:LGPal}.

Before going further, let us explain that we can reduce the proof of this theorem to words in $\SWHIMPP$, that is, to words beginning with infinitely many palindromes. Let $\bw$ be a non ultimately periodic infinite word.
By Property~\ref{P:property1} and Lemma~\ref{L:ex2.1(2)}, if $\bw$ has no suffix in $\SWHIMPP$, then the palindromic lengths of prefixes of $\bw$ are unbounded.

Thus we can assume that $\bw$ has a suffix in $\SWHIMPP$.
Next lemma shows that, when proving Theorem~\ref{T:LGPal}, one can replace $\bw$ by one of its
suffixes in $\SWHIMPP$.

\begin{lmm}
\label{L:reduction_lgpl}
Let $\bw$ be an infinite word having a suffix in $\SWHIMPP$.
There exist palindromes $\pi_1, \ldots, \pi_k$ and an infinite word $\bw'$ in $\SWHIMPP$ such that $\bw = \pi_1 \cdots\pi_k \bw'$ and, for all $1 \leq i \leq k$,
$\pi_i$ is the longest palindromic prefix of $\pi_i\cdots \pi_k \bw'$. Moreover for any prefix $p$ of $\bw''$, 
$\lgpal{p} = \lgpal{\pi_1\cdots \pi_k p} - k$.
\end{lmm}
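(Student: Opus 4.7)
The plan is to build the decomposition via a halting-greedy iteration and then verify the two asserted properties. Define $\bw^{(0)} := \bw$; inductively at stage $i$: if $\bw^{(i)} \in \SWHIMPP$, halt with $k := i$ and $\bw' := \bw^{(i)}$; otherwise $\bw^{(i)}$ has only finitely many palindromic prefixes, hence admits a well-defined finite longest palindromic prefix $\pi_{i+1}$, and I set $\bw^{(i+1)} := \pi_{i+1}^{-1} \bw^{(i)}$. By construction, once the iteration halts we have $\bw = \pi_1 \cdots \pi_k \bw'$ and each $\pi_i$ is the longest palindromic prefix of $\bw^{(i-1)} = \pi_i \cdots \pi_k \bw'$, which is exactly the condition demanded by the statement.

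The core step is to show the iteration halts in finitely many stages. Let $\bw_0 \in \SWHIMPP$ be a suffix of $\bw$, write $\bw = u \bw_0$ and $n := |u|$; I would take $n$ minimal among such decompositions. Since each $|\pi_i| \geq 1$, the cumulative lengths $L_i := |\pi_1 \cdots \pi_i|$ strictly increase and diverge, so there is a least index $i_0$ with $L_{i_0} \geq n$. If $L_{i_0} = n$, then $\bw^{(i_0)} = \bw_0 \in \SWHIMPP$ and the iteration halts. The delicate case is \emph{overshoot}: $L_{i_0} > n$ and $L_{i_0-1} < n$, so the palindrome $\pi_{i_0}$ straddles the boundary between $u$ and $\bw_0$, and the remainder $\bw^{(i_0)}$ is only a proper suffix of $\bw_0$ which a priori need not lie in $\SWHIMPP$.

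The main obstacle of the proof is to preclude an infinite chain of overshoots. I would attack this by exploiting the palindromicity of $\pi_{i_0} = v s$ (with $v$ a nonempty suffix of $u$ and $s$ a nonempty prefix of $\bw_0$) together with the infinitely many palindromic prefixes $\sigma_j$ of $\bw_0$ whose lengths tend to infinity. On the one hand, if $v \sigma_j$ were a palindrome for infinitely many $j$, then $v \bw_0 = \bw^{(i_0-1)}$ would lie in $\SWHIMPP$, contradicting the very choice of $\pi_{i_0}$ as a finite longest palindromic prefix. On the other hand, the existence of the specific palindrome $\pi_{i_0} = v s$ together with the palindromes $\sigma_j$ (whose structure is rigid as prefixes of the fixed word $\bw_0$) forces strong combinatorial relations linking $v$, $\widetilde v$, and long prefixes of $\bw_0$. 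Combining these constraints with an induction on a suitable decreasing parameter (for instance the length of the overshoot $L_{i_0} - n$, or the number of successive overshoots possible before reaching a suffix in $\SWHIMPP$) should yield termination after finitely many further steps.

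Finally, the ``moreover'' identity follows immediately from the uniqueness of the greedy decomposition. Fix a prefix $p$ of $\bw'$. For every $i$ with $1 \leq i \leq k$, any palindromic prefix of the finite word $\pi_i \cdots \pi_k p$ is also a palindromic prefix of the infinite word $\pi_i \cdots \pi_k \bw' = \bw^{(i-1)}$; since $\bw^{(i-1)} \notin \SWHIMPP$ and its longest palindromic prefix is $\pi_i$, the longest palindromic prefix of the finite word $\pi_i \cdots \pi_k p$ is again $\pi_i$. Iterating this observation shows that the greedy decomposition of $\pi_1 \cdots \pi_k p$ is $\pi_1, \ldots, \pi_k$ followed by the greedy decomposition of $p$, whence $\lgpal{\pi_1 \cdots \pi_k p} = k + \lgpal{p}$.
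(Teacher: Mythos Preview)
Your construction and your proof of the ``moreover'' identity are correct and match the paper's approach exactly. The gap is in termination: your treatment of the overshoot case is only a sketch (``combining these constraints with an induction on a suitable decreasing parameter \ldots\ should yield termination'') and is not a proof.

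The missing observation, which the paper uses, is that \emph{every suffix of a word in $\SWHIMPP$ is again in $\SWHIMPP$}. Indeed, if $a\bx \in \SWHIMPP$ then infinitely many of its palindromic prefixes have length at least $2$ and hence are of the form $a\pi a$ with $\pi$ a palindrome; each such $\pi$ is a palindromic prefix of $\bx$, so $\bx \in \SWHIMPP$, and induction on the length removed finishes the claim. With this fact in hand there is no delicate case at all: as soon as $L_{i_0} \geq n$, the word $\bw^{(i_0)}$ is a suffix of $\bw_0 \in \SWHIMPP$ and therefore lies in $\SWHIMPP$, so the iteration halts at step $i_0$ regardless of whether $L_{i_0} = n$ or $L_{i_0} > n$. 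Your attempt to control successive overshoots via the palindromes $\sigma_j$ is unnecessary, and as written it does not converge to an argument.
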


\begin{proof}
Observe that a word $\bw$ has a (unique) largest palindromic prefix
if and only if
it has finitely many palindromic prefixes, that is,
it does not belong to $\SWHIMPP$.

When $\bw \not\in \SWHIMPP$, the word $\bw$ can be decomposed 
$\bw = \pi_1\bw_1$ with $\pi_1$ the largest palindromic prefix of $\bw$ and $\bw_1$ an infinite word.
Moreover for any prefix of $\bw_1$,
$\lgpal{p} = \lgpal{\pi_1p}-1$.

Iterating this process, one can find palindromes $\pi_1$, \ldots, $\pi_k$
and infinite words $\bw_1$, \ldots, $\bw_k$ such that:
$\bw = \pi_1\cdots \pi_i \bw_i$ for all $i$, $1 \leq i \leq k$;
$\bw_i \not\in \SWHIMPP$, for all $i$, $1 \leq i \leq k-1$;
$\pi_j$ is the longest palindromic prefix of $\pi_j \cdots \pi_i\bw_i$, 
for all $i$, $j$, $1 \leq j \leq i \leq k$;
for any prefix $p$ of $\bw_i$, $\lgpal{p} = \lgpal{\pi_1\cdots \pi_ip}-i$.
The iteration ends when $\bw_k$ belongs to $\SWHIMPP$. 
This must occur as $\bw$ has a suffix in $\SWHIMPP$ and, as any suffix of a word in $\SWHIMPP$ belongs to $\SWHIMPP$. Taking $\bw' = \bw_k$ ends the proof of the lemma.
\end{proof}

\subsection{Tools}

For a word in $\SWHIMPP$, we will need to consider the 
\textit{length increasing sequence of palindromic prefixes} of $\bw$, 
that is the sequence of palindromic prefixes  $(\pi_n)_{n \geq 1}$ of $\bw$ such that all palindromic prefixes of $\bw$ occur in the sequence and $(|\pi_n|)_{n \geq 0}$ is (strictly) increasing. We will use Lemma~\ref{L:tool2} that characterizes periodicity in the context of words in $\SWHIMPP$. It will be useful in our last section and may be important in a more general context. For the proof, we need next result.

\begin{lmm}[\mbox{\cite[Th. 4]{BrlekHamelNivatReutenauer2004IJFCS}}]
\label{L:ex2.1(3)}
An infinite periodic word $w^\omega$ with $w$ primitive contains infinitely many palindromes if and only if $w$ is the product of two palindromes.
\end{lmm}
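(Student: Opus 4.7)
The plan is to handle the two directions separately. The easy direction is $(\Leftarrow)$: if $w = p_1 p_2$ with $p_1, p_2$ palindromes (possibly empty), then for every $n \geq 0$ the prefix $(p_1 p_2)^n p_1$ of $w^\omega$ reverses to
\[
\widetilde{(p_1 p_2)^n p_1} = \tilde{p_1}(\tilde{p_2}\tilde{p_1})^n = p_1(p_2 p_1)^n = (p_1 p_2)^n p_1,
\]
so it is itself a palindrome. Varying $n$ yields infinitely many palindromic factors of pairwise distinct lengths.

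For the direction $(\Rightarrow)$, set $p = |w|$ and observe that primitivity of $w$ implies that $w^\omega$ has exactly $p$ distinct factors of each length $n \geq p$ (indexed by the starting position modulo $p$), and in particular only finitely many factors of any bounded length. So among the infinitely many palindromic factors of $w^\omega$, I can pick one palindrome $\pi$ with $|\pi| \geq p$. This $\pi$ contains some factor $w_0$ of length exactly $p$, which must be a conjugate of $w$; reversing, $\tilde{w_0}$ occurs inside $\tilde{\pi} = \pi$, which is a factor of $w^\omega$. Therefore $\tilde{w_0}$ is a length-$p$ factor of $w^\omega$, hence itself a conjugate of $w$. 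Since $\tilde{w_0}$ is also a conjugate of $\tilde{w}$ (if $w_0 = vu$ with $w = uv$, then $\tilde{w_0} = \tilde{u}\tilde{v}$ is the cyclic swap of $\tilde{w} = \tilde{v}\tilde{u}$), transitivity of the conjugation relation gives that $\tilde{w}$ is a conjugate of $w$.

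This yields a decomposition $w = yx$ with $\tilde{w} = xy$. Expanding $\widetilde{yx} = \tilde{x}\tilde{y}$ and comparing with $xy$ letter by letter, the first $|x|$ positions force $\tilde{x} = x$ and the remaining $|y|$ positions force $\tilde{y} = y$. Both $x$ and $y$ are palindromes, so $w = yx$ is a product of two palindromes, as required.

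I do not foresee any serious obstacle. The main conceptual step is translating ``$w^\omega$ contains a palindromic factor of length at least $p$'' into ``$\tilde{w}$ is a cyclic conjugate of $w$'', which is exactly what the finite-factor-complexity pigeonhole plus the reversal-of-a-conjugate argument above delivers; the final identification of $x$ and $y$ as palindromes is a short letter-by-letter check once the conjugacy is in hand, and yields as a by-product the classical characterization of primitive words that are conjugate to their reversal.
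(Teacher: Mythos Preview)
The paper does not give its own proof of this lemma; it simply cites \cite[Th.~4]{BrlekHamelNivatReutenauer2004IJFCS}. Your argument is correct and is essentially the classical proof of that result: the backward direction is immediate, and the forward direction pivots on the observation that a palindromic factor of length at least $|w|$ forces $\tilde{w}$ to be a conjugate of $w$, which in turn forces a decomposition of $w$ into two palindromes via the identity $\tilde{x}\tilde{y} = xy$ with matching block lengths. There is nothing to correct.
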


\begin{lmm}
\label{L:tool2}
Let $\bw$ be an infinite word in $\SWHIMPP$.
Let $(\pi_i)_{i \geq 0}$ be its sequence of length increasing palindromic prefixes. 
The sequence $(|\pi_{i+1}|-|\pi_i|)_{i \geq 0}$ is not decreasing.
Moreover this sequence is bounded if and only if $\bw$ is periodic.
\end{lmm}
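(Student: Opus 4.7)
My plan is to prove the two claims in turn; both rely on standard manipulations with palindromic borders.

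For the monotonicity of $(|\pi_{i+1}|-|\pi_i|)_{i\geq 0}$, I would use the fact that for every $i$, $\pi_i$ is a border of $\pi_{i+1}$: it is a prefix by the prefix ordering on palindromic prefixes of $\bw$, and it is also a suffix because $\pi_{i+1}$ is a palindrome and $\pi_i=\widetilde{\pi_i}$. Assume for contradiction that $|\pi_{i+2}|-|\pi_{i+1}|<|\pi_{i+1}|-|\pi_i|$. Setting $m=|\pi_{i+2}|-|\pi_{i+1}|$, the prefix and suffix occurrences of $\pi_{i+1}$ inside $\pi_{i+2}$ overlap on a common factor $q$ of length $|\pi_{i+1}|-m>|\pi_i|$. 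This word $q$ is simultaneously a prefix and a suffix of $\pi_{i+1}$; since $\pi_{i+1}$ is a palindrome, its suffix of length $|q|$ is the reverse of its prefix of the same length, so $q=\tilde q$. Hence $q$ is a palindromic prefix of $\bw$ with $|\pi_i|<|q|<|\pi_{i+1}|$, contradicting the consecutivity of $\pi_i$ and $\pi_{i+1}$ in the length-increasing sequence.

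For the equivalence, if $\bw$ is periodic I write $\bw=u^\omega$ with $u$ primitive and apply Lemma~\ref{L:ex2.1(3)} to decompose $u=\alpha\beta$ with $\alpha$ and $\beta$ palindromes. A direct check shows that $\alpha(\beta\alpha)^k$ is a palindromic prefix of $\bw$ for every $k\geq 0$. These palindromic prefixes form an infinite arithmetic subsequence of lengths with common difference $|u|$, and since the full length-increasing sequence of palindromic prefixes of $\bw$ refines this subsequence, each of its gaps is at most $|u|$, hence bounded. Conversely, assume the gap sequence is bounded. Combined with the monotonicity proved above, it stabilizes at some value $d$ from some index $N$ onwards. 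For each $i\geq N$, $\pi_i$ is a border of $\pi_{i+1}$ of length $|\pi_{i+1}|-d$, so $\pi_{i+1}$ has period $d$. Because $|\pi_i|\to\infty$ and each such $\pi_i$ is a prefix of $\bw$ with period $d$, the equalities $\bw[k]=\bw[k+d]$ propagate to every $k\geq 1$, and therefore $\bw=(\bw[1..d])^\omega$ is (purely) periodic.

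The main obstacle is the overlap-to-palindrome argument at the core of the monotonicity step; once that is in place, the two implications of the equivalence are short, using Lemma~\ref{L:ex2.1(3)} for one direction and the standard border/period correspondence for the other.
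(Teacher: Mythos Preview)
Your proof is correct and follows essentially the same route as the paper's. The only cosmetic difference is that, for the monotonicity step, the paper invokes the cited fact that if $u=vs$ with $u,v$ palindromes then $|s|$ is a period of $u$ (and then extracts the palindromic border of length $|\pi_{i+1}|-m$ from that period), whereas you obtain the same palindromic border directly via the overlap of the prefix and suffix copies of $\pi_{i+1}$ inside $\pi_{i+2}$; the two formulations are equivalent, and the remaining two implications match the paper's argument almost verbatim.
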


\begin{proof}
Assume by contradiction that there exists an integer $j$ such that 
$|\pi_{j+2}|-|\pi_{j+1}| < |\pi_{j+1}|-|\pi_{j}|$.
By \cite[Lem. 2]{FridPuzyninaZamboni2013AAM}
it is known that if $u = v s$ for two palindromes $u$ and $v$ and a word $s$, then $|s|$ is a period of $u$ 
(here an integer $p$ is a \textit{period} of a finite word $w$, 
if for all integers $i$ between $1$ and $|w|-p$, $w[i] = w[i+p]$).
Let $p = |\pi_{j+2}|-|\pi_{j+1}|$.
From the previous recalled result, $p$ is a period of $\pi_{j+2}$ and so also of $\pi_{j+1}$.
As $p < |\pi_{j+1}| - |\pi_{j}| < |\pi_{j+1}|$, we get that the prefix of length $|\pi_{j+1}|-p$ of $\pi_{j+1}$ is equal to the suffix of same length of this word. Let $\pi$ be this word. As $\pi_{j+1}$ is a palindrome, also $\pi$ is a palindrome. 
Observe that $|\pi| = |\pi_{j+1}| - p > |\pi_{j+1}| - (|\pi_{j+1}| - |\pi_{j}|) = |\pi_{j}|$. Hence $|\pi_{j}| < |\pi| < |\pi_{j+1}|$: we have found a contradiction with the definition of the sequence $(\pi_i)_{i \geq 0}$.

When the sequence $(|\pi_{i+1}|-|\pi_i|)_{i \geq 0}$ is bounded, it is ultimately periodic.
By \cite[Lem. 2]{FridPuzyninaZamboni2013AAM}, its ultimate value $p$ is a period of all palindromic sufficiently large prefixes $\pi_i$ of $\bw$. Consequently, $\bw$ has period $p$. 

Assume $\bw$ is periodic. By Lemma~\ref{L:ex2.1(3)}, $\bw = (uv)^\omega$ with $u$ and $v$ two palindromes. 
The sequence $((uv)^ju)_{j \geq 1}$ is a subsequence of $(\pi_i)_{i \geq 0}$. So the sequence $(|\pi_{i+1}|-|\pi_i|)_{i \geq 0}$ is bounded.
\end{proof}

\subsection{\label{subsec:mainidea}Main ideas on the proof of Theorem~\ref{T:LGPal}}

The main idea of our proof of Theorem~\ref{T:LGPal} is to find, when possible, situations like this of next lemma.
When a word $p$ is a prefix of a word $u$, we let $p^{-1}u$ denote the suffix $s$ of $u$
such that $u = ps$.
When a word $s$ is a prefix of a word $u$, we let $us^{-1}$ denote the prefix $p$ of $u$
such that $u = ps$.

\begin{lmm}
\label{L:main idea}
Let $\bw \in \SWHIMPP$ and $(\pi_i)_{i \geq 1}$ be the length increasing sequence of palindromic prefixes of $\bw$.
Assume there exist infinitely many integers $i$ for which there exists a palindrome $p_i$ which 
is a factor of $\pi_{i+1}$ but does not occur in nor does not overlap the prefix $\pi_i$ of $\pi_{i+1}$.
The left greedy palindromic lengths of prefixes of $\bw$ are not bounded.
\end{lmm}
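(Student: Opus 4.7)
My plan is to construct, for each $n \geq 1$, a prefix of $\bw$ whose left greedy palindromic decomposition has length at least $n$, exploiting the infinite supply of good indices $i_1 < i_2 < \cdots$ granted by the hypothesis. I would first record two structural consequences of any good index $i$, where the palindrome $p_i$ occurs in $\pi_{i+1}$ at positions $[c_i+1,\, c_i+|p_i|]$ with $c_i \geq |\pi_i|$: (a) $c_i + |p_i| < |\pi_{i+1}|$, because otherwise $p_i$ would be a palindromic suffix of the palindrome $\pi_{i+1}$ and hence also a prefix of $\pi_{i+1}$, which would overlap $\pi_i$ (since $|\pi_i|\geq 1$), contradicting the hypothesis; (b) $|\pi_{i+1}| > 2|\pi_i|$, because by palindromicity the given occurrence of $p_i$ has a mirror occurrence at positions $[|\pi_{i+1}|-c_i-|p_i|+1,\ |\pi_{i+1}|-c_i]$ to which the hypothesis also applies, and requiring this mirror to start beyond $\pi_i$ combined with $c_i \geq |\pi_i|$ yields $|\pi_{i+1}| \geq 2|\pi_i| + |p_i|$.

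The basic building block is as follows. For a good index $i$, let $q_i$ denote the prefix of $\bw$ of length $c_i + |p_i|$. Fact (a) together with $c_i \geq |\pi_i|$ gives $|\pi_i| \leq c_i < |q_i| < |\pi_{i+1}|$, and since $(\pi_m)_{m \geq 1}$ enumerates \emph{all} palindromic prefixes of $\bw$ in strictly increasing length, the longest palindromic prefix of $q_i$ is exactly $\pi_i$. Therefore the greedy left decomposition of $q_i$ begins with $\pi_i$, so that
\[
\lgpal{q_i} \;=\; 1 + \lgpal{r_i}, \qquad r_i := \pi_i^{-1}q_i,
\]
where $r_i$ is a factor of $\bw$ ending in the palindrome $p_i$.

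To finish, I would iterate this construction. Using (b) and the identity $\pi_{i+1} = \widetilde{s_i}\pi_i$ with $s_i := \pi_i^{-1}\pi_{i+1}$ of length $|s_i| > |\pi_i|$, there is room inside $\widetilde{s_i}$ to choose the occurrence position $c_i$ so that the residue $r_i$ begins with a factor that mimics the pattern of a previously constructed $q_j$ for some earlier good index $j < i$ with $\lgpal{q_j} \geq n-1$. Choosing the good indices $i_1 < i_2 < \cdots$ and corresponding occurrence positions inductively would then produce prefixes $q_{i_n}$ with $\lgpal{q_{i_n}} \geq n$, hence $\maxlgpalpref{\bw} = \infty$.

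The main obstacle --- and the delicate technical step --- is that $r_i$ is a \emph{factor}, not a prefix, of $\bw$, so its greedy decomposition is governed by the palindromic prefixes of $r_i$ itself rather than by the $\pi_m$'s. Making the above ``copy'' argument rigorous requires using the palindromic structure of $\pi_{i+1}$ (the identity $\pi_{i+1}=\widetilde{s_i}\pi_i$ together with the palindromicity of $\pi_i$) to identify the longest palindromic prefix of $r_i$ at each stage with a palindrome already controlled by the induction, so that it does not swallow too much of $r_i$ and short-circuit the recursion.
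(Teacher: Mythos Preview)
Your structural observations (a) and (b) are correct and are essentially the opening moves of the paper's proof as well. The first greedy step --- that the longest palindromic prefix of $q_i$ is exactly $\pi_i$, so $\lgpal{q_i}=1+\lgpal{r_i}$ --- is also fine. The proof breaks down precisely where you flag it: you have no mechanism to control $\lgpal{r_i}$.

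The missing idea is this. From (b) you already have $|\pi_{i+1}|>2|\pi_i|$, so $\pi_{i+1}=\pi_i\,\pi\,\pi_i$ for a \emph{palindrome} $\pi$ (the middle block of a palindrome flanked by equal palindromic ends is itself a palindrome). The key technical step, which you do not attempt, is to show that $\pi$ is the \emph{longest} palindromic prefix of $\pi\pi_i$. This uses the first-occurrence property of $p_i$: if we take the occurrence of $p_i$ in $\pi_i^{-1}\pi_{i+1}=xp_iz\pi_i$ to be the leftmost one, and if some palindromic prefix $\pi'$ of $\pi\pi_i$ were longer than $\pi=xp_iz$, then $\pi'$ would end with $\widetilde{xp_i}=p_i\tilde x$, forcing an earlier occurrence of $p_i$ and a contradiction. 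Once this is established, for every \emph{proper} prefix $p$ of $\pi_i$ one gets
\[
\lgpal{\pi_i\pi p}\;=\;2+\lgpal{p},
\]
and the crucial point is that $p$ is again a \emph{prefix of $\bw$}. Thus two greedy steps take you from a prefix of $\bw$ back to a shorter prefix of $\bw$, and iterating over the infinitely many good indices gives prefixes with $\lgpal{\cdot}\ge 2n$ for every $n$.

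Your proposed alternative --- ``choose $c_i$ so that $r_i$ mimics a previous $q_j$'' --- does not work as stated: there is no reason the factor $r_i=\pi_i^{-1}q_i$ should begin like any earlier prefix $q_j$ of $\bw$, and you give no argument linking the palindromic prefixes of $r_i$ to those of $\bw$. The paper's decomposition $\pi_{i+1}=\pi_i\pi\pi_i$ is exactly what supplies that link: after peeling off $\pi_i$ and then $\pi$, you land back on a genuine prefix of $\bw$, which is what makes the induction go through.
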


\begin{proof}
Let $i$ be an integer and $p_i$ be a palindrome such that $p_i$ is a factor of $\pi_{i+1}$ 
that does not occur in nor does not overlap the prefix $\pi_i$ of $\pi_{i+1}$.
There exist words $x$ and $y$ such that $\pi_i^{-1}\pi_{i+1} = xp_i y$.
The word $\tilde{y}\widetilde{p_i}\tilde{x} = \tilde{y}p_i \tilde{x}$ is a prefix of $\pi_{i+1}$.
By hypothesis, $|\tilde{y}| \geq |\pi_i|$ (otherwise $p_i$ would be a factor of the prefix $\pi_i$ of $\pi_{i+1}$, or would overlap it).
Hence $y = z \pi_i$ for a word $z$ ($\tilde{y} = \pi_i \tilde{z}$).
Let $\pi = x p_i z$. We have:
$$\pi_{i+1} = \pi_i \pi \pi_i$$

We now prove that $\pi$ is the longest palindromic prefix of $\pi \pi_i$.
Without loss of generality, 
we  assume that there is exactly one occurrence of $p_i$ in $\pi_i x p_i$.
Let $\pi'$ be the longest palindromic prefix of $\pi \pi_i$, 
and let $s$ be the word such that $\pi' s = \pi \pi_i$.
As $|\pi'| \geq |\pi|$ and as $xp_i$ is a prefix of $\pi$,
the word $p_i \tilde{x}$ is a suffix of $\pi'$.
Consequently $\tilde{s} x p_i$ is a prefix of $\pi_{i+1}$.
By choice on the occurrence of $p_i$ at the beginning of the paragraph, 
$|sxp_i| \geq |\pi_i x p_i|$.
By definition of $\pi'$, we also have 
$|s| \leq |\pi_i|$. 
Thus $|s| = |\pi_i|$ and $\pi' = \pi$.
That $\pi$ is the longest palindromic prefix of $\pi\pi_i$
implies that, for any proper prefix $p$ of $\pi_i$,
we have $\lgpal{\pi_i\pi p} = 2 + \lgpal{p}$ (let recall that by definition of the sequence $(\pi_i)_{i \geq 1}$, $\pi_i$ is the longest prefix of $\pi_{i+1}$).

As there exist infinitely many couples $(i, p_i)$, 
from what precedes, 
we can construct an infinite sequences $(u_i)_{i \geq 1}$ of prefixes of $\bw$ such that,
for all $i \geq 1$, there exists an integer $j \geq i$ and a palindrome $v_i$ such that
$u_{i+1} = \pi_j v_i u_i$,
$u_{i+1}$ is a prefix of $\pi_{j+1} = \pi_j v_i \pi_j$,
$v_j$ is the longest palindromic prefix of $v_i \pi_j$ and $\lgpal{u_{i+1}} =  2 + \lgpal{u_i}$.
Left greedy palindromic lengths of prefixes of $\bw$ are not bounded.
\end{proof}

Next result shows an example of use of previous lemma.
This result is Theorem~\ref{T:LGPal} in the particular case of a word over an infinite alphabet.

\begin{prpstn}
\label{P:lgpal_infinite_alphabet}
For any infinite word $\bw$ over an {\em infinite} alphabet, if $\bw$ has 
infinitely many palindromic prefixes, then
the set $\{\lgpal{p} \mid p \mbox{ prefix of } \bw \}$ is unbounded.
\end{prpstn}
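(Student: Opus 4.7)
The plan is to apply Lemma~\ref{L:main idea} in the most transparent way possible, using single new letters as the palindrome witnesses. Since $\bw$ has infinitely many palindromic prefixes, $\bw\in \SWHIMPP$, so we may immediately consider the length increasing sequence $(\pi_i)_{i\geq 1}$ of palindromic prefixes of $\bw$ without invoking Lemma~\ref{L:reduction_lgpl}.

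The key observation is that on an infinite alphabet, arbitrarily large palindromic prefixes must keep introducing previously unseen letters. Formally, I would argue that because $\bw$ uses infinitely many distinct letters while every finite $\pi_i$ contains only finitely many, for each $N$ there is some letter of $\bw$ that does not appear in $\pi_N$. Such a letter eventually appears at some position of $\bw$, hence in some $\pi_{j+1}$ (since $|\pi_i|\to\infty$), so there is an index $i\geq N$ for which $\pi_{i+1}$ contains a letter absent from $\pi_i$. Iterating this observation yields infinitely many indices $i$ with this property.

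For each such index $i$, I would take $p_i$ to be any single letter $a$ occurring in $\pi_{i+1}$ but not in $\pi_i$. This $p_i$ is a palindrome (it has length~$1$), is a factor of $\pi_{i+1}$ by construction, does not occur anywhere in $\pi_i$ by the choice of $a$, and cannot overlap the prefix $\pi_i$ of $\pi_{i+1}$ since a single letter cannot straddle the boundary at position $|\pi_i|$ without being contained in one of the two sides. Thus the hypothesis of Lemma~\ref{L:main idea} is satisfied, and the left greedy palindromic lengths of prefixes of $\bw$ are unbounded.

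No step looks delicate: the proposition is essentially a direct corollary of Lemma~\ref{L:main idea}, and the only mild subtlety is the justification that new letters must appear inside palindromic prefixes infinitely often, which follows from the finiteness of each $\pi_i$ together with $|\pi_i|\to \infty$. In particular, the interpretation of ``over an infinite alphabet'' used here is that $\bw$ itself involves infinitely many distinct letters; under this reading, the proof is just a one-line application of the main lemma.
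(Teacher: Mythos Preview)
Your proof is correct and follows exactly the same route as the paper's: take the length increasing sequence $(\pi_i)_{i\geq 1}$ of palindromic prefixes, observe that for infinitely many $i$ a new letter appears in $\pi_{i+1}$ that was absent from $\pi_i$, and feed these letters as the palindromes $p_i$ into Lemma~\ref{L:main idea}. Your write-up is simply more explicit about why infinitely many such indices exist and why a single letter cannot overlap the prefix $\pi_i$, but the argument is the same.
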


\begin{proof}
Let $(\pi_n)_{n \geq 1}$ be the length increasing sequence of palindromic prefixes of $\bw$.
As the alphabet is infinite, there exist infinitely many integers $i$ such that $\pi_{i+1}$ contains a letter $a_i$ that does not occur in $\pi_i$.
As letters are palindromes, by Lemma~\ref{L:main idea}, $\maxlgpalpref{\bw}$ is infinite.
\end{proof}

In next example, the situation of Lemma~\ref{L:main idea} does not hold.
So, our approach will have to be adapted.
This will lead to the study of three different cases at the end of the proof of Theorem~\ref{T:LGPal} in next section.

\begin{xmpl}
Let 
$\pi_0 = aba$; for all $n \geq 1$, $\pi_{n} = (\pi_{n-1} a^n)^{(+)} = \pi_{n-1} a^{n-1}\pi_{n-1}$ 
(where $^{(+)}$ denotes the palindromic closure; 
for a word $u$, $u^{(+)}$ is the smallest palindrome having $u$ has a prefix):

$\pi_1 = abaaba$

$\pi_2 = abaabaaabaaba$

$\pi_3 = abaabaaabaabaaaabaabaaabaaba$

$\pi_4 = abaabaaabaabaaaabaabaaabaabaaaaabaabaaabaabaaaabaabaaabaaba$

For all $n \geq 1$, the longest palindromic prefix of $\pi_n^{-1} \pi_{n+1}$ is the word $a^n$. This palindrome overlaps the  last occurrence of $\pi_n$ and its prefix $a^{n-1}$ does not correspond to the first occurrence of this word $a^{n-1}$ in $\pi_{n+1}$.
\end{xmpl}

To end with the general ideas of the proof of Theorem~\ref{T:LGPal}, 
let us mention that next lemma will allow to consider palindromes in the form $(uv)^ku$ for words $p_i$ when using Lemma~\ref{L:main idea}.

\begin{lmm}
\label{X1}
Let $\bw$ be a non ultimately periodic word whose left greedy palindromic lengths of prefixes
are bounded. 
There exist palindromes $u$ and $v$ with $uv$ primitive (in particular $uv \neq \varepsilon$) such that, 
for all $k \geq 0$, $(uv)^k u$ is a factor of $\bw$.
\end{lmm}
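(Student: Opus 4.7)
My plan is to reduce to the case $\bw \in \SWHIMPP$, set up the canonical palindromic decompositions of the palindromic prefixes, invoke the contrapositive of Lemma~\ref{L:main idea}, and then extract the pair $(u, v)$ by analysing the plateau structure of the sequence of palindromic gaps.

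First I would replace $\bw$ by a suffix in $\SWHIMPP$. By Property~\ref{P:property1} (bounded LGPal implies bounded palindromic length) and Lemma~\ref{L:ex2.1(2)}, $\bw$ has a suffix $\bw'$ in $\SWHIMPP$; by Lemma~\ref{L:reduction_lgpl} the LGPal of prefixes of $\bw'$ is still bounded, and Lemma~\ref{Fischler} ensures $\bw'$ is non-periodic (since for words in $\SWHIMPP$, non-ultimately-periodic is equivalent to non-periodic). Since factors of $\bw'$ are factors of $\bw$, we may assume $\bw \in \SWHIMPP$.

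Next, let $(\pi_n)_{n \geq 1}$ be the length-increasing sequence of palindromic prefixes of $\bw$ and set $d_n = |\pi_{n+1}|-|\pi_n|$; by Lemma~\ref{L:tool2} the sequence $(d_n)$ is non-decreasing and, $\bw$ being non-periodic, tends to infinity. For $n \geq 2$, the smallest period of $\pi_n$ equals $d_{n-1}$ (since $\pi_{n-1}$ is its longest proper palindromic prefix), and by the classical structure of palindromes with a given smallest period we obtain a canonical decomposition $\pi_n = (u_n v_n)^{k_n} u_n$ with $u_n, v_n$ palindromes and $u_n v_n$ primitive of length $d_{n-1}$. In particular $t_n = \pi_n^{-1}\pi_{n+1} = v_{n+1} u_{n+1}$. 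The bounded LGPal hypothesis together with Lemma~\ref{L:main idea} (in contrapositive form) implies that for all sufficiently large $n$, every palindromic factor of $t_n$ occurs as a factor of $\pi_n$; in particular $u_{n+1}$ and $v_{n+1}$ both occur in $\pi_n$.

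Now I would partition the indices into maximal plateaus where $d_n$ is constant: on each plateau $[M_{j-1}+1, M_j]$ we have $d_n = D_j$ constant, with $D_1 < D_2 < \cdots$ and every plateau finite (else $\bw$ would be periodic by Lemma~\ref{L:tool2}). Throughout plateau $j$ the residue $|u_n| = |\pi_n| \bmod D_j$ is invariant, so the canonical pair $(u_n, v_n)$ stabilizes to a single pair $(U^{(j)}, V^{(j)})$ of palindromes with $U^{(j)} V^{(j)}$ primitive of length $D_j$, and the palindromic prefixes on this plateau are $(U^{(j)} V^{(j)})^{k_n} U^{(j)}$ with $k_n$ growing arithmetically.

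The final step---producing a single pair $(u, v)$ yielding $(uv)^k u$ as a factor for all $k \geq 0$---is the main obstacle. Taking $(u, v) = (U^{(1)}, V^{(1)})$ from the initial plateau only gives $(uv)^k u$ as prefixes of $\bw$ for $k$ bounded by the length of the first plateau. The proof must then show that $(U^{(1)} V^{(1)})^k U^{(1)}$ recurs as a (not necessarily prefix) factor of $\bw$ for arbitrarily large $k$; this is where the contrapositive of Lemma~\ref{L:main idea} enters crucially. The constraint that every palindromic factor of $t_n$ appears in $\pi_n$, combined with an inductive tracking of how the initial periodic pattern is preserved through the transitions $D_j \rightsquigarrow D_{j+1}$ between plateaus (using that $\pi_{M_j+1}$ is both a prefix and a suffix of $\pi_{M_j+2}$, together with Fine-and-Wilf style considerations on overlapping periods), should force the palindromic pattern $(U^{(1)} V^{(1)})^k U^{(1)}$ to persist as a factor at all lengths. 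This inductive persistence argument---balancing the growth of the periods $D_j$ against the forced recurrence of palindromic factors imposed by bounded LGPal---is the technically hardest part of the proof.
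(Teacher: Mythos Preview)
Your proposal has a genuine gap at its decisive step. The plateau analysis and the canonical decomposition $\pi_n=(u_nv_n)^{k_n}u_n$ are correct and pleasant, but the final paragraph is not a proof: you say the recurrence of $(U^{(1)}V^{(1)})^kU^{(1)}$ for arbitrary $k$ ``should'' follow from an inductive persistence argument across plateau transitions, yet you never carry this out. Nothing you have established forces the initial periodic pattern to survive beyond the first plateau; the contrapositive of Lemma~\ref{L:main idea} only tells you that palindromic factors of $t_n$ occur in \emph{or overlap} the prefix $\pi_n$ (not that they occur inside $\pi_n$, as you write), and this is far too weak to propagate the specific word $(U^{(1)}V^{(1)})^kU^{(1)}$ through an unbounded sequence of period changes $D_j\to D_{j+1}$. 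Fine--Wilf considerations at a single transition will not help once $D_{j+1}>2D_j$, which you have no control over.

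The paper's proof is entirely different and much shorter. It proceeds by induction on $\maxlgpalpref{\bw}$ and never invokes Lemma~\ref{L:main idea}. In the case $\bw\in\SWHIMPP$, one applies K\"onig's lemma to the family of words $\pi_j^{-1}\pi_{j+1}$ (whose lengths are unbounded by Lemma~\ref{L:tool2}) to obtain an infinite word $\bw_1$ every prefix of which is a proper prefix of some $\pi_j^{-1}\pi_{j+1}$. For such a prefix $p$ one has $\lgpal{p}=\lgpal{\pi_jp}-1$, so $\maxlgpalpref{\bw_1}<\maxlgpalpref{\bw}$. If $\bw_1$ is ultimately periodic, Lemma~\ref{L:ex2.1(3)} supplies the palindromes $u,v$; otherwise the inductive hypothesis applies to $\bw_1$. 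The case $\bw\notin\SWHIMPP$ is handled similarly by stripping the longest palindromic prefix. Primitivity of $uv$ is then arranged in a separate easy step by passing to the primitive root. This argument sidesteps your plateau bookkeeping completely: it manufactures a strictly simpler word and lets the induction do the work.
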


\begin{proof}
Assume first that there exist palindromes $u$ and $v$ with $uv \neq \varepsilon$  such that, 
for all $k \geq 0$, $(uv)^k u$ is a factor of $\bw$. Let us show that we can assume $uv$ primitive.
Assume $uv$ is not primitive and let $z$ be its primitive root (that is the smallest word such that $uv$ is a power of $z$). There exist words $x$, $y$ 
and integers $\ell_1$, $\ell_2$ such that $z = xy$, $u = (xy)^{\ell_1}x$ and $v = y(xy)^{\ell_2}$. For any $k \geq 0$, $(uv)^ku = (xy)^{k(\ell_1+1+\ell_2)+\ell_1}x$. Moreover as $x$ is both a prefix and a suffix of the palindrome $u$, 
$x$ itself is a palindrome.  Similarly $y$ is a palindrome. Hence replacing $u$, $v$ with $x$, $y$ allows to assume that $uv$ is primitive.

We now show, under hypotheses $\maxlgpalpref{\bw}$ finite and $\bw$ non ultimately periodic,
the existence of palindromes $u$ and $v$ with $uv \neq \varepsilon$  such that, 
for all $k \geq 0$, $(uv)^k u$ is a factor of $\bw$.
We act by induction on $\maxlgpalpref{\bw}$.
First observe that if $\maxlgpalpref{\bw} = 1$,
	$\bw$ does not verify the hypotheses as it is periodic ($\bw = a^\omega$ with a letter).

Assume that $\bw \not\in \SWHIMPP$.
Then $\bw = \pi \bw'$ with $\pi$ the longest palindromic prefix of $\bw$.
For any prefix $p$ of $\bw'$, we have
$\lgpal{p} = \lgpal{\pi p} -1$.
Hence $\maxlgpalpref{\bw'} < \maxlgpalpref{\bw}$.
If $\bw'$ is ultimately periodic, 
by Lemma~\ref{L:ex2.1(3)}, 
$\bw' = p(uv)^\omega$ with $u$, $v$ two palindromes such that $uv \neq \varepsilon$ (and $p$ is a word).
Thus for all $k \geq 0$, $(uv)^ku$ is a factor of $\bw'$.
If $\bw'$ is not ultimately periodic, palindromes $u$ and $v$ exist by inductive hypothesis.

From now on, we assume that $\bw \in \SWHIMPP$.
Let $(\pi_i)_{i \geq 1}$ be the length increasing sequence of palindromic prefixes of $\bw$.
By Lemma~\ref{L:tool2}, the sequence $(|\pi_{i+1}|-|\pi_i|)_{i \geq 1}$ is not decreasing.
Moreover as $\bw$ is not periodic, this sequence is unbounded.
By K\"onig's lemma (see \cite[Prop. 1.2.3]{Lothaire2002book}), 
there exists an infinite word $\bw_1$ such that each of its prefixes $p$ is a proper prefix of 
$\pi_j^{-1}\pi_{j+1}$ for some $j \geq 1$.
By definition of the sequence $(\pi_i)_{i \geq 1}$, for any proper prefix $p$ of $\pi_j^{-1}\pi_{j+1}$,
$\lgpal{p} = \lgpal{\pi_jp} - 1$.
Hence $\maxlgpalpref{\bw_1} < \maxlgpalpref{\bw}$.
If $\bw_1$ is periodic, by Lemma~\ref{L:ex2.1(3)}, 
$\bw_1 = (uv)^\omega$ for some palindromes $u$, $v$ with $uv \neq \varepsilon$.
Thus for all $k \geq 0$, $(uv)^ku$ is a factor of $\bw_1$.
If $\bw_1$ is not periodic, palindromes $u$ and $v$ exist by inductive hypothesis. As factors of $\bw_1$ are factors of $\bw$, the lemma holds.
\end{proof}

\subsection{\label{subsec:proofThLGPAL}Proof of Theorem~\ref{T:LGPal}}

Let $\bw$ be a non ultimately periodic infinite word.
At the beginning of Section~\ref{sec:LGPAL}, 
we explained that, if $\bw$ has no suffix in $\SWHIMPP$,
then $\maxlgpalpref{\bw}$ is infinite. 
Thus we can assume that $\bw = u \bw'$ for some word $u$ and some infinite word $\bw'$ in $\SWHIMPP$.
Lemma~\ref{L:reduction_lgpl} shows that there exists a suffix $\bw''$ of $\bw'$ such that $\maxlgpalpref{\bw}$ is infinite if and only if $\maxlgpalpref{\bw''}$ is infinite. Thus from now on we assume that $\bw \in \SWHIMPP$.

We act by contradiction and so we assume that $\maxlgpalpref{\bw}$ is finite. 
Moreover without loss of generality, 
we assume that if $\bw'$ is any infinite word with $\maxlgpalpref{\bw'}< \maxlgpalpref{\bw}$ then it is ultimately periodic.
By Lemma~\ref{X1}, 
there exist palindromes $u$ and $v$ with $uv$ primitive such that, 
for all $k \geq 0$, $(uv)^k u$ is a factor of $\bw$. Note that factors $(uv)^k u$ are palindromes. 
Non periodicity of $\bw$ also has the following consequence.

\begin{fct} 
\label{F:R1}
$\pref(\bw) \cap \fact((uv)^\omega)$ is finite.
\end{fct}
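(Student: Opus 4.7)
The plan is to argue by contradiction and exploit the fact that $(uv)^\omega$ has period $|uv|$. Suppose $\pref(\bw) \cap \fact((uv)^\omega)$ is infinite. Then there exist prefixes of $\bw$ of arbitrarily large length that are factors of $(uv)^\omega$.

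A standard observation is that any factor $w$ of $(uv)^\omega$ with $|w| \geq |uv|$ has $|uv|$ as a period: writing $w$ as $(uv)^\omega[i+1..i+|w|]$ for some $i \geq 0$, and using that $(uv)^\omega[j] = (uv)^\omega[j+|uv|]$ for every $j \geq 1$, one gets $w[j] = w[j+|uv|]$ for all $j$ with $1 \leq j \leq |w| - |uv|$. Consequently, $\bw$ admits prefixes of arbitrarily large length, each having period $|uv|$.

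Now for any index $i \geq 1$, pick any prefix $p$ of $\bw$ in $\fact((uv)^\omega)$ with $|p| \geq i + |uv|$; then $\bw[i] = p[i] = p[i+|uv|] = \bw[i+|uv|]$. This shows that $|uv|$ is a period of the infinite word $\bw$, so $\bw = (uv)^\omega$ (which is in particular ultimately periodic), contradicting the hypothesis that $\bw$ is not ultimately periodic. Hence $\pref(\bw) \cap \fact((uv)^\omega)$ must be finite. There is no real obstacle here — the argument is essentially just the fact that having arbitrarily long prefixes sharing a common period forces that period on the whole infinite word.
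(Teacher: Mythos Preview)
Your argument is correct and essentially parallels the paper's: both exploit that $(uv)^\omega$ is periodic to force $\bw$ itself to be periodic, contradicting the standing hypothesis. The paper pigeonholes on the starting position within $uv$ to obtain a fixed suffix $s$ of $uv$ with $\bw = s(uv)^\omega$; you instead observe directly that arbitrarily long prefixes all share the period $|uv|$, hence so does $\bw$. These are two phrasings of the same idea.

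One small inaccuracy: from ``$|uv|$ is a period of $\bw$'' you cannot conclude $\bw = (uv)^\omega$, only that $\bw = z^\omega$ where $z$ is the prefix of $\bw$ of length $|uv|$ (a conjugate of $uv$, in fact). This does not affect the argument, since all you need is that $\bw$ is periodic, which you state in the parenthetical.
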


\begin{proof}
If $\pref(\bw) \cap \fact((uv)^\omega)$ is infinite, there exists a suffix $s$ of $uv$ such that
$\pref(\bw) \cap \fact(s(uv)^\omega)$ is infinite. Then $\bw = s(uv)^\omega$ and $\bw$ is periodic: a contradiction.
\end{proof}

\noindent
\textbf{Notation}. To continue the proof of Theorem~\ref{T:LGPal}, we need to introduce  notation.
Let $L$ be the length of the greatest prefix of $\bw$ that belongs to $\fact((uv)^\omega)$.
Let recall that $\bw \in \SWHIMPP$.
We let $(\pi_i)_{i \geq 1}$ denote the sequence of length increasing palindromic prefixes of $\bw$.
We now consider first occurrences of words $(uv)^ku$.
Let $k \geq 1$, we let $p_k$ denote the smallest prefix of $\bw$ such that $p_k(uv)^ku$ is also a prefix of $\bw$.
We also let $j_k$ denote the integer such that 
$|\pi_{j_k}| < |p_k(uv)^ku| \leq |\pi_{j_k+1}|$.
In other words, $j_k$ is the greatest integer such that $(uv)^ku$ is not a factor of $\pi_{j_k}$.
Equivalently $j_k+1$ is the smallest integer such that $(uv)^ku$ is  a factor of $\pi_{j_k+1}$.

\begin{fct}
\label{F:R2}
There exist infinitely many integers $i$ such that $2|\pi_i| < |\pi_{i+1}|$.
\end{fct}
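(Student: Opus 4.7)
The plan is to argue by contradiction. I will suppose there exists $N$ such that $|\pi_{i+1}|\leq 2|\pi_i|$ for every $i\geq N$, and set $d_i := |\pi_{i+1}|-|\pi_i|$. Since $\pi_i$ is both a prefix (by definition) and a suffix (by palindromicity of $\pi_{i+1}$) of $\pi_{i+1}$, and $d_i\leq|\pi_i|$ gives a non-trivial overlap, $d_i$ is a period of $\pi_{i+1}$; moreover it will turn out to be the \emph{smallest} period, because any smaller period $q$ would produce a palindromic prefix of $\pi_{i+1}$ of length $|\pi_{i+1}|-q\in(|\pi_i|,|\pi_{i+1}|)$, violating the definition of the length-increasing sequence $(\pi_i)_{i\geq 1}$. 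By Lemma~\ref{L:tool2} the sequence $(d_i)_{i\geq N}$ is non-decreasing; if it were bounded, $\bw$ would be periodic, which is excluded, so $d_i\to\infty$.

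Next I will exploit the palindromic factors supplied by Lemma~\ref{X1}. For every $k\geq 2$, $(uv)^ku$ is a palindrome of length $k|uv|+|u|$ whose smallest period equals $|uv|$ by primitivity of $uv$. Since $|\pi_{j_k+1}|\geq k|uv|+|u|$ tends to infinity, $j_k\to\infty$; so for $k$ large enough $j_k\geq N$. Then $\pi_{j_k+1}$ has smallest period $d_{j_k}$ and contains $(uv)^ku$ as a factor at position $|p_k|$. Shifting this occurrence backward by $d_{j_k}$ would place $(uv)^ku$ inside $\pi_{j_k}$, so $|p_k|<d_{j_k}$; and the palindromic symmetry of $\pi_{j_k+1}$ furnishes a second, symmetric occurrence at position $|\pi_{j_k+1}|-|p_k|-|(uv)^ku|$, giving in particular $|\pi_{j_k+1}|\geq 2|p_k|+|(uv)^ku|$.

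The heart of the argument is to deduce that $\pi_{j_k+1}$ is a factor of $(uv)^\omega$, which together with $|\pi_{j_k+1}|\to\infty$ will contradict Fact~\ref{F:R1}. When $|(uv)^ku|\geq d_{j_k}+|uv|$, the factor $(uv)^ku$ inherits the period $d_{j_k}$ from $\pi_{j_k+1}$, and Fine and Wilf, combined with the smallest-period property of $(uv)^ku$, gives $|uv|\mid d_{j_k}$; propagating this divisibility through the template of length $d_{j_k}$ of $\pi_{j_k+1}$ will force $\pi_{j_k+1}$ itself to have period $|uv|$, hence to lie in $\fact((uv)^\omega)$. The hard part will be the complementary regime $d_{j_k}>(k-1)|uv|+|u|$, where $d_{j_k}$ outpaces $|(uv)^ku|$ and Fine-Wilf does not apply directly; I plan to handle it by combining the inequalities $|\pi_{j_k+1}|\geq 2|p_k|+|(uv)^ku|$ and $d_{j_k}\leq|\pi_{j_k}|$ to obtain sharper control on $|p_k|$, then reapplying the same argument to $(uv)^{k'}u$ for a sufficiently larger exponent $k'$ so that $|(uv)^{k'}u|\geq d_{j_{k'}}+|uv|$, reducing to the first regime and the same contradiction with Fact~\ref{F:R1}.
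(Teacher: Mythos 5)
Your reduction to the negation, the identification of $d_i=|\pi_{i+1}|-|\pi_i|$ as the smallest period of $\pi_{i+1}$, the bound $|p_k|<d_{j_k}$, and the inequality $|\pi_{j_k+1}|\geq 2|p_k|+|(uv)^ku|$ are all correct, and your treatment of the regime $|(uv)^ku|\geq d_{j_k}+|uv|$ (Fine and Wilf forcing $|uv|$ to divide $d_{j_k}$, hence $\pi_{j_k+1}\in\fact((uv)^\omega)$, contradicting Fact~\ref{F:R1}) is sound. But the complementary regime is a genuine gap, not an omitted computation. Your plan is to pass to a larger exponent $k'$ with $|(uv)^{k'}u|\geq d_{j_{k'}}+|uv|$; nothing guarantees such a $k'$ exists, because increasing $k'$ also increases $j_{k'}$ and hence $d_{j_{k'}}$, about which Lemma~\ref{L:tool2} only says that the sequence is non-decreasing and unbounded. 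A priori $d_{j_k}$ could outrun $k|uv|+|u|$ for every $k$, so the reduction to the first regime is circular: to know that some $k'$ lands in the first regime you already need an argument excluding the second one.

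The paper closes this with one observation that makes the case split (and Fine and Wilf) unnecessary. Under the hypothesis $2|\pi_{j_k}|\geq|\pi_{j_k+1}|$ one has $|p_k|<|\pi_{j_k+1}|-|\pi_{j_k}|\leq|\pi_{j_k}|$, and since $(uv)^ku$ is not a factor of $\pi_{j_k}$, its first occurrence straddles the end of $\pi_{j_k}$; the suffix $s$ of $\pi_{j_k}$ with $\pi_{j_k}=p_ks$ is then a prefix of $(uv)^ku$, and because $\pi_{j_k}$ and $(uv)^ku$ are palindromes, $\tilde s$ is simultaneously a prefix of $\bw$ and a factor of $(uv)^\omega$. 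Symmetrically, the prefix $p$ of $\pi_{j_k}$ of length $|p_k(uv)^ku|-(|\pi_{j_k+1}|-|\pi_{j_k}|)$ is a prefix of $\bw$ and a suffix of $(uv)^ku$. A length count gives $|p|+|s|=|(uv)^ku|-|\pi_{j_k+1}|+2|\pi_{j_k}|\geq|(uv)^ku|$, so one of these two prefixes of $\bw$ has length at least $|(uv)^ku|/2$ and lies in $\fact((uv)^\omega)$ --- impossible for large $k$ by Fact~\ref{F:R1}. (Your own inequalities $|\pi_{j_k+1}|\geq 2|p_k|+|(uv)^ku|$ and $d_{j_k}\leq|\pi_{j_k}|$ already yield $|\pi_{j_k}|-|p_k|\geq|(uv)^ku|/2$; the step you are missing is to recognise $|\pi_{j_k}|-|p_k|$ as the length of a prefix of $\bw$ belonging to $\fact((uv)^\omega)$.)
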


\begin{proof}
Consider an integer $i$ such that $i = j_k$ for some integer $j_k \geq 1$.
Assume that $2|\pi_i| \geq |\pi_{i+1}|$. (See Fig.~\ref{fig:1} for an illustration of the hypotheses and the notation of this proof)

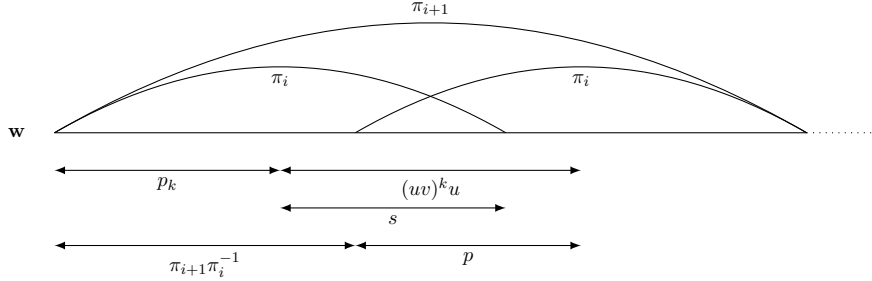
\begin{figure}[h]
\begin{center}
\begin{tikzpicture}[scale=0.5]
\coordinate (G) at (0,0) ;
\coordinate (D) at (20,0);
\coordinate (p) at (12,0);
\coordinate (q) at (8,0);

\node[scale = 0.7] (w) at (-1,0) {$\mathbf{w}$} ;
\draw (G) -- (D) ;
\draw[dotted] (D) -- (22, 0) ;
\draw (0, 0) to[bend left] node[above,	midway,scale = 0.7]{$\pi_{i+1}$}  (20,0)  ;
\draw (G) to[bend left] node[below,	midway,scale = 0.7]{$\pi_{i}$}  (p)  ;
\draw (D) to[bend right] node[below,	midway,scale = 0.7]{$\pi_{i}$}  (q)  ;
\draw[<->,>=latex] (0,-1) -- node[below, midway,scale = 0.7] {$p_k$} (6, -1);
\draw[<->,>=latex] (6,-1) -- node[below, midway,scale = 0.7] { $(uv)^ku$} (14, -1);
\draw[<->,>=latex] (6,-2) -- node[below, midway,scale = 0.7] {$s$} (12, -2);
\draw[<->,>=latex] (0,-3) -- node[below, midway,scale = 0.7] {$\pi_{i+1}\pi_i^{-1}$} (8, -3);
\draw[<->,>=latex] (8,-3) -- node[below, midway,scale = 0.7] {$p$}(14, -3);
\end{tikzpicture}
\caption{Hypotheses and notation in the proof of Fact~\ref{F:R2}}
\label{fig:1}
\end{center}
\end{figure}

By choice of $i$, 
$|p_k(uv)^ku| \geq |\pi_i| \geq |\pi_{i+1}|-|\pi_i|$.
We also have $|p_k| < |\pi_{i+1}|-|\pi_i| \leq |\pi_i|$.
Indeed otherwise the word $(uv)^ku$ would be a factor of the suffix $\pi_i$ of $\pi_{i+1}$.
This contradicts the definition of $i =j_k$.

Consider the suffix $s$ of $\pi_i$ such that $\pi_i = p_k s$.
As $\pi_i$ and $p_k(uv)^ku$ are prefixes of $\pi_{i+1}$.
The word $s$ is a prefix of $(uv)^ku$.
As $(uv)^ku$ and $\pi_i$ are palindromes, $\tilde{s}$ is a prefix of $\pi_i$ (and so of $\bw$)
and a factor of $(uv)^ku$.

Consider now the prefix $p$ of $\pi_i$ of length $|p_k(uv)^ku|-(|\pi_{i+1}|-|\pi_i|)$: 
$p_k (uv)^ku =(\pi_{i+1}\pi_i^{-1})p$ 
As $|p_k| < |\pi_{i+1}|-|\pi_i|= |\pi_{i+1}\pi_i^{-1}|$, 
$p$ is a suffix of $(uv)^ku$.

Observe now that $|p|+|\tilde{s}| \geq |(uv)^ku|$.
Indeed $|p|+|\tilde{s}| = |p|+|s| = 
(|p_k(uv)^ku| - |\pi_{i+1}|+|\pi_i|)+(|\pi_i|-|p_k|) =
|(uv)^ku| - |\pi_{i+1}|+2|\pi_i|$, 
and, $2|\pi_i| \geq |\pi_{i+1}|$.

We have shown that hypothesis $2|\pi_i| \geq |\pi_{i+1}|$ with $i = j_k$ implies that $\bw$
has a prefix ($p$ or $\tilde{s}$) of length at least $|(uv)^ku| /2$ which is a factor of $(uv)^ku$.
Fact~\ref{F:R1} shows that this situation can hold only for a finite number of integers $k$. 
This ends the proof of Fact~\ref{F:R2}.
\end{proof}

\noindent
\textbf{Intermediate step}. Splitting the proof into three cases.

Let us observe now that, in a word $\pi_i$ (as in any finite word), there occur only finitely many factors in the form $(uv)^ku$.
In the following we will always consider integers $k$ such that $(uv)^{k+1}u$ does not occur in $\pi_{j_k}$.
From Fact~\ref{F:R2} and its proof, the following set is infinite
$${\cal I} = \{ (i, k) \mid i = j_k, 2|\pi_i| < |\pi_{i+1}|, (uv)^{k+1}u \text{ is not a factor of } \pi_{i+1} \}.$$

For any $k \geq 0$ with $(j_k, k)$ in ${\cal I}$, exactly one of the following three cases holds:
\begin{description}
\item[Case 1:] $|\pi_{j_k}| \leq |p_k|$,
\item[Case 2:] $|\pi_{j_k}| > |p_k|$ and $|p_k(uv)^ku| \geq |\pi_{j_k+1}|- |\pi_{j_k}|$,
\item[Case 3:] $|\pi_{j_k}| > |p_k|$ and $|p_k(uv)^ku| < |\pi_{j_k+1}|- |\pi_{j_k}|$.
\end{description}

For $n \in \{1, 2, 3\}$, let
${\cal I}_n$ be the set of all $(i, k)$ in ${\cal I}$ such that case $n$ holds.
Sets ${\cal I}_1$, ${\cal I}_2$ and ${\cal I}_3$ form a partition of ${\cal I}$.
So at least one is infinite. 
The proof of Theorem~\ref{T:LGPal} ends with the next three facts, 
each one proving that one of the sets ${\cal I}_1$, ${\cal I}_2$ and ${\cal I}_3$ 
cannot be infinite without contradicting the hypotheses.

\begin{fct}
\label{F:fact1}
Hypothesis``${\cal I}_1$ is infinite" is contradictory
\end{fct}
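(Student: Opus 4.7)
My plan is to show that Case~1 is tailor-made for a direct application of Lemma~\ref{L:main idea}, with the palindrome playing the role of $p_i$ being $(uv)^k u$ itself. The key observation to set up first is that $(uv)^k u$ is always a palindrome: since $u$ and $v$ are palindromes, $\widetilde{(uv)^k u} = u(vu)^k = (uv)^k u$. So for each $(j_k,k)\in\mathcal{I}$, the palindromic factor $(uv)^k u$ occurs in $\pi_{j_k+1}$ (by the defining inequality $|p_k(uv)^k u|\leq |\pi_{j_k+1}|$) but not in $\pi_{j_k}$ (by the defining maximality of $j_k$).

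Next I would check the non-overlap condition of Lemma~\ref{L:main idea} using the Case~1 hypothesis $|\pi_{j_k}|\leq|p_k|$. Because $p_k$ is the \emph{smallest} prefix of $\bw$ with $p_k(uv)^k u$ also a prefix of $\bw$, the word $(uv)^k u$ does not occur in $\bw$ at any position $\leq |p_k|$; in particular, every occurrence of $(uv)^k u$ in the prefix $\pi_{j_k+1}$ starts at some position $\geq |p_k|+1 > |\pi_{j_k}|$. Hence no occurrence of $(uv)^k u$ is contained in, nor overlaps, the prefix $\pi_{j_k}$ of $\pi_{j_k+1}$, which is exactly the hypothesis required by Lemma~\ref{L:main idea}.

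Finally, I would verify that, if $\mathcal{I}_1$ is infinite, then the corresponding indices $j_k$ take infinitely many distinct values. This uses that $k\mapsto j_k$ is non-decreasing (because $(uv)^{k'}u$ being absent from $\pi_i$ forces $(uv)^{k}u$ to be absent for all $k\geq k'$) and tends to infinity (since $|\pi_{j_k+1}|\geq|(uv)^k u|=(k+1)|u|+k|v|\to\infty$). Thus only finitely many values of $k$ share any given $j_k$, and an infinite $\mathcal{I}_1$ yields infinitely many distinct indices $i=j_k$ satisfying the hypothesis of Lemma~\ref{L:main idea}. That lemma then delivers $\maxlgpalpref{\bw}=\infty$, contradicting the standing assumption that $\maxlgpalpref{\bw}$ is finite.

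The only subtle point — and the single place where a quick proof could go wrong — is arguing that \emph{no} occurrence of $(uv)^k u$ in $\pi_{j_k+1}$ overlaps $\pi_{j_k}$, rather than merely the first occurrence; the minimality of $p_k$ is precisely the fact that handles this, so once one sees it the rest is bookkeeping. The harder case analyses corresponding to Case~2 and Case~3 (where first occurrences either intrude into or sit strictly inside $\pi_{j_k}$) are where the real combinatorial work of the theorem will lie.
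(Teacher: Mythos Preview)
Your proof is correct and follows the same approach as the paper: show that for each $(i,k)\in\mathcal{I}_1$ the palindrome $(uv)^ku$ witnesses the hypothesis of Lemma~\ref{L:main idea} at index $i$, and then invoke that lemma. The paper's version is terser (it introduces $s_k$ with $\pi_{i+1}=p_k(uv)^ku s_k$ and notes $|s_k|\geq|p_k|$ via the palindrome symmetry, then jumps straight to the non-overlap conclusion from $|\pi_i|\leq|p_k|$), while you spell out two points the paper leaves implicit: that the minimality of $p_k$ rules out \emph{every} occurrence of $(uv)^ku$ from overlapping $\pi_i$ (not just the first), and that an infinite $\mathcal{I}_1$ yields infinitely many distinct indices $i=j_k$ because $k\mapsto j_k$ is non-decreasing and unbounded.
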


\begin{figure}[h]
\begin{center}
\begin{tikzpicture}[scale=0.5]
\coordinate (G) at (0,0) ;
\coordinate (D) at (20,0);
\coordinate (q) at (12,0);
\coordinate (p) at (8,0);

\node[scale = 0.7] (w) at (-1,0) {$\mathbf{w}$} ;
\draw (G) -- (D) ;
\draw[dotted] (D) -- (22, 0) ;
\draw (0, 0) to[bend left] node[above,	midway,scale = 0.7]{$\pi_{i+1}$}  (20,0)  ;
\draw (G) to[bend left] node[below,	midway,scale = 0.7]{$\pi_{i}$}  (p)  ;
\draw (D) to[bend right] node[below,	midway,scale = 0.7]{$\pi_{i}$}  (q)  ;
\draw[<->,>=latex] (0,-1) -- node[below, midway,scale = 0.7] {$p_k$} (10, -1);
\draw[<->,>=latex] (10,-1) -- node[below, midway,scale = 0.7] { $(uv)^ku$} (13, -1);
\end{tikzpicture}
\caption{Case 1 with $i = j_k$; elements of ${\cal I}_1$}
\label{fig:2}
\end{center}
\end{figure}
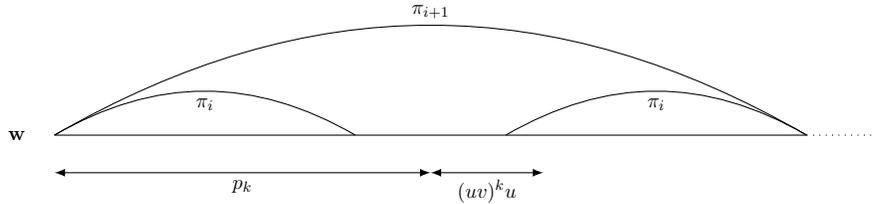

\begin{proof}
Let $(i, k)$ be an element of ${\cal I}_1$ (see Figure~\ref{fig:2}).
We have $|\pi_i| \leq |p_k|$. 
Let $s_k$ be the word such that $\pi_{i+1} = p_k (uv)^ku s_k$.
As $\pi_{i+1}$, $u$ and $v$ are palindromes,
by definition of $p_k$, 
it follows that $|s_k| = |\widetilde{s_k}| \geq |p_k|$.
As $|\pi_i| \leq |p_k|$,
by definition of $p_k$, 
the word $(uv)^ku$ occurs in $\pi_{i+1}$ but does not occur nor overlap the prefix of $\pi_i$.
Thus if ${\cal I}_1$ is infinite, Lemma~\ref{L:main idea} raises a contradiction with $\maxlgpalpref{\bw}$ finite.
\end{proof}

\begin{fct}
\label{F:fact2}
Hypothesis ``${\cal I}_2$ is infinite" is contradictory
\end{fct}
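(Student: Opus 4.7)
The plan is to assume for contradiction that $\mathcal{I}_2$ is infinite and analyze the palindromic structure of $\pi_{i+1}$ for $(i,k) \in \mathcal{I}_2$. Writing $\pi_{i+1} = p_k (uv)^k u\, s$, the Case~2 hypothesis $|p_k(uv)^k u| \geq |\pi_{i+1}|-|\pi_i|$ is equivalent to $|s| \leq |\pi_i|$, and the Case~2 hypothesis $|p_k| < |\pi_i|$ remains. From the palindromicity of $\pi_{i+1}$ one has $\pi_{i+1} = \widetilde{\pi_{i+1}} = \tilde{s}\,(uv)^k u\,\tilde{p_k}$, so $\tilde{s}\,(uv)^k u$ is a prefix of $\bw$; by minimality of $p_k$, this forces $|p_k| \leq |s|$. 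Hence $\pi_{i+1}$ contains two (possibly coinciding) occurrences of the palindrome $(uv)^k u$, at positions $|p_k|+1$ and $|s|+1$, separated by the shift $d := |s| - |p_k| \geq 0$.

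The first main step handles $d > 0$. If $d$ is a positive multiple of $|uv|$, say $d = m|uv|$ with $m \geq 1$, the two occurrences align and their union is a factor of $(uv)^\omega$ of length $|(uv)^k u| + d = (k+m)|uv| + |u| \geq |(uv)^{k+1} u|$. In particular $(uv)^{k+1} u$ is a factor of $\pi_{i+1}$, contradicting $(i,k) \in \mathcal{I}$. If $d$ is not a multiple of $|uv|$, the overlap region of length $|(uv)^k u| - d$ carries two periods $|uv|$ (inherited from each occurrence, which lies in $(uv)^\omega$) and $d$ (from the shift between occurrences). Since ${\cal I}_2$ is infinite one may take $k$ arbitrarily large, so the Fine--Wilf hypothesis is satisfied and $\gcd(|uv|, d) < |uv|$ becomes a period of a window of length $|uv|$ inside $\pi_{i+1}$; this contradicts the primitivity of $uv$.

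The second main step handles the boundary case $d = 0$, that is, $|s| = |p_k|$, so $s = \tilde{p_k}$ and $\pi_{i+1} = p_k (uv)^k u\, \tilde{p_k}$ is symmetric about the central palindrome $(uv)^k u$. Here the two occurrences coincide and the Fine--Wilf argument is unavailable. One then studies $\pi_i$: since $|p_k| < |\pi_i| < |\pi_{i+1}|/2$, one can write $\pi_i = p_k \gamma$ with $\gamma$ a prefix of $(uv)^k u$, and the palindromicity of $\pi_i$ forces $\gamma = \gamma' \tilde{p_k}$ for some palindrome $\gamma'$ that is itself a palindromic prefix of $(uv)^k u$. If infinitely many $(i,k) \in \mathcal{I}_2$ fall in this boundary case, an iterative analysis of such ``centered'' palindromic prefixes, combined with the monotonicity of $(|\pi_{n+1}| - |\pi_n|)_{n \geq 1}$ provided by Lemma~\ref{L:tool2}, either produces a periodic structure on $\bw$ (contradicting non-periodicity) or exhibits a palindromic factor of $\pi_{i+1}$ that is disjoint from the prefix $\pi_i$, so that Lemma~\ref{L:main idea} applies and yields the desired contradiction with $\maxlgpalpref{\bw}$ being finite.

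The main obstacle is this last boundary case $d = 0$: the two occurrences of $(uv)^k u$ coincide, so the clean Fine--Wilf / primitivity argument that handles $d > 0$ breaks down, and one must extract structural information from the inner palindrome $\gamma'$ and propagate it through the sequence $(\pi_n)_{n \geq 1}$; routine checks on the supporting arithmetic (the exact Fine--Wilf range, the fact that ${\cal I}_2$ contains $(i,k)$ with $k$ arbitrarily large, and the periodicity of cyclic conjugates of the primitive word $uv$) will appear along the way but do not require new ideas.
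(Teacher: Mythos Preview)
Your treatment of the case $d = |s| - |p_k| > 0$ is essentially correct and amounts to an alternative proof of the paper's Claim~\ref{claim1} (which shows $\widetilde{s_k} = p_k$, i.e.\ $d = 0$, via a direct primitivity argument rather than Fine--Wilf). So far so good.

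The genuine gap is the case $d = 0$, which is not a boundary case but the \emph{entire} content of the proof. Your proposed decomposition $\gamma = \gamma'\,\tilde{p_k}$ requires $|\gamma| \geq |p_k|$; but $\gamma$ is a prefix of $(uv)^k u$ whose reversal $\tilde{\gamma}$ is a prefix of $\bw$, so by Fact~\ref{F:R1} one has $|\gamma| \leq L$, while $|p_k| \to \infty$ as $k \to \infty$ since $\bw$ is not ultimately periodic. Hence your decomposition fails for all but finitely many $(i,k) \in \mathcal{I}_2$, and the subsequent ``iterative analysis'' has no starting point. More broadly, the appeal to Lemma~\ref{L:main idea} is unsubstantiated: in Case~2 the palindrome $(uv)^k u$ genuinely overlaps the prefix $\pi_i$ (since $|p_k| < |\pi_i| < |p_k (uv)^k u|$), and you give no construction of an alternative palindrome lying entirely to the right of $\pi_i$.

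The paper's argument for $d = 0$ is quite different from what you sketch. Writing $\pi_{i+1} = \pi_i \pi \pi_i$ and $(uv)^k u = p\,\pi\,\tilde{p}$ with $p = \pi_i p_k^{-1}$ (so $|p| \leq L$), one expresses $\pi = y(uv)^\ell u\,\tilde{y}$ for suitable $y$ and $\ell$, then proves (Claim~\ref{claim2}) that for a specific short word $q$ the palindrome $\pi q$ is the \emph{longest} palindromic prefix of $\pi p_{\ell+n}$. This yields $\lgpal{\pi_i(\pi q)z} = 2 + \lgpal{z}$ for every prefix $z$ of $q^{-1}\bw$, hence $\maxlgpalpref{q^{-1}\bw} \leq \maxlgpalpref{\bw} - 2$, and the minimality hypothesis on $\maxlgpalpref{\bw}$ (stated at the start of Section~\ref{subsec:proofThLGPAL}) forces $q^{-1}\bw$, and thus $\bw$, to be ultimately periodic. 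Neither Lemma~\ref{L:main idea} nor the monotonicity from Lemma~\ref{L:tool2} plays a role here; the key missing idea in your proposal is this reduction of $\maxlgpalpref{\cdot}$ by~$2$ via the explicit identification of the longest palindromic prefix $\pi q$.
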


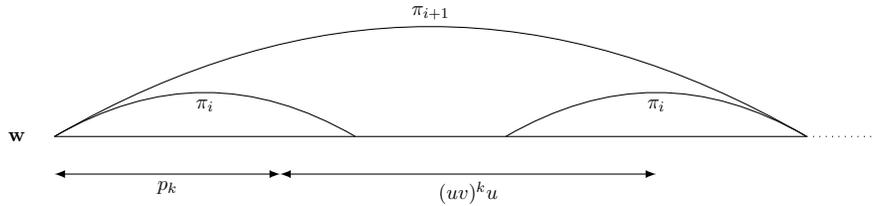
\begin{figure}[h]
\begin{center}
\begin{tikzpicture}[scale=0.5]
\coordinate (G) at (0,0) ;
\coordinate (D) at (20,0);
\coordinate (q) at (12,0);
\coordinate (p) at (8,0);

\node[scale = 0.7] (w) at (-1,0) {$\mathbf{w}$} ;
\draw (G) -- (D) ;
\draw[dotted] (D) -- (22, 0) ;
\draw (0, 0) to[bend left] node[above,	midway,scale = 0.7]{$\pi_{i+1}$}  (20,0)  ;
\draw (G) to[bend left] node[below,	midway,scale = 0.7]{$\pi_{i}$}  (p)  ;
\draw (D) to[bend right] node[below,	midway,scale = 0.7]{$\pi_{i}$}  (q)  ;
\draw[<->,>=latex] (0,-1) -- node[below, midway,scale = 0.7] {$p_k$} (6, -1);
\draw[<->,>=latex] (6,-1) -- node[below, midway,scale = 0.7] { $(uv)^ku$} (16, -1);
\end{tikzpicture}
\caption{Case 2 with $i = j_k$; elements of ${\cal I}_2$}
\label{fig:3}
\end{center}
\end{figure}

\begin{proof}
Let $(i, k)$ be an element of ${\cal I}_2$ (see Figure~\ref{fig:3}).
Let $p$ be the word such that $\pi_i = p_k p$. 
As $|p_k(uv)^ku| \geq |\pi_{i+1}|-|\pi_i|$ and $|\pi_{i+1}| > 2|\pi_i|$,
$|p_k(uv)^ku| > |\pi_i|$, and so,
$p$ is a prefix of $(uv)^k u$.
It follows that $\tilde{p}$ belongs to 
$\pref(\bw)\cap \fact((uv)^ku)$.
By definition of $L$, $|\tilde{p}| \leq L$.
The word $p$, that depends on $k$, can take only a finite number of values.
Possibly replacing ${\cal I}_2$ by an infinite subset, 
we assume that this value is the same for all
elements of ${\cal I}_2$.
Similarly we can assume that for all $(i, k) \in {\cal I}_2$, 
$k \geq 4L+6$ and, as $\lim_{k \to \infty} |\pi_i| = \infty$,
$|\pi_i| > 2 |p|$. 
For $(i, k)$ in ${\cal I}_2$, let $\pi_i'$ denote the word such that $\pi_i = \tilde{p}\pi_i'p$.

Let $s_k$ be the word such that $\pi_{i+1} = p_k(uv)^kus_k$.
As $\pi_{i+1}$, $u$ and $v$ are palindromes, $\pi_{i+1} = \widetilde{s_k} (uv)^ku \widetilde{p_k}$.

\begin{clm}
\label{claim1}
$\widetilde{s_k} = p_k$
\end{clm}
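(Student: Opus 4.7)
The strategy is to prove the length equality $|s_k| = |p_k|$: once we have this, both $\widetilde{s_k}$ and $p_k$ are prefixes of $\pi_{i+1}$ of the same length, so they must coincide. The inequality $|s_k| \geq |p_k|$ is easy. Reversing $\pi_{i+1} = p_k(uv)^k u s_k$ — which is legitimate because $\pi_{i+1}$ and $(uv)^k u$ are both palindromes (the latter since $u$ and $v$ are palindromes, and $(uv)^k u = u(vu)^k$) — yields $\pi_{i+1} = \widetilde{s_k}(uv)^k u \widetilde{p_k}$, so $(uv)^k u$ occurs at position $|s_k|$ in $\pi_{i+1}$. The minimality of $p_k$ as the position of the first occurrence of $(uv)^k u$ in $\bw$ then forces $|s_k| \geq |p_k|$.

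For the reverse inequality I argue by contradiction, assuming $d := |s_k| - |p_k| > 0$. Since $|s_k| \leq |\pi_i| = |p_k| + |p|$ and $|p| \leq L$, we get $d \leq L$. The bounds already collected for $(i,k) \in \mathcal{I}_2$ (in particular $k \geq 4L+6$) imply $d + |uv| \leq |(uv)^k u|$, so the two occurrences of $(uv)^k u$ at positions $|p_k|$ and $|p_k|+d$ in $\pi_{i+1}$ overlap. Reading the overlap in two ways exhibits $d$ as a period of $(uv)^k u$; and $(uv)^k u$ also has $|uv|$ as a period. The length condition above is exactly the hypothesis of Fine and Wilf's theorem, which gives that $\gcd(d, |uv|)$ is a period of $(uv)^k u$. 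Since $uv$ is primitive, $(uv)^k u$ admits no period strictly smaller than $|uv|$ (any smaller period would, by Fine--Wilf again, produce a period of $uv$ that is a proper divisor of $|uv|$). Consequently $|uv|$ divides $d$, and we write $d = m|uv|$ with $m \geq 1$.

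The contradiction now comes from concatenating the two occurrences. The last $m|uv|$ characters of $(uv)^k u$ are $(vu)^m$, so the window of $\pi_{i+1}$ between positions $|p_k|$ and $|p_k| + |(uv)^k u| + d - 1$ reads
\[
 (uv)^k u \cdot (vu)^m = u(vu)^{k+m} = (uv)^{k+m}u,
\]
using the identity $(uv)^j u = u(vu)^j$. Since $m \geq 1$, the word $(uv)^{k+1}u$ occurs as a factor of $\pi_{i+1}$, contradicting the defining property of $\mathcal{I}$. Hence $d = 0$ and the claim follows. The main delicacy is arranging the Fine and Wilf hypothesis and extracting from the overlap a factor with exponent strictly larger than $k$: these are precisely the roles played by the earlier restriction $k \geq 4L+6$ and by the primitivity of $uv$.
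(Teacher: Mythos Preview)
Your proof is correct and follows essentially the same route as the paper's: both establish $|p_k|\leq |s_k|\leq |\pi_i|$, use the bound $|s_k|-|p_k|\leq |p|\leq L$ together with $k\geq 4L+6$ to force the shift between the two occurrences of $(uv)^ku$ to be a multiple of $|uv|$, and then exhibit a factor $(uv)^{k+1}u$ of $\pi_{i+1}$ to reach a contradiction. The only cosmetic difference is that the paper invokes the elementary fact that a primitive word is not an internal factor of its square (so $xuv$ prefix of $(uv)^k$ forces $x=(uv)^\ell$), whereas you reach the same conclusion via Fine and Wilf; the two arguments are interchangeable here.
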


\begin{proof}[Proof of Claim~\ref{claim1}]
By hypotheses of Case~2, 
$|p_k(uv)^ku| \geq |\pi_{i+1}|-|\pi_i| = |p_k(uv)^kus_k| -|\pi_i|$.
Thus $|\pi_i| \geq |s_k|$. 
Words $\widetilde{s_k}(uv)^ku$ and $p_k(uv)^ku$ are prefixes of $\pi_{i+1}$.
Definition of $p_k$ implies that $|s_k| = |\widetilde{s_k}| \geq |p_k|$.

Let $x$ be the word such that $\widetilde{s_k} = p_k x$.
As $|\pi_i| \geq | \widetilde{s_k}|$ and $\pi_i = p_k p$, 
the word $x$ is a prefix of $p$.
Moreover as $\widetilde{s_k}(uv)^ku$ and $p_k(uv)^ku$ are prefixes of $\pi_{i+1}$,
$(uv)^ku$ is a prefix of $x(uv)^ku$.
We have restricted the set ${\cal I}_2$ to elements $(i, k)$ such that $k \geq 4L+6 \geq 1+|p|$.
Thus $k|uv| \geq |uv|+|p| \geq |uv|+|x|$: $xuv$ is a prefix of $(uv)^k$.
Let recall that $uv$ is a primitive word. 
Thus it cannot be an internal factor of $uvuv$. It follows that there exists an integer $\ell$ 
such that $x = (uv)^\ell$.

As $\widetilde{s_k}(uv)^ku$ is a prefix of $\pi_{i+1}$, $(uv)^{\ell+k}u$ is a factor of $\pi_{i+1}$. Just before splitting into three cases the proof, we assume that, for $(i, k) \in {\cal I}$, $(uv)^{k+1}u$ is not a factor of $\pi_{i+1}$.
Hence $\ell = 0$ and $\widetilde{s_k} = p_k$
\end{proof}

Let $\pi$ be the palindrome such that $\pi_{i+1} = \pi_i \pi \pi_i$.
Let recall that
$\pi_{i+1} = p_k(uv)^ku\widetilde{p_k}$, $\pi_i = \tilde{p}\pi_i' p$, $p_k = \tilde{p}\pi_i'$. 
Hence $(uv)^k u = p \pi \tilde{p}$. 
There exist words 
$x$, $y$ and integers $m$, $\ell$ such that $p = (uv)^m x$ with $|x| < |uv|$, 
$uv = xy$ ($\tilde{y}\tilde{x} = vu$ as $u$ and $v$ are palindromes), 
$\pi = y(uv)^\ell u \tilde{y}$.
Observe that $(uv)^k u = p\pi\tilde{p} = (uv)^{\ell+2m+2}u$ and so $k = \ell +2m+2$.
As $k \geq 4L+6$ and $L \geq m$ (as $|p| \leq L$), we have $\ell \geq 2L+4$.

Let $n$ be the greatest integer such that $(\tilde{x}\tilde{y})^n$ is a prefix of $\pi_i$
($n = 0$, if $\tilde{x}\tilde{y}$ is not a prefix of $\pi_i$). 
Let $q$ denote the word $(\tilde{x}\tilde{y})^n$ ($q = \varepsilon$ if $n = 0$ and $q = \tilde{x}(vu)^{n-1}\tilde{y}$ otherwise). 
As $q \in \pref(\bw) \cap \fact((uv)^\omega)$, $|q| \leq L$ and so $n \leq L$.
Thus $\ell +n < k$. The word $p_{\ell+n}$ is a prefix of $\pi_i$.

As $\ell \geq k-2L-2$,
$\lim_{k \to \infty, (i,k) \in {\cal I}_2} \ell = +\infty$, and
as $\bw$ is not ultimately periodic,
$\lim_{k \to \infty, (i,k) \in {\cal I}_2} p_{\ell+n} = +\infty$.
Possibly removing, once again, a finite number of elements of ${\cal I}_2$, 
one can assume $|p_{\ell+n}| \geq |q|+|\tilde{x}\tilde{y}|$.

\begin{clm}
\label{claim2}
$\pi q = y(uv)^{\ell+n}u\tilde{y}$ is the longest palindromic prefix of $\pi p_{\ell+n}$.
\end{clm}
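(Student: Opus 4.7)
First I would quickly verify the two easy ingredients. The word $\pi q = y(uv)^{\ell+n}u\tilde y$ is a palindrome: since $u$ and $v$ are palindromes, $u(vu)^{\ell+n} = (uv)^{\ell+n}u$, and the flanking $y$ and $\tilde y$ preserve palindromicity. The word $\pi q$ is a prefix of $\pi p_{\ell+n}$: indeed $q$ is a prefix of $p_{\ell+n}$ because both are prefixes of $\bw$ and, by the assumption just made before the claim, $|p_{\ell+n}| \geq |q| + |\tilde x \tilde y|$.

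Next, I would proceed by contradiction: assume $\pi' = \pi q s$ is a palindromic prefix of $\pi p_{\ell+n}$ with $|s| \geq 1$, chosen of minimal length among all such prefixes longer than $\pi q$. Rewriting $\widetilde{\pi q s} = \pi q s$, and using $\tilde q = (yx)^n$ together with $(yx)^n y = y(xy)^n = y(uv)^n$, one obtains the identity $\tilde s \cdot \pi q = \pi q \cdot s$. By minimality we may take $|s| < |\pi q|$, so this says $|s|$ is a period of $\pi q$.

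The middle factor $(uv)^{\ell+n}u$ of $\pi q$ has primitive period $|uv|$, and because $\ell \geq 2L+4$ this factor is very long compared to $|y|$, $|\tilde y|$, $|x|$. A Fine--Wilf style argument then forces $|s|$ to be a multiple of $|uv|$, say $|s| = j|uv|$ with $j \geq 1$. The identity $\tilde s \pi q = \pi q s$ now tells us that the $|s|$ characters of $p_{\ell+n}$ that immediately follow $q$ coincide with the prefix of $\pi q$ of the same length; equivalently, $q s$ is a prefix of $\pi q$, and hence the prefix of $\bw$ starting at position $1$ contains the word $y(uv)^{\ell+n+?}u\tilde y$ shifted by $|q|$ symbols earlier than expected. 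Translating this into positions within $\bw$ (using that $p_{\ell+n}$ is a prefix of $\bw$), I would produce an occurrence of $(uv)^{\ell+n}u$ in $\bw$ ending at a position $\leq |p_{\ell+n}(uv)^{\ell+n}u| - |uv|$, which contradicts the minimality of $p_{\ell+n}$.

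The main obstacle will be the last step: turning the algebraic identity $\tilde s \pi q = \pi q s$ into a genuine earlier occurrence of $(uv)^{\ell+n}u$ inside $\bw$. The subtlety is that $\pi p_{\ell+n}$ is not itself a factor of $\bw$: the $\pi$ portion comes from the middle of $\pi_{i+1}$ while the $p_{\ell+n}$ portion comes from the prefix of $\bw$. The argument must exploit the explicit shape $\pi = y(uv)^\ell u\tilde y$, the decomposition $p = (uv)^m x$, the relation $uv = xy$, and the maximality of $n$ in the definition of $q$ to show that the period $|s|$ obtained from the supposed longer palindromic prefix really forces extra copies of $uv$ to appear inside the initial segment of $\bw$, beyond what the definition of $p_{\ell+n}$ permits.
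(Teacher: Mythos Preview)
Your sketch contains a genuine gap at exactly the point you flag as ``the main obstacle,'' but the difficulty arises earlier than you think. The assertion ``By minimality we may take $|s| < |\pi q|$'' is unjustified: the next palindromic prefix after a given one can be more than twice as long (e.g.\ in $abccba$ the palindromic prefixes are $a$ and $abccba$). Without a bound on $|s|$, the Fine--Wilf step does not go through, because you need $|s| + |uv|$ to be at most the length of the block $(uv)^{\ell+n}u$ in order to force $|uv| \mid |s|$. Nothing in your argument so far controls $|p_{\ell+n}| - |q|$, which is the only upper bound on $|s|$ you have.

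The paper obtains the required bound by a direct observation you are circling around but never use: by the very definition of $p_{\ell+n}$ as the \emph{shortest} prefix with $p_{\ell+n}(uv)^{\ell+n}u$ a prefix of $\bw$, the word $(uv)^{\ell+n}u$ is not a factor of $p_{\ell+n}$. Since any palindromic prefix $\pi'$ of $\pi p_{\ell+n}$ with $|\pi'| \ge |\pi q|$ must end in $(uv)^{\ell+n}u\tilde y$, that trailing copy of $(uv)^{\ell+n}u$ cannot lie entirely in the $p_{\ell+n}$ portion; it must overlap the prefix $\pi$. Hence $\pi' = \pi p$ for some prefix $p$ of $p_{\ell+n}$ that is simultaneously a suffix of $(uv)^{\ell+n}u\tilde y$, so $p \in \pref(\bw) \cap \fact((uv)^\omega)$ and therefore $|p| \le L$. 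Now the prefix $\pi q$ and the suffix $(uv)^{\ell+n}u\tilde y$ of $\pi'$ overlap in a factor of length at least $2|uv|$ (using $\ell \ge 2L+4$), and primitivity of $uv$ forces $\pi' = \pi(\tilde x\tilde y)^{n'}$ for some $n' \ge n$; maximality of $n$ then gives $n' = n$. Your worry that ``$\pi p_{\ell+n}$ is not itself a factor of $\bw$'' is a red herring: the argument never needs this (in fact it \emph{is} a factor, sitting inside $\pi_{i+1} = \pi_i\pi\pi_i$), only that $(uv)^{\ell+n}u$ is absent from the word $p_{\ell+n}$.
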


\begin{proof}[Proof of Claim~\ref{claim2}]
Let $\pi'$ be the longest palindromic prefix of $\pi p_{\ell+n}$.
As $|p_{\ell+n}| \geq |q|+|\tilde{x}\tilde{y}|$, the word $\pi'$ begins with the palindrome $\pi''$ defined by $\pi'' =\pi q = y(uv)^{\ell+n}u\tilde{y}$.
As $\pi'$ is a palindrome, $\pi'$ ends with $\pi''$.
There exists a word $z$ such that $\pi' = z y(uv)^{\ell+n}u\tilde{y}$.

By definition of $p_{\ell+n}$, the word $(uv)^{\ell+n}u$ is not a factor of $p_{\ell+n}$.
Thus any occurrence of $(uv)^{\ell+n}u$ in $\pi p_{\ell+n}$ 
must occur in or overlap the prefix $\pi$.
Thus $|zy| \leq |\pi|$ and $\pi' = \pi p$ 
with $p$ both a prefix of $p_{\ell+n}$
and a suffix of $(uv)^{\ell+n}u \tilde{y}$.
As $p$ is a factor of $(uv)^\omega$ and a prefix of $\bw$, $|p| \leq L$.
As $\ell \geq 2L+4$, it follows that the prefix 
$\pi''=y(uv)^{\ell+n}u\tilde{y}$ of $\pi'$
overlaps the suffix $(uv)^{\ell+n}u\tilde{y}$ of $\pi'$ 
by a factor of length at least $2|uv|$.


As $uv$ is primitive, 
$uv$
is not an internal factor of $uvuv$.
Thus $\pi' = \pi(\tilde{x}\tilde{y})^{n'} = y(uv)^{\ell+n'}\tilde{y}$ for some integer $n'$.
As $\pi(\tilde{x}\tilde{y})^{n}$ is a prefix of $\pi p_{\ell+n}$, $n' \geq n$.
Maximality of $n$ in its definition implies $n' = n$, that is, $\pi' = \pi$
\end{proof}

We have already seen that $\lim_{k \to \infty, (i,k) \in {\cal I}_2} \ell = +\infty$.
Consequently $\lim_{k \to \infty, (i,k) \in {\cal I}_2} q^{-1}p_{\ell+n} = +\infty$.
Let $z$ be a prefix of $q^{-1}\bw$. There exist integers $i$, $k$, $\ell$ such that 
$qz$ is a prefix of $p_{\ell+n}$ itself a prefix of $\pi_i$ and for a palindrome $\pi$, $\pi_{i+1} = \pi_i\pi\pi_i$.
Moreover from what precedes, especially from Claim~\ref{claim2}, 
$\lgpal{\pi_i(\pi q)z} = 2 + \lgpal{z}$.
Hence $\maxlgpalpref{q^{-1}\bw} = \maxlgpalpref{\bw}-2 < \maxlgpalpref{\bw}$.
By an initial hypothesis (see the second paragraph of Section~\ref{subsec:proofThLGPAL}), we deduce that $q^{-1}\bw$ is ultimately periodic. So is $\bw$.
Hence Hypothesis ``${\cal I}_2$ is infinite" is contradictory.
\end{proof}

\begin{fct}
\label{F:fact3}
Hypothesis ``${\cal I}_3$ is infinite" is contradictory
\end{fct}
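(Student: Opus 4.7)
My plan is to invoke Lemma~\ref{L:main idea} using, for each $(j_k, k) \in {\cal I}_3$, the palindrome $(uv)^k u$ itself. By definition of $j_k$, the word $(uv)^k u$ is a factor of $\pi_{j_k+1}$ but does not occur as a factor of $\pi_{j_k}$. Moreover, its first occurrence in $\pi_{j_k+1}$ sits at positions $|p_k|$ through $|p_k(uv)^k u|$, and the Case~3 inequality $|p_k(uv)^k u| < |\pi_{j_k+1}| - |\pi_{j_k}|$ is precisely what I need to locate a second occurrence that is disjoint from the prefix $\pi_{j_k}$.

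The key step is to exploit that both $\pi_{j_k+1}$ and $(uv)^k u$ are palindromes. Reflecting the first occurrence across the center of $\pi_{j_k+1}$ yields another occurrence of $(uv)^k u$ inside $\pi_{j_k+1}$, now starting at position $|\pi_{j_k+1}| - |p_k(uv)^k u|$, which by the Case~3 inequality is strictly larger than $|\pi_{j_k}|$. Hence this mirror occurrence is a factor of $\pi_{j_k+1}$ that lies entirely in $\pi_{j_k}^{-1}\pi_{j_k+1}$ and therefore neither occurs in nor overlaps the prefix $\pi_{j_k}$. This is exactly the hypothesis of Lemma~\ref{L:main idea} at the index $i = j_k$ with palindrome $(uv)^k u$.

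Finally, I need to confirm that infinitely many distinct indices $j_k$ arise as $(j_k, k)$ ranges over ${\cal I}_3$. This is built into the definition of ${\cal I}$: the requirement that $(uv)^{k+1} u$ is not a factor of $\pi_{j_k+1}$ forces $j_{k+1} > j_k$, so $k \mapsto j_k$ is strictly increasing. Thus if ${\cal I}_3$ were infinite, Lemma~\ref{L:main idea} would yield $\maxlgpalpref{\bw} = \infty$, contradicting the standing assumption. I expect no genuine obstacle here: among the three cases, Case~3 is in fact the easiest, because its geometric condition supplies the desired palindromic occurrence disjoint from $\pi_{j_k}$ purely by palindromic symmetry, without the delicate combinatorial analysis required to handle Cases~1 and~2.
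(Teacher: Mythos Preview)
Your application of Lemma~\ref{L:main idea} is not valid. The hypothesis of that lemma is \emph{not} merely that $p_i$ has \emph{some} occurrence in $\pi_{i+1}$ disjoint from the prefix $\pi_i$; it is that $p_i$ has \emph{no} occurrence in $\pi_{i+1}$ that lies in or overlaps the prefix $\pi_i$. This is exactly how it is used in the proof of the lemma: after writing $\pi_i^{-1}\pi_{i+1}=xp_iy$, one reflects to get a prefix $\tilde{y}p_i\tilde{x}$ of $\pi_{i+1}$, and the step ``by hypothesis, $|\tilde{y}|\ge|\pi_i|$'' requires that \emph{every} occurrence of $p_i$ starts at position at least $|\pi_i|$. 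In Case~3 this fails for $p_i=(uv)^ku$: by definition the first occurrence of $(uv)^ku$ starts at position $|p_k|$, and Case~3 explicitly assumes $|p_k|<|\pi_{j_k}|$, so that occurrence overlaps the prefix $\pi_{j_k}$. Your reflected occurrence is indeed disjoint from $\pi_{j_k}$, but reflecting it again (as the lemma's proof does) sends you right back to the overlapping occurrence at position $|p_k|$, and the argument breaks down. Case~1 is easy precisely because there $|\pi_i|\le|p_k|$ rules out any such overlap; Case~3 is not simply a mirror of Case~1.

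The paper therefore cannot invoke Lemma~\ref{L:main idea} here and instead imitates its proof directly. Writing $\pi_{i+1}=\pi_i\pi\pi_i$ (possible since $2|\pi_i|<|\pi_{i+1}|$), the Case~3 inequality $|p_k(uv)^ku|<|\pi_{i+1}|-|\pi_i|$ gives that a word of the form $y(uv)^\ell u$ (with $\ell<k$) is a prefix of $\pi$, hence $(uv)^\ell u\tilde{y}$ is a suffix of $\pi$. Using primitivity of $uv$ and the fact that $(uv)^{k+1}u$ is forbidden in $\pi_{i+1}$, the paper shows (Claim~\ref{claim3}) that $\pi$ is the longest palindromic prefix of $\pi p_\ell$ --- crucially only of $\pi p_\ell$, not of $\pi\pi_i$, because $(uv)^\ell u$ may well occur again inside $\pi_i$ past position $|p_\ell|$. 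Since $|p_\ell|\to\infty$, this still yields $\lgpal{\pi_i\pi z}=2+\lgpal{z}$ for arbitrarily long prefixes $z$ of $\bw$, and hence the contradiction $\maxlgpalpref{\bw}=\maxlgpalpref{\bw}-2$. The combinatorial work you hoped to avoid is precisely what handles the overlap that Case~3 forces.
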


\begin{figure}[h]
\begin{center}
\begin{tikzpicture}[scale=0.5]
\coordinate (G) at (0,0) ;
\coordinate (D) at (20,0);
\coordinate (q) at (12,0);
\coordinate (p) at (8,0);

\node[scale = 0.7] (w) at (-1,0) {$\mathbf{w}$} ;
\draw (G) -- (D) ;
\draw[dotted] (D) -- (22, 0) ;
\draw (0, 0) to[bend left] node[above,	midway,scale = 0.7]{$\pi_{i+1}$}  (20,0)  ;
\draw (G) to[bend left] node[below,	midway,scale = 0.7]{$\pi_{i}$}  (p)  ;
\draw (D) to[bend right] node[below,	midway,scale = 0.7]{$\pi_{i}$}  (q)  ;
\draw[<->,>=latex] (0,-1) -- node[below, midway,scale = 0.7] {$p_k$} (6, -1);
\draw[<->,>=latex] (6,-1) -- node[below, midway,scale = 0.7] { $(uv)^ku$} (10, -1);
\end{tikzpicture}
\caption{Case 3 with $i = j_k$; elements of ${\cal I}_3$}
\label{fig:4}
\end{center}
\end{figure}
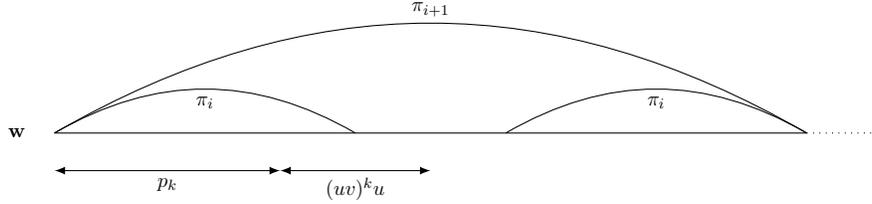

\begin{proof}
Assume by contradiction that ${\cal I}_3$ is infinite.
As in the proof of Fact~\ref{F:fact2}, possibly replacing ${\cal I}_3$ with an infinite subset, as a consequence of Fact~\ref{F:R1}, we can assume $\ell \geq 2+L$.
Let $(i, k)$ be an element of ${\cal I}_3$  (see Figure~\ref{fig:3}).
As $|\pi_i| > |p_k|$, let $p$ denote the suffix of $\pi_i$ such that $\pi_i = p_k p$.
By definition of $p_k$ and $i = j_k$,
$|p_k(uv)^ku| > |\pi_i|$: $p$ is a prefix of $(uv)^ku$.
There exist words $x$, $y$ and integers $m$, $\ell$ such that 
$p = (uv)^mx$, $k = \ell +m+1$, $uv = xy$, $|x| < |uv|$.
Thus $y(uv)^\ell u$ is a prefix of $\pi\pi_i$ where 
$\pi$ is the word such that $\pi_{i+1} = \pi_i\pi\pi_i$ 
(let recall that $2|\pi_i| > |\pi_{i+1}|$).

Let us use the hypothesis $|p(uv)^ku| < |\pi_{i+1}|-|\pi_i| = |\pi_{i+1}\pi_i^{-1}|$.
It implies that $y(uv)^\ell u$ is a prefix of $\pi$.
As $\pi$ is a palindrome (as $\pi_{i+1} = \pi_i \pi \pi_i$ is a palindrome),
$(uv)^\ell u\tilde{y}$  is also a suffix of $\pi$.
Note that $\ell < k$ and so $p_\ell$ is a prefix of $p_k$ so of $\pi_i$.

\begin{clm}\label{claim3}
$\pi$ is the longest palindromic prefix of $\pi p_\ell$.
\end{clm}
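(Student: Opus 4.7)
The plan is to argue by contradiction. Let $\pi'$ be the longest palindromic prefix of $\pi p_\ell$; since $\pi$ itself is a palindromic prefix of $\pi p_\ell$, $|\pi'| \geq |\pi|$, so I write $\pi' = \pi q$ for some prefix $q$ of $p_\ell$, and aim to show $q = \varepsilon$. Throughout I exploit the double identity $\pi' = \pi q = \tilde q\,\pi$ coming from $\pi'$ being a palindrome.

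I would first rule out $|q| \geq |\pi|$. In that case the occurrence of $\pi$ at position $|q|+1$ of $\pi'$ (from $\pi' = \tilde q\,\pi$) lies entirely inside the $q$-region of $\pi q$, so $\pi$ is a factor of $q$; but then $(uv)^\ell u$, being a factor of the prefix $y(uv)^\ell u$ of $\pi$, is a factor of $q$, hence of $p_\ell$, contradicting the defining property of $p_\ell$ (its first occurrence in $\bw$ is at position $|p_\ell|+1$). Once $|q| < |\pi|$, the word $q$ is a suffix of $\pi$. The same $p_\ell$-argument applied to the occurrence of $(uv)^\ell u$ coming from the suffix $(uv)^\ell u\tilde y$ of $\pi'$ forces that occurrence to start inside $\pi$, giving $|q| \leq |y| + |(uv)^\ell u| - 1$. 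Hence $q$ is a suffix of $(uv)^\ell u\tilde y$, so $q \in \pref(\bw) \cap \fact((uv)^\omega)$, and by the definition of $L$, $|q| \leq L$.

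Next comes a Fine--Wilf argument. The identity $\pi' = \pi q = \tilde q\,\pi$ endows $\pi$ with period $|q|$, and hence its prefix $y(uv)^\ell u$ also has period $|q|$; this prefix already has period $|uv|$ since it lies inside $(uv)^\omega$. The bound $|y(uv)^\ell u| \geq \ell|uv| \geq (L+2)|uv| \geq |uv| + |q|$, which uses $\ell \geq L + 2$ and $|q| \leq L$, makes Fine--Wilf applicable and yields period $\gcd(|uv|, |q|)$. Since $uv$ is primitive, no factor of $(uv)^\omega$ of length at least $2|uv|$ admits a period smaller than $|uv|$; hence $\gcd(|uv|, |q|) = |uv|$, i.e. $|uv|$ divides $|q|$, and I write $|q| = r|uv|$ with $r \geq 1$.

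The main obstacle is to turn $|q| = r|uv|$ into a contradiction. The idea is to propagate the period $|q| = r|uv|$ of $\pi$ backwards through its suffix $(uv)^\ell u\tilde y$: each backward shift by $r|uv|$ produces a new occurrence of $(uv)^\ell u$ inside $\pi$, perfectly aligned with the $(uv)^\omega$-lattice because $|q|$ is a multiple of $|uv|$. Consecutive aligned occurrences fuse, by primitivity of $uv$, into a single block of the form $(uv)^K u$ whose exponent $K$ grows linearly with the number of available shifts. The Case 3 hypothesis $|p_k(uv)^k u| < |\pi_{i+1}| - |\pi_i|$ provides enough room inside $\pi$ to push $K$ past $k$, producing $(uv)^{k+1} u$ as a factor of $\pi \subseteq \pi_{i+1}$ and contradicting $(i,k) \in \mathcal I$. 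Making this slack estimate tight, and verifying that the fused block is exactly of the form $(uv)^{k+1}u$ rather than some off-phase variant (which is where the primitivity of $uv$ is used most delicately), is the principal technical difficulty of the proof.
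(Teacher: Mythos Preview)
Your first two steps are correct and in fact recover exactly what the paper obtains, just via Fine--Wilf instead of the paper's direct ``$uv$ is not an internal factor of $uvuv$'' argument: both routes yield that $|q|$ is a multiple of $|uv|$, and hence (since $\tilde q$ is the length-$r|uv|$ prefix of $\pi$, which begins with $y(uv)^\ell u=(yx)^{\ell}\cdots$) that $q=(\tilde x\tilde y)^r$. So up through the end of your second paragraph you are on solid ground and essentially aligned with the paper.

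The gap is in your third step. You try to locate $(uv)^{k+1}u$ \emph{inside $\pi$}, propagating the period $r|uv|$ backward from the suffix $(uv)^\ell u\tilde y$. But nothing guarantees $|\pi|\ge (k+1)|uv|+|u|$. Your appeal to the Case~3 hypothesis is a miscount: $|\pi_{i+1}|-|\pi_i|=|\pi_i|+|\pi|$, not $|\pi|$, so the inequality $|p_k(uv)^ku|<|\pi_{i+1}|-|\pi_i|$ only gives $|\pi|>|y(uv)^\ell u|$ (which you already knew), not the extra $|uv|$ of room you need. The ``slack estimate'' you flag as the principal difficulty is in fact not available.

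The paper closes the argument in one line, and you can too once you have $q=(\tilde x\tilde y)^r$ with $r\ge 1$: look at $\pi_i\pi'$ rather than $\pi$. Since $\pi'=\tilde q\,\pi=(yx)^r\pi$, one has
\[
\pi_i\pi'=p_k(uv)^mx\cdot(yx)^r\pi=p_k(uv)^{m+r}x\,\pi,
\]
and as $\pi$ begins with $y(uv)^\ell u$ this word begins with $p_k(uv)^{m+r+1+\ell}u=p_k(uv)^{k+r}u$. But $\pi_i\pi'=\pi_i\pi q$ is a prefix of $\pi_{i+1}=\pi_i\pi\pi_i$ (because $q$ is a prefix of $p_\ell$, hence of $\pi_i$), so $(uv)^{k+1}u$ occurs in $\pi_{i+1}$, contradicting $(i,k)\in\mathcal I$. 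No backward propagation, no length estimate on $\pi$ beyond what you already have.
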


\begin{proof}[Proof of Claim~\ref{claim3}]
Let $\pi'$ be the longest palindromic prefix of $\pi p_\ell$.
As $(uv)^\ell u$ does not occur in $p_\ell$, 
any occurrence of $(uv)^\ell u$ must occur in $\pi$ or overlap this prefix of $\pi p_\ell$.
The word $\pi'$ ends with $(uv)^\ell u\tilde{y}$.
So this occurrence must overlap the suffix $(uv)^\ell u\tilde{y}$ of $\pi$.
As in the proof of Claim~\ref{claim2} in the proof of Fact~\ref{F:fact2}, as $\ell \geq 2 +L$,
$\pi' = \pi (\tilde{x}\tilde{y})^n$ for some integer $n$.
If $n\geq 1$, then $y(uv)^{\ell+1}u$
is a prefix of $\pi$ and
$p_k(uv)^{k+1}u$ is a prefix of $\pi_{i+1}$.
This contradicts an earlier hypothesis on $i = j_k$.
Thus $y(uv)^{\ell+1}u$ is not a prefix of $\pi$: we get $n = 0$.
\end{proof}

Let us end the proof of Fact~\ref{F:fact3}.
Observe that $\lim_{k \to \infty, (i,k) \in {\cal I}_3} p_\ell = +\infty$.
Thus for any prefix $z$ of $\bw$, we can find an integer $i$ and a palindrome $\pi$ such that
$\lgpal{\pi_i\pi z} = 2 + \lgpal{z}$. 
It follows that $\maxlgpalpref{\bw} = \maxlgpalpref{\bw}-2$. This is impossible
\end{proof}

\section{Conclusion}

To summarize this paper, observe that the A.~Frid, S.~Puzynina and L.Q.~Zamboni's conjecture could be reformulated as follows. For an infinite word $\bw$ having infinitely many palindromic prefixes, the following assertions are equivalent:
\begin{enumerate}
\itemsep0cm
\item $\bw$ has bounded palindromic lengths of factors;
\item $\bw$ has bounded palindromic lengths of prefixes;
\item $\bw$ has bounded left greedy palindromic lengths of factors;
\item $\bw$ has bounded left greedy palindromic lengths of prefixes;
\item $\bw$ has bounded right greedy palindromic lengths of factors;
\item $\bw$ has bounded right greedy palindromic lengths of prefixes;
\item $\bw$ is periodic;
\item $\bw = (uv)^\omega$ with $u$ and $v$ two palindromes.
\end{enumerate}
Equivalence between assertions 3 to 8 are proved in this paper, and clearly assertion 8 implies assertion 1 which implies assertion 2.
That assertion 1 or assertion 2 implies assertion 8 stays an open problem.

To end this paper, let us mention another related problem.
We first need notation. 
For any finite or infinite word $w$, let $B(w) := \max\{ \pal{p} \mid p$ prefix of $w\}$. 
For any integer $k \geq 1$, let also $B(k) = \min\{ B(\bw) \mid \#{\rm alph}(\bw) = k, \bw \text{ infinite }\}$ where $\#$ denotes the cardinality of a set and ${\rm alph}(\bw)$ is the alphabet of $\bw$. The value $B(k)$ is the least value $B$ for which there exists an infinite word $\bw$ written using $k$ letters and whose palindromic lengths of prefixes are bounded by $B$.

Clearly $B(1) = 1$.
Let us show that $B(2) = 2$ and $B(3) = 3$.
Any infinite binary word begins with a word in the form $aa^ib$ ($i \geq 0$) 
whose palindromic length is $2$: $B(2) \geq 2$. 
As there exist binary words in $\BPLF(2)$, $B(2) = 2$.
Any infinite ternary word begins with a word in the form $ucc^ia$ 
for some different letters $a$, $b$ and $c$ with $i \geq 0$, $|u|_a \geq 1$, $|u|_b \geq 1$ and $|u|_c = 0$. Readers can verify that $\pal{uc} \geq 3$ if $u$ is not a palindrome and $\pal{ucc^ia} \geq 3$ if $u$ is a palindrome. Hence $B(3) \geq 3$. Word $(1213121)^\omega$ belong to $\BPLF(3)$ and so shows that $B(3) = 3$.

Observe also that for any integer $k \geq 1$, $B(k) \leq B(k+1)$. Indeed if $\bw$ is a word written using $k+1$ letters, if $a$ is a letter occurring in $a$, and if $\delta_a(\bw)$ is the word obtained from $\bw$ removing all its occurrences of $a$, then $B(\delta_a(\bw)) \leq B(\bw)$.

Next result shows that $B(2^k) \leq k+1$ and so $B(4) = 3$.

Let $inc$ be the morphism over $\xN^*$ defined by $inc(n) = (n+1)$ for all letters $n$ in $\xN$.
Let $u_1 = 1$, and let, for $n \geq 1$, $u_{n+1} = u_n inc^{2^{n-1}}(u_n) u_n$: $u_2 = 121$, $u_3 = 121343121$, $u_4 = 121343121 565787565 121343121$, \ldots It is straightforward that $u_n$ is written using $2^n$ letters and is of length $3^{n-1}$. Let also $v_n = inc^{2^{n-1}}(u_n)$. Note  that all words $u_n$ and $v_n$ are palindromes. 

\begin{lmm}
For $n \geq 1$, $B(u_{n}) = n$ and $B((u_nv_n)^\omega) = n+1$.
\end{lmm}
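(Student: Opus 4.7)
I would prove both equalities jointly by induction on $n$. The base case $n=1$ is immediate: $B(u_1)=1$, and every prefix of $(u_1v_1)^\omega=(12)^\omega$ has palindromic length at most $2$, with equality on $12$. Throughout the induction I use three easy structural facts that propagate directly from the recursion: (i) $u_n$ and $v_n$ are palindromes whose alphabets are disjoint, since $v_n=inc^{2^{n-1}}(u_n)$ shifts every letter past the range used by $u_n$; (ii) $v_n$ is a letter renaming of $u_n$, hence $B(v_n)=B(u_n)$ and palindromic lengths of corresponding prefixes agree; (iii) the word $(u_nv_n)^ku_n$ is a palindrome for every $k\geq 0$, since its block sequence is palindromic and each block is itself a palindrome. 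A key consequence of (i) is that no palindromic factor of $(u_nv_n)^\omega$ can cross a $u_n/v_n$ boundary (its first and last letters would lie in disjoint alphabets); hence whenever $w=xy$ and $x$ ends exactly at such a boundary, $\pal{xy}=\pal{x}+\pal{y}$.

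For the inductive step, I first establish $B(u_{n+1})=n+1$ from $B(u_n)=n$ by a case analysis on a prefix $p$ of $u_{n+1}=u_nv_nu_n$. If $|p|\leq|u_n|$, then $\pal{p}\leq n$ by induction. If $|u_n|<|p|\leq 2|u_n|$, write $p=u_np'$ with $p'$ a nonempty prefix of $v_n$; the disjoint-alphabet splitting gives $\pal{p}=1+\pal{p'}\leq n+1$. If $2|u_n|<|p|\leq 3|u_n|$, write $p=u_nv_np''$ with $p''$ a nonempty prefix of $u_n$; since $u_n$ is a palindrome, its suffix of length $|p''|$ is $\widetilde{p''}$, so $\pi:=\widetilde{p''}\,v_n\,p''$ is a palindromic suffix of $p$, and the uncovered prefix $\alpha$ is a prefix of $u_n$ of length $|u_n|-|p''|$ with $\pal{\alpha}\leq n$, giving $\pal{p}\leq n+1$. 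The matching lower bound is realized by $p=u_nq$, where $q$ is a prefix of $v_n$ with $\pal{q}=n$ (such $q$ exists because $B(v_n)=n$): disjoint alphabets yield $\pal{p}=1+n=n+1$.

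I then establish $B((u_nv_n)^\omega)=n+1$ using the two facts just proved. A prefix $p$ of $(u_nv_n)^\omega$ is of one of two forms: (A) $p=(u_nv_n)^ku_n\cdot q$ with $q$ a (possibly empty) prefix of $v_n$, or (B) $p=(u_nv_n)^k\cdot q$ with $q$ a nonempty prefix of $u_n$. In case (A), fact (iii) and disjoint alphabets give $\pal{p}=1+\pal{q}\leq n+1$. In case (B) with $k=0$, $p$ is a prefix of $u_n$ and $\pal{p}\leq n$. In case (B) with $k\geq 1$, I use the \emph{big palindrome} $\pi_k:=\widetilde{q}\cdot v_n(u_nv_n)^{k-1}\cdot q$, of length $2|q|+(2k-1)|u_n|$. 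It is a palindrome because its inner block sequence $v_nu_nv_n\cdots u_nv_n$ is symmetric and each block is a palindrome, framed by the mirror pair $(\widetilde{q},q)$. Counting back from the end of $p$ shows $\pi_k$ occurs exactly as the suffix of $p$ starting at position $|u_n|-|q|+1$, where the first $u_n$ block ends in $\widetilde{q}$; thus $p=\alpha\pi_k$ with $\alpha$ a prefix of $u_n$ of length $|u_n|-|q|$, and $\pal{p}\leq\pal{\alpha}+1\leq n+1$. The lower bound is again witnessed by $p=u_nq$ with $\pal{q}=n$.

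The main obstacle is case (B) with $k\geq 1$. The naive strategy --- peeling off the palindromic prefix $(u_nv_n)^{k-1}u_n$ and then treating $v_nq$ as a tail --- already costs $1+(1+\pal{q})$ palindromes, which may reach $n+2$, one too many. The crucial insight is to shift the cut one block to the left, reaching into the first $u_n$ block to expose a mirror copy of $q$; this uncovers the single symmetric palindrome $\pi_k$ that sweeps across \emph{all} intermediate $u_nv_n$ blocks, absorbing both $q$ and its mirror into one palindrome and leaving only a short prefix of $u_n$ to pay for. Once $\pi_k$ is identified, verifying that it is a palindrome and that it occurs exactly as a suffix of $p$ reduces to routine block-level bookkeeping.
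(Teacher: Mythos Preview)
Your argument is correct and follows the same route as the paper: the same induction, the same case split on where the prefix ends, the same big palindrome $\widetilde{q}\,v_n(u_nv_n)^{k-1}q$ in case~(B), and the same lower-bound witness $u_nq$ with $\pal{q}=n$. The only difference is organizational: the paper treats $(u_nv_n)^\omega$ first and then reads off $B(u_{n+1})$ from the fact that $u_{n+1}=u_nv_nu_n$ is a prefix of it, whereas you handle $u_{n+1}$ separately.

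One caution: your blanket claim that ``no palindromic factor of $(u_nv_n)^\omega$ can cross a $u_n/v_n$ boundary'' is false as stated --- the palindrome $u_nv_nu_n$ crosses two such boundaries, and your own $\pi_k$ crosses many --- so the derived identity $\pal{xy}=\pal{x}+\pal{y}$ fails in general (take $x=u_n$, $y=v_nu_n$). What is true is that no palindrome can cross \emph{exactly one} boundary, since then its first and last letters would lie in disjoint alphabets. Fortunately you only invoke the equality where this restricted version applies (for $p=u_nq$, a single boundary), and in case~(A) you only need the inequality $\pal{p}\le 1+\pal{q}$, which follows from the explicit decomposition and not from any no-crossing principle; so the proof stands once the claim is stated more carefully.
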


\begin{proof}
The proof acts by induction. The result is clearly true for $n = 1$.
Assume $B(u_n) = n$ for some integer $n \geq 1$. It follows that $B(v_n) = n$ also.
Let $p$ be a prefix of $(u_nv_n)^\omega$.
If $p$ is a prefix of $u_n$, $\pal{p} \leq n$ by hypothesis.
If $p  = (u_nv_n)^\ell p_1$ with $p_1$ a prefix of $u_n$ and $\ell \geq 1$, 
let $p_2$ be the word such that $u_n = p_2\widetilde{p_1}$: 
$p = p_2(\widetilde{p_1} v_n (u_nv_n)^{\ell-1} p_1)$. 
As $\widetilde{p_1} v_n (u_nv_n)^{\ell-1} p_1$ is a palindrome, $\pal{p} \leq \pal{p_2}+1 \leq n+1$.
Finally, if $p = (u_nv_n)^\ell u_n p_1$ with $p_1$ a prefix of $v_n$ and $\ell \geq 0$, as $(u_nv_n)^\ell u_n$ is a palindrome,
$\pal{p} \leq \pal{p_1}+1 \leq n+1$. Hence $B((u_nv_n)^\omega) \leq n+1$.

Now let $p$ be a prefix of $v_n$ such that $\pal{p} = n$. As $u_n$ is a palindrome, and as the letters in $u_n$ and in $p$ are different, $\pal{u_np} = 1 + \pal{p}$. Hence $B((u_nv_n)^\omega) = n+1$.

Now as $u_np$ is a prefix of $u_{n+1}$ which itself is a prefix of $(u_nv_n)^\omega$, $B(u_{n+1}) = n+1$.
\end{proof}

Now we are able to state our new problem: for $k \geq 1$ and for $i$ such that $2^{k-1} < i \leq 2^k$, is it true that $B(i) = k+1$? 


\bibliographystyle{plain}
\bibliography{palindromes}

\end{document}